\newcommand{\Oh}{O}
\newcommand{\OhOp}[1]{O\mathopen{}\mathclose\bgroup\left( #1 \aftergroup\egroup\right)}
\newcommand{\FPT}{{\sf FPT}\xspace}
\newcommand{\NP}{{\sf NP}\xspace}
\newcommand{\prob}[3]{
\begin{center}
\begin{tabularx}{\textwidth}{lX}
	\multicolumn{2}{l}{#1}\\
	{\bf Input:}&{#2}\\
	{\bf Question:}&{#3}
\end{tabularx}
\end{center}
}
\newcommand{\hy}{\hbox{-}\nobreak\hskip0pt}
\DeclareMathOperator{\argmin}{\operatorname{argmin}}
\def \td {{\rm td}}
\def\ve#1{\mathchoice{\mbox{\boldmath$\displaystyle\bf#1$}}
{\mbox{\boldmath$\textstyle\bf#1$}}
{\mbox{\boldmath$\scriptstyle\bf#1$}}
{\mbox{\boldmath$\scriptscriptstyle\bf#1$}}}
\newcommand\veb{{\ve b}}
\newcommand\ved{{\ve d}}
\newcommand\veg{{\ve g}}
\newcommand\veh{{\ve h}}
\newcommand\vel{{\ve l}}
\newcommand\ven{{\ve n}}
\newcommand\vep{{\ve p}}
\newcommand\veu{{\ve u}}
\newcommand\vew{{\ve w}}
\newcommand\vex{{\ve x}}
\newcommand\vey{{\ve y}}
\newcommand\vez{{\ve z}}
\def\R{\mathbb{R}}
\def\Z{\mathbb{Z}}
\def\N{\mathbb{N}}
\def\G{\mathcal{G}}
\newcommand\gc{\mathtt{g_1}}
\newtheorem*{rep@theorem}{\rep@title}
\newcommand{\newreptheorem}[2]{%
\newenvironment{rep#1}[1]{%
 \def\rep@title{#2 \ref{##1}}%
 \begin{rep@theorem}}%
 {\end{rep@theorem}}}
\title{Evaluating and Tuning $n$-fold Integer Programming}
\titlerunning{Evaluating and Tuning $n$-fold Integer Programming}
\author{Kateřina Altmanová}
{Department of Applied Mathematics, Charles University, Prague, Czech Republic}
{kacka@kam.mff.cuni.cz}
{}{Author was supported the project 17-09142S of GA~ČR.}
\author{Dušan Knop}
{
Algorithmics and Computational Complexity, Faculty~IV, TU Berlin
\and
Department of Theoretical Computer Science, Faculty of Information Technology,\\ Czech Technical University in Prague, Prague, Czech Republic
}
{dusan.knop@fit.cvut.cz}
{0000-0003-2588-5709}{Author supported by the project P202/12/G061 of GA~ČR.}
\author{Martin Koutecký}
{
Faculty of Industrial Engineering and Management, Technion -- Israel Institute of Technology\\{Haifa, Israel}
\and
Computer Science Institute of Charles University, Charles University, Prague, Czech Republic
}
{koutecky@technion.ac.il}
{0000-0002-7846-0053}{Author supported by a postdoctoral fellowship at the Technion funded by the Israel Science Foundation grant 308/18, by the project 17-09142S of GA~ČR, and by Charles University project UNCE/SCI/004.}
\authorrunning{K. Altmanová, D. Knop, and M. Koutecký} 
\keywords{
$n$-fold integer programming,
integer programming,
analysis of algorithms,
primal heuristic,
local search
}
\begin{document}

\maketitle

\begin{abstract}
  In recent years, algorithmic breakthroughs in stringology, computational social choice, scheduling, etc., were achieved by applying the theory of so-called $n$-fold integer programming.
  An $n$-fold integer program (IP) has a highly uniform block structured constraint matrix.
  Hemmecke, Onn, and Romanchuk [Math. Programming, 2013] showed an algorithm with runtime $\Delta^{\Oh(rst + r^2s)} n^3$, where $\Delta$ is the largest coefficient, $r,s$, and $t$ are dimensions of blocks of the constraint matrix and $n$ is the total dimension of the IP; thus, an algorithm efficient if the blocks are of small size and with small coefficients.
  The algorithm works by iteratively improving a feasible solution with augmenting steps, and $n$-fold IPs have the special property that augmenting steps are guaranteed to exist in a not-too-large neighborhood.
  However, this algorithm has never been implemented and evaluated.

  We have implemented the algorithm and learned the following along the way.
  The original algorithm is practically unusable, but we discover a series of improvements which make its evaluation possible.
  Crucially, we observe that a certain constant in the algorithm can be treated as a tuning parameter, which yields an efficient heuristic (essentially searching in a smaller-than-guaranteed neighborhood).
  Furthermore, the algorithm uses an overly expensive strategy to find a ``best'' step, while finding only an ``approximately best'' step is much cheaper, yet sufficient for quick convergence.
  Using this insight, we improve the asymptotic dependence on $n$ from $n^3$ to $n^2 \log n$.

  Finally, we tested the behavior of the algorithm with various values of the tuning parameter and different strategies of finding improving steps.
  First, we show that decreasing the tuning parameter initially leads to an increased number of iterations needed for convergence and eventually to getting stuck in local optima, as expected.
  However, surprisingly small values of the parameter already exhibit good behavior while significantly lowering the time the algorithm spends per single iteration.
  Second, our new strategy for finding ``approximately best'' steps wildly outperforms the original construction.
\end{abstract}

\section{Introduction}
\sloppy
In this article we consider the general integer linear programming (ILP) problem in standard form,
\begin{equation} \label{IP}
  \min\left\{\vew \vex \, \mid A\vex=\veb\,,\ \vel\leq\vex\leq\veu\,,\ \vex\in\Z^{n}\right\}. \tag{ILP}
\end{equation}
with $A$ an integer $m\times n$ matrix, $\veb\in\Z^m$, $\vew\in\Z^n$, $\vel,\veu\in(\Z\cup\{\pm\infty\})^n$.
It is well known to be strongly \NP-hard, but models many important problems in combinatorial optimization such as planning~\cite{planning}, scheduling~\cite{scheduling}, and transportation~\cite{transportation} and thus powerful generic solvers have been developed for it~\cite{Lodi:2010}.
Still, theory is motivated to search for tractable special cases.
One such special case is when the constraint matrix $A$ has a so-called $N$-fold structure:
\begin{align*}
A = E^{(N)} =
\left(
\begin{array}{cccc}
E_1    & E_1    & \cdots & E_1    \\
E_2    & 0      & \cdots & 0      \\
0      & E_2    & \cdots & 0      \\
\vdots & \vdots & \ddots & \vdots \\
0      & 0      & \cdots & E_2    \\
\end{array}
\right) \enspace .\label{nfold}
\end{align*}
Here, $r,s,t,N \in \N$, $\veu, \vel, \vew \in \Z^{Nt}$, $\veb \in \Z^{r+Ns}$,
$E^{(N)}$ is an $(r+Ns)\times Nt$-matrix, $E_1 \in \Z^{r \times t}$ is an $r\times t$-matrix and $E_2 \in \Z^{s \times t}$ is an $s\times t$-matrix.
We call $E^{(N)}$ the \emph{$N$-fold product of $E = \left(\begin{smallmatrix}E_1\\E_2\end{smallmatrix}\right)$} and denote by $L$ the length of the binary encoding of the instance $(A,\vew, \veb, \vel, \veu)$\footnote{For clarity of exposition we shall no longer consider infinite lower and upper bounds. We note that this is without loss of generality by standard arguments: an instance with some bounds $\pm\infty$ is either unbounded or one may, in polynomial time, replace $\vel, \veu$ with auxiliary bounds $\vel', \veu'$ which are of polynomial length and do not change the optimal value of the instance.}.
Problem~\eqref{IP} with $A = E^{(N)}$ is known as \emph{$N$\hy fold integer programming} ($N$-fold IP).
Hemmecke, Onn, and Romanchuk~\cite{HemmeckeOR13} prove the following.
\begin{proposition}[{\cite[Theorem 6.2]{HemmeckeOR13}}]\label{thm:nfold}
There is an algorithm that solves\footnote{Given an IP, to \emph{solve} it means to either (i) declare it infeasible or unbounded or (ii) find its minimizer.}~\eqref{IP} with $A=E^{(N)}$ encoded with $L$ bits in time \mbox{$\Delta^{O(trs + t^2s)}\cdot n^3L$}, where \mbox{$\Delta = 1 + \max\{\|E_1\|_\infty, \|E_2\|_\infty\}$}.
\end{proposition}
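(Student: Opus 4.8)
Since this is a known result, the plan is to reconstruct the \emph{augmentation} (``Graver-best step'') algorithm behind it. The central object is the Graver basis $\G(A)$ of an integer matrix $A$, the set of $\sqsubseteq$-minimal nonzero elements of $\Ker(A)\cap\Z^n$, where $\vey\sqsubseteq\vex$ means $\vey$ is \emph{conformal} to $\vex$ ($y_ix_i\ge 0$ and $|y_i|\le|x_i|$ for all $i$). I would build on two standard facts: (i) every $\vez\in\Ker(A)\cap\Z^n$ decomposes as a conformal sum $\vez=\sum_j\lambda_j\veg_j$ with $\veg_j\in\G(A)$, $\lambda_j\in\Z_{>0}$; and (ii) hence if $\vex$ is feasible but not optimal, some $\veg\in\G(A)$ is an \emph{improving direction} ($\vex+\veg$ feasible, $\vew\veg<0$). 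The algorithm keeps a feasible $\vex$ and repeatedly replaces it by $\vex+\lambda\veg$ for a \emph{Graver-best} step, a pair $(\veg,\lambda)\in\G(A)\times\Z_{>0}$ maximising $-\vew(\lambda\veg)$ subject to feasibility, halting when none exists. Convergence in $\poly(n,L)$ steps follows from a geometric-decrease argument: decomposing $\vex^\ast-\vex$ conformally into $\Oh(n)$ Graver elements shows some single Graver step improves the gap $\vew\vex-\vew\vex^\ast$ by a factor $1/\Oh(n)$, so the gap, initially at most $2^{\Oh(L)}$, drops below $1$ — hence to $0$ by integrality — within $\Oh(nL)$ steps.

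The heart of the proof, and the step I expect to be the main obstacle, is a bound on $\|\veg\|_\infty$ for $\veg\in\G(E^{(N)})$ that is \emph{independent of $N$}. Writing $\veg=(\veg^1,\dots,\veg^N)$ in bricks $\veg^i\in\Z^t$, feasibility forces $E_2\veg^i=0$ for every $i$ and $\sum_i E_1\veg^i=0$. I would argue in two layers. First, each brick lies in $\Ker(E_2)$, so $\|\veh\|_\infty\le\gc:=\Delta^{\Oh(s)}$ for $\veh\in\G(E_2)$ by the classical subdeterminant bound on Graver elements of an $s$-row matrix. Second, I introduce the \emph{Graver complexity} $g(E)$, the maximum $\ell_1$-norm of a Graver element of the $r$-row matrix $E_1G_2$, where the columns of $G_2$ are $\G(E_2)$; finiteness of Graver bases makes $g(E)$ finite and $N$-independent, and the subdeterminant bound applied to this $r$-row matrix gives $g(E)=\Delta^{\Oh(s(r+t))}$. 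The key lemma is that every $\veg\in\G(E^{(N)})$ uses at most $g(E)$ bricks from $\G(E_2)$ in its conformal decomposition; I would prove it by contradiction, extracting a strictly smaller conformal summand whenever more bricks occur (a long sequence of partial sums of the $E_1$-images must repeat a pattern realised by $\G(E_1G_2)$). Combining the two layers yields the uniform bound $\|\veg\|_\infty\le g(E)\cdot\gc=:g_\infty=\Delta^{\Oh(s(r+t))}$.

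With this bound in hand, each Graver-best step is found by solving the bounded auxiliary program $\min\{\vew\vez\mid E^{(N)}\vez=0,\ \vel\le\vex+\lambda\vez\le\veu,\ \|\vez\|_\infty\le g_\infty\}$ for the relevant step lengths $\lambda$. I would solve it by a dynamic program sweeping the $N$ bricks left to right, carrying as state the number of nonzero bricks used so far (capped at $g(E)$) together with the prefix sum $\sum_{i\le k}E_1\veg^i\in\Z^r$; since a Graver-best direction has at most $g(E)$ nonzero bricks, this prefix sum stays bounded by $\Delta^{\Oh(s(r+t))}$ and the state space is $N$-independent. Enumerating each brick over $\{-g_\infty,\dots,g_\infty\}^t\cap\Ker(E_2)$ costs $(2g_\infty+1)^t=\Delta^{\Oh(trs+t^2s)}$, which dominates, so one sweep runs in time $\Delta^{\Oh(trs+t^2s)}\cdot n$. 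An initial feasible point (or a certificate of infeasibility/unboundedness) comes from the same machinery applied to an auxiliary $n$-fold IP augmented with slack variables, which is feasible by construction. Multiplying the $\poly(n,L)$ iterations by the per-step cost — the latter possibly requiring several sweeps, one per candidate step length — yields the claimed $\Delta^{\Oh(trs+t^2s)}\cdot n^3L$; I would not optimise the polynomial factors in $n$ and $L$ at this stage, since improving exactly those factors is one aim of the present paper.
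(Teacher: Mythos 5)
This is a cited result (Hemmecke--Onn--Romanchuk, Theorem 6.2); the paper does not reprove it but summarizes exactly the algorithm you reconstruct: Graver-best augmentation converging in $\Oh(nL)$ iterations, the Graver-complexity bound $g(E)$ on the number of nonzero bricks of any element of $\G(E^{(N)})$ and their decomposition into at most $g(E)$ elements of $\G(E_2)$, and a dynamic program over prefix sums of bricks (the set $Z(E)$) combined with a set of candidate step lengths of size $\Oh(|Z(E)|\cdot N)$. Your reconstruction matches this in all essentials; the only point left implicit is how the candidate step lengths are enumerated --- namely that a Graver-best step pair must be \emph{tight} in at least one brick, which bounds the candidates by $|Z(E)|\cdot N$ and supplies the second factor of $N$ in the $n^3$.
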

Recently, algorithmic breakthroughs in stringology~\cite{KnopKM:2017esa}, computational social choice~\cite{KnopKM:2017stacs}, scheduling~\cite{ChenMarx:2018,JansenKMR:2018,KnopKoutecky:2017}, etc., were achieved by applying this algorithm and its subsequent non-trivial improvements.

The algorithm belongs to the larger family of augmentation (primal) algorithms.
It starts with an initial feasible solution $\vex_0 \in \Z^{Nt}$ and produces a sequence of increasingly better solutions $\vex_1, \dots, \vex_\sigma$ (better means $\vew \vex_\sigma < \vew \vex_{\sigma-1} < \cdots < \vew \vex_0$).
It is guaranteed that the algorithm terminates, that $\vex_\sigma$ is an optimal solution, and that the algorithm converges quickly, i.e., $\sigma$ is polynomial in the length of the input.
A key property of $N$-fold IPs is that, if an augmenting step exists, then it can be decomposed into a bounded number of elements of the so-called \emph{Graver basis} of $A$, which we denote $\G(A)$.
This in turn makes it possible to compute it using dynamic programming~\cite[Lemma 3.1]{HemmeckeOR13}.
In a sense, this property makes the algorithm a local search algorithm which is always guaranteed to find an improvement in a not-too-large neighborhood.
The bound on the number of elements or the size of the neighborhood which needs to be searched is called the \emph{Graver complexity of $E$}, denoted $g(E)$.
This, in turn, implies that, if an augmenting step exists, then there is always one with small $\ell_1$-norm; for a matrix $A$, we denote this bound $g_1(A) = \max_{\veg \in \G(A)} \|\veg\|_1$~\cite[Theorem 4]{PSP}.
However, the algorithm has never been implemented and evaluated.

\subsection{Our Contributions}
We have implemented the algorithm and tested it on two problems for which $N$-fold formulations were known: makespan minimization on uniformly related machines ($Q || C_{\max}$) and \textsc{Closest String}; we have used randomly generated instances.
The solver, tools, and e.g. many more plots can be accessed in a publicly accessible repository at \url{https://github.com/katealtmanova/nfoldexperiment}

In the course of implementing the algorithm we learn the following.
The algorithm in its initial form is practically unusable due to an \emph{a priori} construction of the Graver basis $\G(E_2)$ of size exponential in $s,t$ and $\Delta$, and a related (even larger) set $Z(E)$, whose size is exponential in $r,s,t$ and $\Delta$.
However, we discover a series of improvements (some building on recent insights~\cite{PSP}) which avoid the construction of these two sets.
Moreover, we adjust the algorithm to treat $g_1(A)$ as a tuning parameter $\gc$, which turns it into a heuristic (i.e., an optimal solution or polynomial runtime is not guaranteed; we shall discuss this topic in more detail later).

We also study the \emph{augmentation strategy}, which is the way the algorithm chooses an augmenting step among all the possible options.
The original algorithm uses an overly expensive strategy to find a ``best'' step,
which means that a large number of possible steps is evaluated in each iteration.
We show that finding only an ``approximately best'' step is sufficient to obtain asymptotically equivalent convergence rate, and the work per iteration decreases exponentially.
Using this insight, we improve the asymptotic dependence on $N$ from $N^3$ to $N^2 \log N$.
Together with recent improvements, this yields the currently asymptotically fastest algorithm for $N$-fold IP:
\begin{theorem}\label{thm:gammanfold}
Problem~\eqref{IP} with $A = E^{(N)}$ can be solved in time $\Delta^{r^2s + rs^2} (Nt)^2 \log (Nt) M$, where $M = \log(\vew\vex^* - \vew\vex_0)$ for some minimizer $\vex^*$ of $\vew \vex$.
\end{theorem}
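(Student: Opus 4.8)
The plan is to realize Theorem~\ref{thm:gammanfold} inside the augmentation (primal) framework, adding two new ingredients on top of Proposition~\ref{thm:nfold}: a cheap \emph{approximately best} step rule, and a two-level dynamic program that never materializes the sets $\G(E_2)$ and $Z(E)$. First I would fix the iteration bound. Starting from the feasible $\vex_0$ I maintain a feasible $\vex$ and, in each iteration, replace it by $\vex+\gamma\veg$ for a scalar $\gamma\in\N$ and a direction $\veg\in\Ker(E^{(N)})$. The engine is the conformal (Graver) decomposition: any $\vex^*-\vex$ is a conformal sum of at most $2Nt-2$ elements of $\G(E^{(N)})$, so at least one scaled Graver step improves the objective by a $\frac{1}{2Nt-2}$ fraction of the current gap $\vew\vex-\vew\vex^*$. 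Hence a sequence of steps, each attaining a constant fraction of the best possible improvement, shrinks the gap geometrically and reaches the optimum after $\Oh(Nt\cdot M)$ iterations, with $M=\log(\vew\vex_0-\vew\vex^*)$; crucially this bound only changes by a constant factor when ``best'' is replaced by ``within a factor $2$ of best'', which is exactly what licenses the cheaper rule below.

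Second, I would make the per-iteration step selection cheap. Since $\vew\vex$ is linear, the improvement along a fixed direction $\veg$ is proportional to the step length, so the only nonlinearity is the dependence of the maximal feasible length on $\veg$. I sidestep this by trying only the geometric scale of step lengths $\gamma\in\{2^0,2^1,2^2,\dots\}$ up to the largest feasible one: for each such $\gamma$ I compute a direction $\veg_\gamma$ minimizing $\vew\veg$ over all $\veg\in\Ker(E^{(N)})$ with $\|\veg\|_\infty\le\gc$ for which $\vex+\gamma\veg$ is feasible, and keep the best pair $(\gamma,\veg_\gamma)$. A standard rounding argument (if the true best step uses length $\lambda^*$, then the scale $2^{\lfloor\log_2\lambda^*\rfloor}$ is feasible and its minimizing direction is at least as cheap) shows the returned step is within a factor $2$ of the best, so the iteration bound above is preserved. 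This examines only $\Oh(\log(Nt))$ scales in place of the $\Theta(Nt)$ candidate lengths an exact rule would sweep, which is precisely the source of the improved $(Nt)^2\log(Nt)$ dependence.

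Third -- and this is the technical heart -- I would compute each $\veg_\gamma$ by a two-level dynamic program over the block structure. The outer DP sweeps the $N$ bricks left to right and carries as its state the prefix sum $\sum_{j\le i}E_1\veg^{(j)}\in\Z^r$; because $\|\veg\|_1\le\gc$ forces $\|\sum_{j\le i}E_1\veg^{(j)}\|_\infty\le\Delta\gc$, there are at most $(2\Delta\gc+1)^r$ states, and we require the terminal state $\ve 0$. Each brick transition -- choosing the cheapest local increment $\veg^{(i)}\in\Ker(E_2)$ with $\|\veg^{(i)}\|_\infty\le\gc$ and $\vex^{(i)}+\gamma\veg^{(i)}$ feasible that realizes a prescribed value of $E_1\veg^{(i)}$ -- is itself computed by an inner DP over the $s$ rows of $E_2$, tracking the partial $E_2$-image in $\Z^s$, again bounded by $\Delta\gc$ in $\ell_\infty$. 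Using the recent bound $\gc=g_1(E^{(N)})=\Delta^{\Oh(rs)}$ of~\cite{PSP} gives $\Delta^{\Oh(r^2s)}$ outer states and $\Delta^{\Oh(rs^2)}$ inner states, so a brick costs $\Delta^{\Oh(r^2s+rs^2)}$ and a full sweep costs $\Delta^{\Oh(r^2s+rs^2)}Nt$. This is the promised ``exponential'' saving per iteration, since building $\G(E_2)$ and $Z(E)$ explicitly would instead cost $\Delta^{\Theta(t^2s)}$.

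Finally I would assemble the bound: $\Oh(Nt\cdot M)$ iterations, each running the two-level DP over $\Oh(\log(Nt))$ scales at cost $\Delta^{\Oh(r^2s+rs^2)}Nt$ per scale, for a total of $\Delta^{\Oh(r^2s+rs^2)}(Nt)^2\log(Nt)\,M$. I expect the main obstacle to be the correctness and state-space analysis of the two-level DP: in particular, arguing that restricting local increments to $\ell_\infty$-norm at most $\gc$ discards no optimal global direction, that the inner DP indeed returns, for \emph{every} required value of $E_1\veg^{(i)}$, the cheapest compatible local step, and that the outer prefix-sum states stay within the claimed range; coupled with verifying that the factor-$2$ step rule degrades the geometric convergence only by a constant, so that the iteration count remains $\Oh(Nt\cdot M)$.
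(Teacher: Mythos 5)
Your overall architecture --- approximate Graver-best steps, a powers-of-two grid of step lengths with a rounding argument, and a per-$\lambda$ search for the cheapest bounded-norm kernel direction --- is exactly the paper's route (Lemmas~\ref{lem:approxgraverbest} and~\ref{lem:log_gamma}); the one structural difference is that you sketch the $\lambda$-Graver-best subroutine as an explicit two-level dynamic program, whereas the paper treats it as a black box via \cite[Theorem~2]{PSP}, which already gives the $\Delta^{\Oh(r^2s+rs^2)}Nt$ cost per call. There is, however, one genuine gap in your assembly: you assert that only $\Oh(\log (Nt))$ step-length scales need to be examined, but your grid runs up to the largest feasible step length, which is bounded only by $\|\veu-\vel\|_\infty$ and can therefore be exponential in $Nt$ (its logarithm is of order $L$, the encoding length, not $\log(Nt)$). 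The paper closes this by first invoking the proximity theorem of Hemmecke et al.~\cite{HemmeckeKW:2014} to pass to an equivalent instance with $\|\veu'-\vel'\|_\infty\le Nt\,g_\infty\le Nt(\Delta rs)^{\Oh(rs)}$, which is what yields $|\Gamma_{\text{2-apx}}|\le\Oh(rs)\log(\Delta Ntrs)$. Without that reduction your argument only delivers $\Delta^{\Oh(r^2s+rs^2)}(Nt)^2\,M\,\log\|\veu-\vel\|_\infty$, which is weaker than the stated bound.

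Two smaller repairs are needed if you insist on proving the subroutine rather than citing it. First, the norm constraint must be $\|\veg\|_1\le\gc$ throughout: you alternate between $\ell_\infty$ and $\ell_1$, and with only $\|\veg\|_\infty\le\gc$ the prefix sums $\sum_{j\le i}E_1\veg^{(j)}$ are bounded merely by $\Delta\gc Nt$ in $\ell_\infty$, destroying the $\Delta^{\Oh(r^2s)}$ state-space bound. Second, the inner recursion should run over the $t$ coordinates of a brick (not ``over the $s$ rows of $E_2$'', which does not define a DP) and must track the partial $E_1$-image of the brick and the remaining $\ell_1$ budget in addition to the partial $E_2$-image, so that it can return the cheapest local increment for \emph{every} prescribed value of $E_1\veg^{(i)}$. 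With those fixes the per-call cost matches the cited $\Delta^{\Oh(r^2s+rs^2)}Nt$, and the rest of your argument coincides with the paper's.
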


Finally, we evaluate the behavior of the algorithm.
We ask how is the performance of the algorithm (in terms of number of dynamic programming calls and quality of the returned solution) influenced by
\begin{enumerate}
\item the choice of the tuning parameter $1 < \gc \leq g_1(A)$?
\item the choice of the augmentation strategy between ``best step'', ``approximate best step'', and ``any step''?
\end{enumerate}
As expected, with $\gc$ moving from $g_1(A)$ to $1$, we first see an increase in the number of iterations needed for convergence and eventually the algorithm gets stuck in a local optima.
However, surprisingly small values (e.g. $\gc=50$ when $g_1(A) > 10^{11}$) of the parameter already exhibit close to optimal behavior while significantly decreasing the time spend per iteration.
Second, our new strategy for finding ``approximately best'' steps outperforms the original construction by orders of magnitude, while the naive ``any step'' strategy behaves erratically.

We note that at this stage we are \emph{not} (yet) interested in showing supremacy over existing algorithms; we simply want to understand the practical behavior of an algorithm whose theoretical importance was recently highlighted.
For this reason our experimental focus is on the two aforementioned questions rather than simply measuring the time.
Unfortunately, our data does not indicate any slowdown of a commercial MILP solver based on the number of bricks, which is required to give the algorithm of Theorem~\ref{thm:gammanfold} a chance to beat it.

Due to the rigid format of $E^{(N)}$ we are limited to few problems for which $N$-fold formulations are known.
Regarding instances, for \textsc{Closest String} we use the same approach as Chimani et al.~\cite{ClosestString}; for \textsc{Makespan Minimization} we generate our own data because standard benchmarks are not limited to short jobs or few types of jobs.

\subsection{Related Work}
Our work mainly relates to \emph{primal heuristics}~\cite{primalheu} for MIPs which are used to help reach optimality faster and provide good feasible solutions early in the termination process.
Specifically, our algorithm is a \emph{neighborhood} (or \emph{local}) \emph{search algorithm}.
The standard paradigm is \emph{Large Neighborhood Search} (LNS)~\cite{lns} with specializations such as for example \emph{Relaxation Induced Neighborhood Search} (RINS)~\cite{rins} and \emph{Feasibility Pump}~\cite{feasibilitypump}.
In terms of this paradigm, our proposed algorithm searches in the neighborhood induced by the $\ell_1$-distance around the current feasible solution and the search procedure is formulated as an ILP subproblem with the additional constraint $\|\vex\|_1 \leq \gc$.
In this sense the closest technique to ours is \emph{local branching}~\cite{localbranching} which also searches in the $\ell_1$-neighborhood; however, we treat the discovered step as a \emph{direction} and apply it exhaustively, so, unlike in local branching, we make long steps.
Moreover, local branching was mainly applied to binary ILPs without any additional structure of the constraint matrix.

On the theoretical side, very recently Koutecký et al.~\cite{PSP} have studied parameterized strongly polynomial algorithms for various block-structured ILPs, not just $N$-fold IP.
Eisenbrand et al.~\cite{EisenbrandHK:2018} independently (and using slightly different techniques) arrive at the same complexity of $N$-fold IP  as our Theorem~\ref{thm:gammanfold}.
Jansen et al.~\cite{JansenLR:2018} have shown a near-linear time algorithm for $N$-fold IP with linear objectives.
Their approach is relevant to implementations of an \FPT algorithm for $N$-fold IP, however due to our approach of using existing ILP solvers as a subroutine we do not exploit it.


\section{Preliminaries}
\label{sec:preliminaries}
For positive integers $m,n$ we set $[m,n] = \{m,\ldots, n\}$ and $[n] = [1,n]$.
We write vectors in boldface (e.g., $\vex, \vey$) and their entries in normal font (e.g., the $i$-th entry of~$\vex$ is~$x_i$).
Given the problem~\eqref{IP}, we say that $\vex$ is \emph{feasible} for~\eqref{IP} if $A\vex = \veb$ and $\vel \leq \vex \leq \veu$.

\subsection{Graver bases and augmentation.}
Let us now introduce Graver bases and discuss how they can be used for optimization.
We also recall $N$-fold IPs; for background, we refer to the books of Onn~\cite{Onn2010} and De Loera et al.~\cite{DeLoeraEtAl2013}.

\subparagraph*{$N$-fold IP}
The structure of $E^{(N)}$ allows us to divide the $Nt$ variables of $\vex$ into $N$ \textit{bricks} of size~$t$.
We use subscripts to index within a brick and superscripts to denote the index of the brick, i.e.,~$x_j^i$ is the $j$-th variable of the $i$-th brick with $j \in [t]$ and $i \in [N]$.

Let $\vex, \vey$ be $n$-dimensional integer vectors.
We call $\vex, \vey$ \emph{sign\hy{}compatible} if they lie in the same orthant, that is, if for each $i \in [n]$ it holds that $x_i \cdot y_i \geq 0$.
We call $\sum_i \veg^i$ a \emph{sign\hy{}compatible sum} if all $\veg^i$ are pair-wise sign\hy{}compatible.
Moreover, we write $\vey \sqsubseteq \vex$ if $\vex$ and $\vey$ are sign\hy{}compatible and $|y_i| \leq |x_i|$ for each $i \in [n]$.
Clearly, $\sqsubseteq$ imposes a partial order called ``conformal order'' on $n$-dimensional vectors.
For an integer matrix $A \in \Z^{m \times n}$, its \emph{Graver basis} $\G(A)$ is the set of $\sqsubseteq$-minimal non-zero elements of the \emph{lattice} of $A$, $\ker_{\Z}(A) = \{\vez \in \Z^n \mid A \vez = \mathbf{0}\}$.
An important property of $\G(A)$ is the following.
\begin{proposition}[{\cite[Lemma 3.4]{Onn2010}}]
  \label{prop:graver_conformal_sum}
  Every integer vector $\vex \neq \mathbf{0}$ with $A \vex = \mathbf{0}$ is a sign\hy{}compatible sum $\vex = \sum_{i=1}^{n'} \alpha_i \veg^i$, $\alpha_i \in \N$, $\veg^i \in \G(A)$ and $n' \leq 2n-2$.
\end{proposition}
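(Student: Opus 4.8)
The plan is to prove the statement in two stages: first establish that a sign\hy{}compatible decomposition into Graver basis elements exists at all, and then bound the number of \emph{distinct} elements by $2n-2$. Throughout I would reduce to a single orthant. Since the summands must be pairwise sign\hy{}compatible and sum to $\vex$, on every coordinate $j$ with $x_j = 0$ all summands vanish, and on the remaining coordinates I may flip signs so that $\vex \ge \mathbf 0$. Hence I work with $\vex \in \Z^n_{\ge 0}$ supported on a set $J = \suppo(\vex)$ of size $d \le n$, and every admissible summand is a nonnegative element of $\G(A)$ supported within $J$.

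For existence, I would induct on $\|\vex\|_1$. By definition $\G(A)$ is the set of $\sqsubseteq$\hy{}minimal nonzero elements of $\ker_\Z(A)$, so as long as $\vex \neq \mathbf 0$ there is some $\veg \in \G(A)$ with $\veg \sqsubseteq \vex$ (any minimal element below $\vex$ in the conformal order). Then $\vex - \veg$ again lies in $\ker_\Z(A)$, satisfies $\vex - \veg \sqsubseteq \vex$, and has strictly smaller $\ell_1$\hy{}norm because $\veg \neq \mathbf 0$. Applying the induction hypothesis to $\vex - \veg$ and adding back $\veg$ yields a sign\hy{}compatible sum; collecting equal terms gives the form $\sum_i \alpha_i \veg^i$ with $\alpha_i \in \N$ and the $\veg^i$ distinct. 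This already shows a decomposition exists, but only bounds $\sum_i \alpha_i$ by $\|\vex\|_1$, not the number of distinct terms.

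The real content is the bound $n' \le 2n-2$, and here I would argue by a Carathéodory\hy{}type exchange. Among all sign\hy{}compatible decompositions $\vex = \sum_{i=1}^{n'} \alpha_i \veg^i$ into distinct Graver elements, I would pick one minimizing an appropriate potential, e.g.\ the total multiplicity $\sum_i \alpha_i$. Suppose for contradiction $n' \ge 2d-1$. Viewing the $\veg^i$ as vectors in $\R^J \cong \R^d$, the cone version of Carathéodory's theorem shows that over the reals only $d$ of them are needed to write $\vex$ as a nonnegative combination; the difficulty — and the reason the target is $2d-2$ rather than the $d$ coming from the real cone bound — is that such a real reduction destroys integrality. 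I would therefore extract an \emph{integral} linear dependence $\sum_i \mu_i \veg^i = \mathbf 0$ with $\mu \in \Z^{n'}\setminus\{\mathbf 0\}$ (possible since $n' > d$), split it into its positive and negative parts, and move along $\alpha \mapsto \alpha - t\mu$ while staying in $\ker_\Z(A)$ and in the nonnegative orthant, the aim being to drive some coordinate of $\alpha$ to zero and so decrease $n'$.

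The main obstacle is precisely this last point: an integral step $t$ that zeroes one coefficient need not keep the others integral, while the smallest step zeroing a coefficient is generally fractional. Controlling this is where the factor of two enters — one must show that, at a decomposition minimizing the chosen potential, a suitable integral exchange always exists unless $n'$ is already at most $2d-2$, using the conformal\hy{}minimality of the $\veg^i$ to control the fractional residues. Carrying out this integral exchange carefully and observing $d \le n$ gives $n' \le 2d-2 \le 2n-2$, completing the proof. I expect the bookkeeping of the integral exchange to be the only genuinely delicate part; the existence half and the orthant reduction are routine.
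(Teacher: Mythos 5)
The paper offers no proof of this proposition; it is imported verbatim from Onn's monograph (Lemma 3.4 there), where it rests on Seb\H{o}'s integer Carath\'eodory theorem for Hilbert bases of pointed cones. So your attempt has to be judged against that standard argument. The first half of your proposal is fine: the orthant reduction and the existence of \emph{some} sign-compatible decomposition by induction on $\|\vex\|_1$, peeling off $\sqsubseteq$-minimal elements of $\ker_{\Z}(A)$, is exactly the routine part (Onn's Lemma 3.2), and your observation that this only bounds $\sum_i \alpha_i$, not $n'$, is the right diagnosis.

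The second half, however, contains a genuine gap at precisely the point where the theorem lives. You propose to take a decomposition minimizing $\sum_i \alpha_i$, extract an integral dependence $\sum_i \mu_i \veg^i = \mathbf{0}$, and move along $\alpha \mapsto \alpha - t\mu$; but as you yourself note, the largest admissible integral $t$ generally zeroes no coordinate of $\alpha$, and moreover such a move need not decrease your potential at all (it changes $\sum_i\alpha_i$ by $-t\sum_i\mu_i$, which can have either sign, and $t$ may be forced to $0$). The sentence ``one must show that \ldots a suitable integral exchange always exists unless $n'$ is already at most $2d-2$'' is not bookkeeping --- it \emph{is} the theorem, and no mechanism in your sketch produces it. The standard route is the opposite of yours: one fixes a decomposition \emph{maximizing} $\sum_i\alpha_i$ (which is bounded above since every $\veg^i\sqsubseteq\vex$ forces $\sum_i\alpha_i\le\|\vex\|_1$), applies Carath\'eodory over the reals to write $\vex=\sum_{i\in S}\mu_i\veg^i$ with $|S|\le\dim(\ker(A))\le n-1$, splits off $\sum_{i\in S}\lfloor\mu_i\rfloor\veg^i$, and then uses maximality to control how many distinct Graver elements the integral fractional remainder $\sum_{i\in S}(\mu_i-\lfloor\mu_i\rfloor)\veg^i$ can require; the passage from the easy $2(n-1)+1$ count to $2(n-1)$ is an additional trick due to Seb\H{o}. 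Also note that the relevant Carath\'eodory dimension is that of $\ker(A)$ restricted to the orthant, not $|\suppo(\vex)|$; to land on $2n-2$ rather than $2n$ you need $\dim\ker(A)\le n-1$ (the rank-zero case being handled separately). As written, your argument establishes existence but not the quantitative bound that the paper actually uses.
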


Let $\vex$ be a feasible solution to~\eqref{IP}.
We call $\veg$ an \emph{$\vex$-feasible step} (or simply \emph{feasible step} if $\vex$ is clear) if $\vex + \veg$ is feasible for~\eqref{IP}.
Further, we call a feasible step $\veg$ \emph{augmenting} if $\vew(\vex+\veg) < \vew\vex$; note that $\veg$ decreases the objective by $\vew \veg$.
An augmenting step $\veg$ and a \emph{step length} $\lambda \in \N$ form an \emph{$\vex$\hy{}feasible step pair} with respect to a feasible solution $\vex$ if $\vel \le \vex + \lambda\veg \le \veu$.
A pair $(\lambda, \veg) \in \left(\N \times \ker_{\Z}(A)\right)$ is a \emph{$\lambda$-Graver-best step pair} and $\lambda \veg$ is a \emph{$\lambda$-Graver-best step} if it is feasible and for every feasible step pair $(\lambda, \veg')$, $\veg' \in \G(A)$, we have $\vew \veg \leq \vew \veg'$.
An augmenting step $\veg$ and a step length $\lambda \in \N$ form a \emph{Graver\hy best step pair} if it is $\lambda$-Graver-best and it minimizes $\vew \lambda' \veg'$ over all $\lambda' \in \N$, where $(\lambda', \veg')$ is a $\lambda'$-Graver-best step pair.
We say that $\lambda \veg$ is a \emph{Graver-best step} if $(\lambda, \veg)$ is a Graver-best step pair.

The \emph{Graver\hy{}best augmentation procedure} for~\eqref{IP} with a given feasible solution $\vex_0$ and initial value $i=0$ works as follows:
\begin{enumerate}
  \item \label{step1}If there is no Graver\hy{}best step for $\vex_i$, return it as optimal.
  \item If a Graver\hy{}best step $\lambda \veg$ for $\vex_i$ exists, set $\vex_{i+1} := \vex_i + \lambda \veg$, $i:=i+1$, and go to \ref{step1}.
\end{enumerate}
\begin{proposition}[Convergence bound~{\cite[Lemma 3.10]{Onn2010}}]
\label{prop:graverbest}
  Given a feasible solution~$\vex_0$ for~\eqref{IP}, the Graver\hy{}best augmentation procedure finds an optimum in at most $3n \log M$ steps, where $M = \vew(\vex_0 - \vex^*)$ and $\vex^*$ is any minimizer of $\vew\vex$.
\end{proposition}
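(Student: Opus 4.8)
The plan is to realise Theorem~\ref{thm:gammanfold} as an instance of the Graver\hy best augmentation procedure and then bound separately the number of iterations and the cost of one iteration. Writing $n=Nt$ for the total dimension, Proposition~\ref{prop:graverbest} already bounds the number of iterations of the exact procedure by $3n\log(\vew\vex_0-\vew\vex^*)=\Oh(Nt\cdot M)$. The first thing I would make precise is that we may replace each Graver\hy best step by a \emph{$2$\hy approximate} one — a feasible step pair $(\lambda,\veg)$ whose improvement $-\lambda\vew\veg$ is at least half that of a true Graver\hy best step — without damaging this bound. This is because the estimate behind Proposition~\ref{prop:graverbest} only uses that the chosen step captures a constant fraction of the gap $\vew\vex_i-\vew\vex^*$: decomposing $\vex^*-\vex_i$ by Proposition~\ref{prop:graver_conformal_sum} into at most $2n-2$ sign\hy compatible scaled Graver elements, some single such element already realises a $\tfrac{1}{2n}$ fraction of the gap, and a Graver\hy best step beats it. Halving the captured fraction only changes the base of the geometric decrease, so the iteration count stays $\Oh(Nt\cdot M)$; I would state this as a short lemma and reuse the telescoping argument of Proposition~\ref{prop:graverbest} verbatim.

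The core of the argument is then the per\hy iteration cost: given the current feasible $\vex$ and a fixed step length $\lambda$, find a direction $\veg\in\ker_{\Z}(E^{(N)})$ minimising $\vew\veg$ subject to $\vel\le\vex+\lambda\veg\le\veu$. I would solve this by the standard layered dynamic program over the $N$ bricks, whose states at layer $j$ are the prefix sums $\ves_j=\sum_{i\le j}E_1\veg^i\in\Z^r$ of the linking constraints; a valid direction corresponds to a path from $\ves_0=\mathbf 0$ to $\ves_N=\mathbf 0$, and the within\hy brick transitions (choosing $\veg^j$ with $E_2\veg^j=\mathbf 0$ and prescribed $E_1\veg^j$, respecting the box $\vel^j-\vex^j\le\lambda\veg^j\le\veu^j-\vex^j$) are themselves computed by a small dynamic program over the $t$ coordinates of the brick. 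The key quantitative input is that it suffices to keep only states with $\|\ves_j\|_\infty$ below a bound determined by the $N$\hy fold Graver basis, so that the number of surviving states is at most $\Delta^{\Oh(r^2s+rs^2)}$; this I would import from the Graver\hy complexity and $g_1$ bounds of Koutecký et al.~\cite{PSP} (Theorem~4 and the structural lemmas behind it). With this state bound the whole dynamic program runs in time $\Delta^{\Oh(r^2s+rs^2)}\cdot Nt$ for one step length.

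To assemble the $2$\hy approximate step and recover the missing $\log(Nt)$ factor, I would run the above dynamic program only for step lengths that are powers of two, $\lambda\in\{2^0,2^1,\dots\}$, and return the best pair found. A largest power of two not exceeding the optimal step length $\lambda^*$ captures at least half of the optimal improvement, so the output is $2$\hy approximate as required by the first paragraph; this is exactly the ``approximately best'' strategy advertised in the introduction, replacing the exhaustive step\hy length search of Hemmecke, Onn, and Romanchuk~\cite{HemmeckeOR13}. It then remains to bound the number of powers of two that must be tried by $\Oh(\log(Nt))$. I would obtain this by bounding the largest relevant step length via a proximity argument (in the spirit of~\cite{EisenbrandHK:2018}), so that the range is $\poly(Nt)\cdot\Delta^{\Oh(r^2s+rs^2)}$ and its logarithm is $\Oh(\log(Nt))+\Oh((r^2s+rs^2)\log\Delta)$, the second term being absorbed into the factor $\Delta^{\Oh(r^2s+rs^2)}$. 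Multiplying the $\Oh(Nt\cdot M)$ iterations by the per\hy iteration cost $\Delta^{\Oh(r^2s+rs^2)}\cdot Nt\log(Nt)$ yields the claimed $\Delta^{\Oh(r^2s+rs^2)}(Nt)^2\log(Nt)\,M$.

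I expect two places to be the real obstacles. The first is the state\hy space bound $\Delta^{\Oh(r^2s+rs^2)}$: one must argue that the prefix sums $\ves_j$ which any $2$\hy approximate augmenting direction needs live in a box of this size, which is precisely where the fine structure of $N$\hy fold Graver bases from~\cite{PSP} enters, and getting the exponent down to $r^2s+rs^2$ (rather than the $t$\hy dependent exponent of Proposition~\ref{thm:nfold}) is the crux. The second is the honest bookkeeping for the $\log(Nt)$ factor: one must verify both that approximate steps genuinely preserve the convergence estimate and that the number of step\hy length scales can be kept logarithmic in $Nt$ rather than in the encoding length $L$. I would isolate these as two small lemmas so that the final multiplication becomes routine.
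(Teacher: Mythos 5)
You have proved the wrong statement. The task was Proposition~\ref{prop:graverbest} itself --- the $3n\log M$ bound on the number of iterations of the exact Graver-best augmentation procedure --- yet your proposal opens by declaring that ``Proposition~\ref{prop:graverbest} already bounds the number of iterations of the exact procedure'' and then devotes all remaining paragraphs to the brick-wise dynamic program, the powers-of-two step lengths, and the proximity argument, i.e., to a proof of Theorem~\ref{thm:gammanfold}. As a proof of the proposition this is circular: you cannot invoke the very convergence bound you are asked to establish. Moreover, all of the per-iteration machinery (the state-space bound $\Delta^{\Oh(r^2s+rs^2)}$, the set $\Gamma_{\text{2-apx}}$, the proximity theorem) is irrelevant here: the proposition concerns an arbitrary instance of~\eqref{IP} and says nothing about $N$-fold structure or running time.

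The one place where you touch the actual argument is the parenthetical remark in your first paragraph: decompose $\vex^*-\vex_i$ via Proposition~\ref{prop:graver_conformal_sum} into at most $2n-2$ sign-compatible scaled Graver elements, argue by averaging that one of them captures at least a $\frac{1}{2n-2}$ fraction of the current gap, note that a Graver-best step is at least as good, and conclude by geometric decrease of the gap. That is precisely the paper's proof (repeated inside the proof of Lemma~\ref{lem:approxgraverbest}, following Onn). Had you promoted this remark to the main argument --- adding the missing observation that sign-compatibility makes each summand $\alpha_j\veg^j$ an $\vex_i$-feasible step (it moves $\vex_i$ towards $\vex^*$ coordinatewise, hence stays within $[\vel,\veu]$), and that once the integral gap drops below $1$ the current solution is optimal, so $3n\log M$ iterations suffice --- you would have a correct and complete proof. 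As written, however, the proposal assumes the proposition and proves a different theorem.
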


By standard techniques (detecting unboundedness etc.) we can ensure that $\log M \leq L$.

\section{Approximate Graver-best Steps}
In this section we introduce the notion of a $c$-approximate Graver\hy{}best step (Definition~\ref{def:apx_gb_step}), show that such steps exhibit good convergence (Lemma~\ref{lem:approxgraverbest}), can be easily obtained (Lemma~\ref{lem:log_gamma}), and result in a significant speed-up of the $N$-fold IP algorithm (Theorem~\ref{thm:gammanfold}).

\begin{definition}[$c$-approximate Graver-best step] \label{def:apx_gb_step}
Let $c \in \R$ with $c \geq 1$.
Given an instance of~\eqref{IP} and a feasible solution $\vex$, we say that an $x$-feasible step $\veh$ is a \emph{$c$\hy{}approximate Graver-best step for $\vex$} if, for every $\vex$-feasible step pair $(\lambda, \veg) \in \left(\N \times \G(A)\right)$, we have $\vew \veh \leq \frac{1}{c} \cdot \lambda \vew  \veg $.
\end{definition}

Recall the Graver\hy{}best augmentation procedure.
We call its analogue where we replace a Graver\hy{}best step with a $c$\hy{}approximate Graver\hy{}best step the \emph{$c$-approximate Graver\hy{}best augmentation procedure}.

\begin{lemma}[$c$-approximate convergence bound]
\label{lem:approxgraverbest}
Given a feasible solution~$\vex_0$ for~\eqref{IP}, the $c$\hy{}approximate Graver\hy{}best augmentation procedure finds an optimum of~\eqref{IP} in at most $c \cdot 3n \log M$ steps, where $M = \vew(\vex_0 - \vex^*)$ and $\vex^*$ is any minimizer of\/ $\vew \vex$.
\end{lemma}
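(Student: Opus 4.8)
The plan is to obtain this as a scaled version of the convergence proof behind Proposition~\ref{prop:graverbest}. The whole point is that a $c$-approximate Graver-best step captures at least a $1/c$ fraction of the objective decrease achieved by a genuine Graver-best step, so the geometric rate at which the optimality gap shrinks per iteration is merely divided by $c$, and correspondingly the number of iterations is multiplied by $c$.

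First I would recall the progress estimate implicit in Proposition~\ref{prop:graverbest}. Fix a non-optimal iterate $\vex_i$ and let $g_i := \vew\vex_i - \vew\vex^* > 0$ be the current gap. Since $A(\vex^* - \vex_i) = \mathbf{0}$, Proposition~\ref{prop:graver_conformal_sum} lets me write $\vex^* - \vex_i = \sum_{j=1}^{n'} \alpha_j \veg^j$ as a sign\hy{}compatible sum with $\veg^j \in \G(A)$, $\alpha_j \in \N$, and $n' \leq 2n-2$. Sign\hy{}compatibility gives $\alpha_j \veg^j \sqsubseteq \vex^* - \vex_i$ for each $j$, so $\vex_i + \alpha_j\veg^j$ lies conformally between the two feasible points $\vex_i$ and $\vex^*$ and is therefore feasible; hence each $(\alpha_j, \veg^j)$ is an $\vex_i$-feasible step pair. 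Applying $\vew$ to the decomposition yields $\sum_j \alpha_j(-\vew\veg^j) = g_i$, so by averaging some index $j^*$ satisfies $\alpha_{j^*}(-\vew\veg^{j^*}) \geq g_i/(2n-2)$.

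Next I would compare the three step notions. By definition the Graver-best pair $(\lambda^*, \veg^*)$ minimizes $\vew(\lambda\veg)$ over feasible pairs $(\lambda,\veg)$ with $\veg \in \G(A)$, so $\vew(\lambda^*\veg^*) \le \vew(\alpha_{j^*}\veg^{j^*})$ and thus the Graver-best improvement is at least $g_i/(2n-2)$. A $c$-approximate Graver-best step $\veh$ satisfies $\vew\veh \le \tfrac1c\,\lambda\vew\veg$ for every feasible pair, and specializing to the Graver-best pair gives $\vew\veh \le \tfrac1c\,\vew(\lambda^*\veg^*) < 0$; this already shows that $\veh$ is augmenting whenever an augmenting step exists. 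Taking negatives, the improvement of $\veh$ is $-\vew\veh \ge \tfrac1c\,(-\vew(\lambda^*\veg^*)) \ge g_i/\bigl(c(2n-2)\bigr)$. Consequently $g_{i+1} \le \bigl(1 - \tfrac{1}{c(2n-2)}\bigr)g_i$, and iterating from $g_0 = M$ drives the gap below $1$ after at most $c\cdot 3n\log M$ steps; since all objective values are integral, a gap strictly below $1$ is a gap of $0$, i.e.\ optimality.

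The only delicate point is the bookkeeping of the constant: I must check that the same slack which turns $2(n-1)$ and the logarithm base into the stated $3n\log M$ in Proposition~\ref{prop:graverbest} still suffices after inserting the factor $1/c$ into the per-step contraction — which it does, since that factor multiplies the whole estimate uniformly by $c$. The cleanest write-up therefore treats Proposition~\ref{prop:graverbest} as a black box that is rescaled by $c$, rather than redoing the gap recursion by hand. The minor feasibility check — that the conformal sub-step $\alpha_{j^*}\veg^{j^*}$ keeps us within $\vel, \veu$ — is handled by sign\hy{}compatibility as above and needs no extra argument.
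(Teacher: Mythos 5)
Your proposal is correct and follows essentially the same route as the paper's proof: both rework the argument behind Proposition~\ref{prop:graverbest} via the conformal decomposition of $\vex^* - \vex_i$ from Proposition~\ref{prop:graver_conformal_sum}, use averaging to find a feasible Graver step pair improving the gap by at least a $\frac{1}{2n-2}$-fraction, and then observe that a $c$-approximate Graver-best step retains at least a $\frac{1}{c}$-fraction of that improvement, multiplying the iteration bound by $c$. Your write-up merely spells out a few details (feasibility of the conformal sub-steps, the geometric gap recursion, and integrality at termination) that the paper leaves implicit.
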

\begin{proof}
The proof is a straightforward adaptation of the proof of Proposition~\ref{prop:graverbest} which we first repeat here for convenience.
Let $\vex^*$ be a minimizer and let $\veh = \vex^* - \vex_0$.
Since $A\veh = \mathbf{0}$, by Proposition~\ref{prop:graver_conformal_sum}, $\veh = \sum_{i=1}^{n'} \alpha_i \veg^i$ for some $n' \leq 2n-2$, $\alpha_i \in \N$, $\veg^i \in \G(A)$, $i \in [n']$.
Thus by an averaging argument, an $\vex$-feasible step pair $(\lambda, \veg)$ such that $\lambda \veg$ is a Graver\hy{}best step must satisfy $\vew \lambda \veg \leq \frac{1}{2n-2} M$.
In other words, any Graver\hy{}best step pair improves the objective function by at least a $\frac{1}{2n-2}$-fraction of the total optimality gap $M$, and thus $3n \log M$ steps suffice to reach an optimum (cf.~\cite[Lemma 3.10]{Onn2010}).

It is straightforward to see that a $c$-approximate Graver\hy{}best step satisfies $\vew\vex - \vew(\vex + \lambda \veg) \leq \frac{c}{2n-2} M$, and thus $c(3n) \log M$ steps suffice.
\end{proof}

\begin{lemma}[Powers of $c$ step lengths] \label{lem:log_gamma}
Let $c \in \N$, $\vex$ be a feasible solution of~\eqref{IP}, and let
\[
\Gamma_{c\text{-apx}} = \left\{c^i \mid \exists \veg \in \G(A): \, \vel \leq \vex + c^i \veg \leq \veu \right\} \,.
\]
Let $(\lambda, \veg) \in \left(\Gamma_{c\text{-apx}} \times \G(A)\right)$ be an $\vex$-feasible step pair such that $\lambda \veg \leq \lambda' \veg'$ for any $\vex$\hy{}feasible step pair $(\lambda', \veg') \in \left(\Gamma_{c\text{-apx}} \times \G(A)\right)$.
Then $\lambda \veg$ is a $c$-approximate Graver-best step.
\end{lemma}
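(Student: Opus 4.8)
The plan is to verify Definition~\ref{def:apx_gb_step} directly for $\veh=\lambda\veg$. Concretely, I must show that for \emph{every} $\vex$-feasible step pair $(\mu,\veg^0)\in\N\times\G(A)$ one has $\vew(\lambda\veg)\le\tfrac1c\,\mu\,\vew\veg^0$, where the step length $\mu$ is now an arbitrary integer, \emph{not} necessarily a power of $c$. The whole point is that $(\lambda,\veg)$ was only selected as objective-minimal over the restricted set $\Gamma_{c\text{-apx}}\times\G(A)$, so I cannot compare to $(\mu,\veg^0)$ directly; instead I will exhibit a power-of-$c$ pair lying in $\Gamma_{c\text{-apx}}\times\G(A)$ that is within a factor $c$ of the best feasible step in the single direction $\veg^0$, and then invoke minimality of $(\lambda,\veg)$.

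Fix such a pair $(\mu,\veg^0)$ and consider only the direction $\veg^0$. The objective change $\nu\mapsto\vew(\nu\veg^0)=\nu\,\vew\veg^0$ is linear in the step length, and the box constraint $\vel\le\vex+\nu\veg^0\le\veu$ is convex and satisfied at $\nu=0$ (as $\vex$ is feasible); hence the feasible integer step lengths for $\veg^0$ form an interval $\{0,1,\dots,\mu^*\}$ with $\mu^*\ge\mu$. I may restrict attention to the case $\vew\veg^0<0$: if $\vew\veg^0\ge0$ the right-hand side $\tfrac1c\mu\,\vew\veg^0$ is non-negative and the inequality holds because, whenever an augmenting step exists at all, the minimal pair $(\lambda,\veg)$ is itself augmenting and so $\vew(\lambda\veg)<0$ (and if no augmenting step exists, $\vex$ is already optimal and the augmentation procedure has terminated). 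Under $\vew\veg^0<0$ the most improving feasible step in direction $\veg^0$ is $\mu^*\veg^0$, and since $\mu\le\mu^*$ we have $\tfrac1c\mu^*\,\vew\veg^0\le\tfrac1c\mu\,\vew\veg^0$; thus it suffices to beat $\tfrac1c\mu^*\,\vew\veg^0$.

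Now let $i\ge0$ be the largest integer with $c^i\le\mu^*$, which exists because $\mu^*\ge1$ and $c^0=1$; by maximality $c^{i+1}>\mu^*$, i.e. $c^i>\mu^*/c$. Since $c^i\le\mu^*$, the step length $c^i$ is feasible for $\veg^0$, so $c^i\in\Gamma_{c\text{-apx}}$ and $(c^i,\veg^0)$ is an $\vex$-feasible pair in $\Gamma_{c\text{-apx}}\times\G(A)$. Minimality of $(\lambda,\veg)$ then yields $\vew(\lambda\veg)\le c^i\,\vew\veg^0$. Multiplying $c^i>\mu^*/c$ by the negative number $\vew\veg^0$ reverses the inequality and gives $c^i\,\vew\veg^0\le\tfrac1c\mu^*\,\vew\veg^0$, and chaining the bounds produces $\vew(\lambda\veg)\le\tfrac1c\mu^*\,\vew\veg^0\le\tfrac1c\mu\,\vew\veg^0$, exactly the inequality required by Definition~\ref{def:apx_gb_step}.

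The one genuinely delicate point is the interval observation that the \emph{same} direction $\veg^0$ admits both the maximal feasible length $\mu^*$ and the nearby power $c^i$; everything else is sign bookkeeping. I would therefore spell out that observation as a one-line consequence of convexity of the box and feasibility of $\vex$, and take care with the inequality flip when multiplying through by $\vew\veg^0<0$. The degenerate possibilities ($\vew\veg^0\ge0$, $\mu^*=0$, or $\vex$ optimal) need only the brief remarks above.
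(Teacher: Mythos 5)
Your proof is correct and follows essentially the same route as the paper's: round the step length down to the nearest power of $c$, observe that the resulting pair lies in $\Gamma_{c\text{-apx}}\times\G(A)$ and is within a factor $c$ of the original, and invoke minimality of $(\lambda,\veg)$. The only difference is that the paper compares once against a Graver-best step pair (tacitly assuming one exists and that its objective is negative), whereas you compare against an arbitrary feasible pair and explicitly justify the sign cases and the feasibility of intermediate step lengths via convexity of the box — details the paper leaves implicit.
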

\begin{proof}
Let $(\lambda, \veg)$ satisfy the assumptions, and let $(\tilde{\lambda}, \tilde{\veg}) \in \left(\N \times \G(A)\right)$ be a Graver-best step pair.
Let $\lambda'$ be a nearest smaller power of $c$ from $\tilde{\lambda}$, and observe that $\lambda' \tilde{\veg}$ is a $c$-approximate Graver-best step because $\lambda' \geq \frac{\tilde{\lambda}}{c}$.
On the other hand, since $\lambda \veg$ is a $\lambda$-Graver\hy{}best step, we have $\lambda \veg \leq \lambda' \tilde{\veg}$ and thus $\lambda \veg$ is also a $c$-approximate Graver-best step, since we have $\vew\lambda\veg \le \vew\lambda'\tilde{\veg} \le \frac{1}{c}\vew\tilde{\lambda}\tilde{\veg}$.
\end{proof}

\begin{remark}
Lemma~\ref{lem:approxgraverbest} extends naturally to separable convex objectives; see the original proof~\cite[Lemma 3.10]{Onn2010}.
Moreover, Lemma~\ref{lem:log_gamma} also extends to separable convex objectives as was recently shown by Eisenbrand et al.~\cite{EisenbrandHK:2018}.
Thus Theorem~\ref{thm:gammanfold} (below) holds also for separable convex objectives.
\end{remark}

\begin{reptheorem}{thm:gammanfold}
Problem~\eqref{IP} with $A = E^{(N)}$ can be solved in time $\Delta^{\Oh(r^2s + rs^2)}(Nt)^2 \log (Nt) \cdot \log M$, where $M = \vew\vex^* - \vew\vex_0$ for some minimizer $\vex^*$ of $\vew \vex$.
\end{reptheorem}
\begin{proof}
Recall that $\Delta = \|A\|_\infty + 1$.
Koutecký et al.~\cite[Theorem 2]{PSP} show that a $\lambda$-Graver-best step can be found in time $\Delta^{\Oh(r^2s + rs^2)} Nt$.
Moreover, Hemmecke et al.~\cite{HemmeckeKW:2014} prove a proximity theorem which allows the reduction of an instance of~\eqref{IP} to an equivalent instance with new bounds $\vel', \veu'$ satisfying $\|\veu' - \vel'\|_\infty \leq Nt g_\infty$, with
\[
g_\infty = \max_{\veg \in \G(A)} \|\veg\|_\infty \leq \max_{\veg \in \G(A)} \|\veg\|_1 \leq (\Delta rs)^{\Oh(rs)} \,,
\]
where the last inequality can be found in the proof of~\cite[Theorem 4]{PSP}.
This bound implies that $\Gamma_{\text{2-apx}}$ from Lemma~\ref{lem:log_gamma} satisfies $\left| \Gamma_{\text{2-apx}} \right| \leq \log \|\veu'-\vel'\|_\infty \leq \log\left(Nt(\Delta r s)^{\Oh(rs)} \right) \leq \Oh(rs) \log (\Delta Ntrs)$.
By Lemma~\ref{lem:log_gamma}, finding a $\lambda$-Graver\hy{}best for each $\lambda \in \Gamma_{\text{2-apx}}$ and picking the minimum results in a $2$-approximate Graver\hy{}best step, and can be done in time $\Delta^{r^2s + rs^2} (Nt) \log (Nt)$.
By Lemma~\ref{lem:approxgraverbest}, $(4n-4) \log M$ steps suffice to reach the optimum.
\end{proof}

\section{Implementation}
We first give an overview of the original algorithm, which is our starting point.
Then we discuss our specific improvements and mention a few details of the software implementation.
\subsection{Overview of the Original Algorithm}
Recall that any $Nt$-dimensional vector related to $N$-fold IP is naturally partitioned into $N$ bricks of length $t$. In particular, this applies to the solution vector $\vex$ and any augmenting step $\veg$.
The key property of the $N$-fold product $E^{(N)}$ is that, regardless of $N \in \N$, the number of nonzero bricks of any $\veg \in \G(E^{(N)})$ is bounded by some constant $g(E)$ called the \emph{Graver complexity of~$E$}, and, moreover, that the sum of all non-zero bricks of $\veg$ can be decomposed into at most $g(E)$ elements of $\G(E_2)$~\cite[Lemma 3.1]{HemmeckeOR13}.
This facilitates the following construction.
Let
\[
Z(E) = \left\{\vez \in \Z^t\mid \exists \veg^1, \dots, \veg^k \in \G(E_2), \, k \leq g(E), \, \vez = \sum_{i=1}^k \veg^i  \right\} \enspace .
\]
Then, every prefix sum $\sum_{i=1}^j \veg^i$, $j \in [N]$, of the bricks of $\veg \in \G(E^{(N)})$ is contained in $Z(E)$ and a $\lambda$-Graver\hy{}best step, $\lambda \in \N$, can be found using dynamic programming over the elements of $Z(E)$.

To ensure that a Graver\hy{}best step is found, a set of step-lengths $\Gamma_{\text{best}}$ is constructed as follows.
Observe that any Graver\hy{}best (and thus feasible) step pair $(\lambda,\veg)\in \left(\N \times \G(E^{(N)}) \right)$, must satisfy that in at least one brick $i \in [N]$ it is ``tight'', that is, $(\lambda,\veg)$ is $\vex$-feasible while $(\lambda+1,\veg)$ is not specifically because $\vel^i \leq \vex^i + \lambda \veg^i \leq \veu^i$ holds but $\vel^i \leq \vex^i + (\lambda+1) \veg^i \leq \veu^i$ does not.
Thus, for each $\vez \in Z(E)$ and each $i \in [N]$, we find all the potentially ``tight'' step lengths $\lambda$ and add them to $\Gamma_{\text{best}}$, which results in a bound of $|\Gamma_{\text{best}}| \leq |Z(E)| \cdot N$.
Notice that this approach does not work for separable convex objectives for which a Graver\hy{}best step might not be tight in any coordinate.

For a overview of algorithm as described by Hemmecke, Onn, and Romanchuk see Algorithm~\ref{alg:HOR}.

\begin{algorithm}[bt]
\SetKwInOut{Input}{input}\SetKwInOut{Output}{output}
\Input{matrices $E_1,E_2$, positive integer $N$, and vectors $\veb,\vel,\veu,\vew$}
\Output{optimal solution to \eqref{IP} with $A = E^{(N)}$}

\SetKwProg{Fn}{Function}{}{}
\SetKwRepeat{Do}{do}{while}

\SetKwFunction{GraverComplexity}{GraverComplexity}
\SetKwFunction{FindFeasibleSolution}{FindFeasibleSolution}
\SetKwFunction{GraverBasis}{GraverBasis}
\SetKwFunction{BuildGammaBest}{BuildGammaBest}
\SetKwFunction{lambdaBestStep}{lambdaBestStep}
\SetKwFunction{DynamicProgramStates}{DynamicProgramStates}

$g = \GraverComplexity(E_1, E_2)$\;
$\vex_0 =$ \FindFeasibleSolution{$E,N,\veb,\vel,\veu$},\, $i = 0$\;
$\G(E_1) = \GraverBasis(E_1, g)$\;
$Z(E) = \DynamicProgramStates(\G(E_1), g)$\;

\Do{$\vex_{i-1} \neq \vex_i$}{
  $\Gamma_{\textrm{best}} =$ \BuildGammaBest{$\vex_i$}\;
  $i = i+1$\;
  \ForEach{$\lambda \in \Gamma$}{
    $\veg_\lambda =$ \lambdaBestStep{$Z(E), \lambda, \veg$}\;
  }
  $\vex_i = \vex_{i-1} + \argmin_{\left\{ \veg_\lambda \mid \lambda \in \Gamma \right\} } \vew\lambda \veg_\lambda$\;
}
\Return $\vex_i$\;

\caption{\label{alg:HOR}%
Pseudocode of the algorithm of Hemmecke, Onn, and Romanchuk.
}
\end{algorithm}

\subsection{Replacing Dynamic Programming with ILP} \label{subsec:dp_ilp}
We have started off by implementing the algorithm exactly as it is described by Hemmecke et al.~\cite{HemmeckeOR13}.
The first obstacle is encountered almost immediately and is contained in the constant $g(E)$.
This constant can be computed, but the computation is extremely difficult~\cite{FinholdHemmecke:2016,Hemmecke:2004}.
Another possibility is to estimate it, in which case it is almost always larger than $N$ and thus is essentially meaningless.
Finally, one can take the approach partially suggested in~\cite[Section 7]{HemmeckeOR13}, where we consider $g(E)$ in the construction of $Z(E)$ to be a tuning parameter and consider the approximate set $Z_{\mathtt{gc}}(E)$, $\mathtt{gc} \in \N$, obtained by taking sums of at most $\mathtt{gc}$ elements of $\G(E_2)$.
This makes the algorithm more practical, but turns it into a heuristic.

In spite of this sacrifice, already for small ($r=3$, $s=1$, $t=7$, $N=10$) instances and extremely small value of $\mathtt{gc} = 3$, the dynamic programming based on the $Z_{\mathtt{gc}}(E)$ construction was taking an unreasonably long time (over one minute).
Admittedly this could be improved; however, already for $\mathtt{gc} > 5$, it becomes infeasible to compute $Z_{\mathtt{gc}}(E)$, and for larger instances ($r > 5$, $t > 12$) it becomes very difficult to compute even $\G(E_2)$.
For these reasons we sought to completely replace the dynamic program involving $Z(E)$.

Koutecký et al.~\cite{PSP} show that all instances of~\eqref{IP} with the property that the so-called \emph{dual treedepth $\td_D(A)$ of $A$} is bounded and the largest coefficient $\|A\|_\infty$ is bounded also have the property that $g_1(A) = \max_{\veg \in \G(A)} \|\veg\|_1$ is bounded, which implies that augmenting steps can be found efficiently.
This class of ILPs contains $N$-fold IP.

The interpretation of the above fact is that, in order to solve~\eqref{IP}, it is sufficient to repeatedly (for different $\vex$ and $\lambda$) solve an auxiliary~\eqref{IP} instance
\begin{equation} \label{AugIP}
\min \left\{\vew \veh \mid A \veh = \mathbf{0},\, \vel \leq \vex + \lambda \veh \leq \veu, \, \|\veh\|_1 \leq g_1(A)\right\} \tag{AugILP}
\end{equation}
in order to find good augmenting steps; we note that the constraint $\|\veh\|_1 \leq g_1(A)$ can be linearized~\cite[Lemma 25]{PSP}.
The heuristic approach outlined above transfers easily: we replace $g_1(A)$ in~\eqref{AugIP} with some integer $\gc$, $1 < \gc  \leq g_1(A)$; this makes~\eqref{AugIP} easier to solve at the cost of losing the guarantee that an augmenting step is found if one exists.
In theory, solving~\eqref{AugIP} should be easier than solving the original instance~\eqref{IP} due to the special structure of $A$~\cite[Lemma 25]{PSP}.
Our approach here is to simply invoke an industrial MILP solver on~\eqref{AugIP} in order to find a $\lambda$-Graver\hy{}best step.

Note that the quantities $g(E)$ and $g_1(A)$ and the tuning parameters $\mathtt{gc}$ and $\gc$ are related but distinct.
First, $g(E)$ bounds the number of non-zero bricks of any element of $\G(A)$ and the number of elements of $\G(E_2)$ into which it decomposes, while $g_1(A)$ bounds the $\ell_1$-norm of any element of $\G(A)$.
It can be seen that bounded $g_1(A)$ implies bounded $g(E)$ and vice versa.
Second, $\mathtt{gc}$ and $\gc$ are tuning parameters derived from $g(E)$ and $g_1(A)$, respectively.
The crucial distinction is that the tuning parameter $\gc$ translates naturally into a linear constraint of~\eqref{AugIP} while $\mathtt{gc}$ only translates naturally to a construction of a restricted set of states $Z_{\mathtt{gc}}(E)$ which we are trying to avoid.

\subsection{Augmentation Strategy: Step Lengths}
\subsubsection*{Logarithmic $\Gamma$}
The majority of algorithms based on Graver basis augmentation rely on the Graver\hy{}best augmentation procedure~\cite{ChenMarx:2018,DeLoeraEtAl2013,HemmeckeOR13,KnopKM:2017esa,KnopKoutecky:2017,Onn2010}.
Consequently, these algorithms require finding (exact) Graver\hy{}best steps.
In the aforementioned algorithms this is always done using the construction of the set $\Gamma_{\text{best}}$ mentioned above, which is of size $f(k) \cdot n$ where $k$ is the relevant parameter (e.g., $(ars)^{\Oh(rst + st^2)}$ in the original algorithm for $N$-fold IP).
We replace this construction with $\Gamma_{\text{2-apx}} = \{1,2,4,8,\dots\}$ which, combined with the proximity technique, is only of size $\Oh(\log N)$ (Theorem~\ref{thm:gammanfold}); in particular, independent of the function $f(k)$.

\subsubsection*{Exhausting $\lambda$}
Moreover, we have noticed that sometimes the algorithm finds a step $\veg$ for $\lambda = 2^k$ which is not tight in any brick, and then repeatedly applies it for shorter step-lengths $\lambda' < \lambda$.
In other words, the discovered direction $\veg$ is not \emph{exhausted}.
Thus, for each $\lambda \in \N$, upon finding the $\lambda$-Graver\hy{}best step $\veg$, we replace $\lambda$ with the largest $\lambda' \geq \lambda$ for which $(\lambda', \veg)$ is still $\vex$-feasible.

\subsubsection*{Early termination}
Another observation is that in any given iteration of the algorithm, if $\lambda > 1$, then \emph{some} augmenting step has been found and if the computation is taking too long, we might terminate it and simply apply the best step found so far.

\subsubsection*{Initialize once}
We have noticed that a large portion of time spent on computing a $\lambda$\hy{}Graver\hy{}best step is taken by the initialization of the MILP model which is then solved very quickly.
However, notice that in the formulation of~\eqref{AugIP} the only changing parameters are the lower and upper bounds.
This leads us to a practical improvement: initialize the MILP model once in the beginning, and realize each~\eqref{AugIP} call by changing the bounds and reoptimizing the model.

For a overview of the newly proposed algorithm see Algorithm~\ref{alg:New}.

\begin{algorithm}[bt]
\SetKwInOut{Input}{input}\SetKwInOut{Output}{output}
\Input{matrices $E_1,E_2$, positive integers $N$, $c$ and $\gc$, and vectors $\veb,\vel,\veu,\vew$}
\Output{a feasible solution to \eqref{IP} with $A = E^{(N)}$}

\SetKwProg{Fn}{Function}{}{}
\SetKwRepeat{Do}{do}{while}

\SetKwFunction{GraverComplexity}{GraverComplexity}
\SetKwFunction{FindFeasibleSolution}{FindFeasibleSolution}
\SetKwFunction{GraverBasis}{GraverBasis}
\SetKwFunction{BuildGamma}{BuildGamma}
\SetKwFunction{gammaBestStep}{gammaBestStep}
\SetKwFunction{DynamicProgramStates}{DynamicProgramStates}
\SetKwFunction{ExhaustDirection}{ExhaustDirection}

$\vex_0 =$ \FindFeasibleSolution{$E,N,\veb,\vel,\veu$}, $i = 0$\;

\Do{$\vex_{i-1} \neq \vex_i$}{ \label{alg:outerlook}
  $\Gamma = \emptyset; \, j=0$,\, $i = i+1$\;
  \Do{$\veg_\lambda \neq \mathbf{0}$}{ \label{alg:innerloop}
    $\lambda = c^j$\;
    $\veg_\lambda = \min \left\{ \vew \veh \mid A \veh = \mathbf{0}, \, \vel \leq \vex + \lambda \veh \leq \veu,\, \|\veh\|_1 \leq \gc, \, \veh \in \Z^{Nt} \right\}$\;

    $\lambda' = \ExhaustDirection(\veg_\lambda)$\;
    $\Gamma = \Gamma \cup \{\lambda'\}, \, j = j+1$\;
  }
  $\vex_i = \vex_{i-1} + \argmin_{\left\{ \veg_\lambda \mid \lambda \in \Gamma \right\} } \vew \lambda \veg_\lambda$\;
}
\Return $\vex_{i}$

\caption{\label{alg:New}%
Pseudocode of our new heuristic algorithm.
The algorithm is exact if $\gc \geq g_1(A) = \max_{\veg \in \G(A)} \|\veg\|_1$.
Note the two nested loops: we shall refer to them as the \emph{inner loop} which computes a $c$-approximate Graver-best step, and the \emph{outer loop} which repeatedly adds the computed step to the current solution $\vex_{i-1}$.
}
\end{algorithm}

\subsection{Software and Hardware}
We have implemented our solver in the SageMath computer algebra system~\cite{SageMath}.
This was a convenient choice for several reasons.
The SageMath system offers an interactive notebook-style web-based interface, which allows rapid prototyping and debugging.
Data types for vectors and matrices, Graver basis algorithms~\cite{4ti2}, and a unified interface for MILP solvers are also readily available.
We have experimented with the open-source solvers GLPK~\cite{GLPK}, Coin-OR CBC~\cite{CBC}, and the commercial solver Gurobi~\cite{Gurobi} and have settled for using the latter, since it performs the best.
The downside of SageMath is that an implementation of the original dynamic program is likely much slower than a similar implementation in C; however this DP is impractical anyway as explained in Section~\ref{subsec:dp_ilp}.
Moreover, as we will evidence later, the overhead of SageMath in the construction of a MILP model is significant and for smaller instances (where~\eqref{AugIP} is not called many times) the time spent on constructing the MILP model dominates the runtime.
For random instance generation and subsequent data evaluation and graphing, we have used the Jupyter notebook environment~\cite{Jupyter} and Matplotlib and Seaborn libraries~\cite{Matplotlib,seaborn}.
The computations were performed on a computer with an Intel\textregistered ~Xeon\textregistered ~E5-2630 v3 (2.40GHz) CPU and 128 GB RAM.

\section{Testing Instances}
\subsection{Instances}
We choose two problems for which $N$-fold IP formulations were shown in the literature, namely the $Q || C_{\max}$ scheduling problem~\cite{KnopKoutecky:2017} and the \textsc{Closest String} problem~\cite{KnopKM:2017esa}.
Here we introduce both problems in their decision variants.

\prob{\textsc{Uniformly related machines makespan minimization ($Q || C_{\max}$)}}
{Set of $m$ machines $M$, each with a speed $s_i \in \N$.
A set of $n$ jobs $J$, each with a processing time $p_j \in \N$.
A target makespan $B$.}
{Is there an assignment of jobs $J$ to $m$ machines such that the time when the last job finishes (the makespan) is at most $B$? Here, a job $j$ scheduled on a machine~$i$ takes time $p_j / s_i$ to execute.}

\prob{\textsc{Closest String}}
{A set of $k$ strings $s_1, \dots, s_k$ of length $L$ over an alphabet $\Sigma$ and a positive integer $d$.}
{Is there a string $y \in \Sigma^L$ such that $\max_{i=1}^k d_H(s_i, y) \le d$, where $d_H$ is the Hamming distance?}

In the rest of this section we present $N$-fold IP models we used in our study and the describe how we generate random instances.

\subsection{Scheduling}
We observe that $Q||C_{\max}$ is equivalent to the multi-sized bin packing problem, where we have $m$ bins of various capacities instead of $m$ machines of different speeds, and we adopt this view as it is more convenient.
We also view it as a \emph{high-multiplicity} problem where the items are not given explicitly as a list of item sizes, but succinctly by a vector of item multiplicities.
Because Algorithm~\ref{alg:New} is primarily an optimization algorithm, we follow the standard approach~\cite[Lemma 3.8]{HemmeckeOR13} and turn the feasibility problem into an auxiliary optimization instance in which finding a starting feasible solution is easy.
However, the naive approach~\cite[Lemma 3.8]{HemmeckeOR13} would almost double the dimension, which is not necessary in the specific case of $Q || C_{\max}$.
Instead, we introduce an auxiliary machine onto which all jobs are initially scheduled, and the objective is to minimize the number of jobs scheduled on this machine.
If a solution is found with no jobs scheduled on this auxiliary machine, it corresponds to an admissible schedule with makespan at most $B$.

\subparagraph*{$N$-fold IP Model}
Let $\vep = (p_1, \dots, p_k)$ be the vector of item sizes, let $\ven = (n_1, \ldots, n_k)$ be the vector of item multiplicities, $n = \sum_{j = 1}^k n_j$, and let $s_1, \ldots, s_m$ be speeds of the machines in the instance of $Q || C_{\max}$.
We use the following ILP model for $Q || C_{\max}$ with fixed makespan $B$.
We have $km$ integral variables $x^i_j$ with $i \in [m]$ and $j \in [k]$ to express the number of jobs of type $j$ scheduled on machine~$i$.
Furthermore, we introduce a variable $x^0_j$ expressing the number of unscheduled jobs of type $j$ for $j \in [k]$.
As already pointed out we minimize the number of unscheduled jobs.
\begin{align*}
   \text{minimize} & \ \sum_{j = 1}^k x^0_j & \\
 \text{subject to}
& \ \sum_{i=0}^m x_j^i = n_j \qquad    & \forall 1 \leq j \leq k      \\
 & \ \sum_{j = 1}^k p_j x_j^i \le s_i \cdot B      \qquad      & \forall 1 \leq i \leq m      \\
 & \ \sum_{j = 1}^k p_j x_j^0 \le n \cdot p_k                                       \\
 \text{where}
& \ 0 \le x^i_j \le n_j \qquad & \forall i = 0,\ldots,m \forall j = 1, \ldots, k
\end{align*}
Here, we have essentially added a ``penalty machine'' which runs fast enough so that it is possible to schedule all of the given jobs to this extra machine.
Now, it is straightforward to verify that this is indeed an $N$-fold IP model with $N = m+1$ in which the matrix $E_1$ is the identity matrix of size $k \times k$ and $E_2 = \vep$.

The input parameters of the instance generation are number of bins (or machines) $m$, the smallest and the largest capacities $S$ and $L$, respectively, item sizes $p_1, \dots, p_k$ and probability weights $w_1, \dots, w_k$, and a slack ratio $\sigma \in \R$ with $0\le \sigma \le 1$.
Let $W = \sum_{i=1}^k w_i$.
The instance is then generated as follows.
First, we choose $m$ capacities from $[S,L]$ uniformly at random.
This determines the total available time of the machines $C$.
The next goal is to generate items whose total size is roughly $\sigma \cdot C$.
We do this by repeatedly picking an item length from $p_1, \dots, p_k$, where $p_j$ is selected with probability $w_j / W$, until the total size of items picked so far exceeds $\sigma \cdot C$, when we terminate and return the generated instance.

\subsubsection*{Batch generation.}
We generate a batch of experimental instances from a list of parameters, which correspond to command line arguments of the batch generator.
The generated batch is a cartesian product of all possible choices of the parameters.
\begin{description}
\item[\texttt{machines}] A list\footnote{List refers to the list datatype of the Python programming language.} of integers, by default \texttt{[10,20,30,40,50,60,70,80,90,100]}, corresponding to choices of the number of machines (bins) $m$.
\item[\texttt{number\_job\_types}] A list of integers, by default \texttt{[4]}, corresponding to different choices of the number of types $k$.
\item[\texttt{slacks}] A list of floats, by default \texttt{[0.6,0.7,0.8]}, corresponding to choices of the slack ratio~$\sigma$.
\item[\texttt{p\_s}] A list of integers, by default \texttt{[5,6,7,8,9,10,11,12,13]}. For each number $\ell \in \texttt{p\_s}$, we compute the first $\ell$ primes and randomly pick a subset of size $k$ of them as the processing times $p_1 \leq \dots \leq p_k$. We set the weights $w_1 \geq \dots \geq w_k$ to be $p_k, \dots, p_1$, i.e., jobs of larger length occur with smaller probability.
Note that $\max\texttt{number\_job\_types} \le \min\texttt{p\_s}$ must hold.
(We pick processing times which are primes because this easily guarantees that the set of $p_i$'s is coprime and thus the instance cannot be trivially reduced to an instance with smaller $p_{\max}$.)
\item[\texttt{count\_for\_each\_p}] An integer, by default \texttt{3}. For each choice of $\ell \in \texttt{p\_s}$ we make \texttt{count\_for\_each\_p} independent choices of the size $k$ subset of the first $\ell$ primes.
\end{description}

\subsection{Closest String}
The random instance is generated exactly as done by Chimani et al.~\cite{ClosestString}: first, we generate a random ``target'' string $y \in \Sigma^L$ and create $k$ copies $s_1, \dots, s_k$ of it; then, we make $\alpha$ random changes in $s_1, \dots, s_k$.
This way, we have an upper bound $\alpha$ on the optimum.
The input parameters of the instance generation are thus $k, L, \Sigma$, the distance ratio $r$ such that $\alpha = \nicefrac{n}{r}$, and a distance factor $\delta$, $0 \leq \delta \leq 1$, such that we ask whether there exists a string in distance $d=\delta \cdot \nicefrac{n}{r}$.
Thus for $\delta=1$ we are guaranteed that the answer is \textsc{Yes} while for $\delta=0$ the answer is almost surely \textsc{No}.
Again, we solve an auxiliary optimization instance where we essentially start with a string of ``all blanks'', where we set the Hamming distance between the blank and any character in $\Sigma$ to $0$.
Then, we try to fill in all the blanks while staying in the specified distance $d$; the objective is thus the remaining number of blanks.

\subparagraph*{$N$-fold IP Model}\footnote{The model is taken from~\cite{KnopKM:2017esa}.}
Let $\left(s_1, \ldots, s_k, d \right)$ be an instance of the \textsc{Closest String} problem, where all of the strings $s_1, \ldots, s_k$ are of length $n$ and taken from alphabet $\Sigma$.
We assume the given instance is already preprocessed, that is, $|\Sigma| \le k+1$ (the plus one comes from the presence of the blank symbol).
We call $k$-tuples of symbols in $\Sigma$ a \emph{configuration} and denote the set of all configurations $\mathcal{C}$.
An input position $i \in [n]$ has a configuration $C \in \mathcal{C}$ if $s_j[i] = C[i]$ for all $j = 1, \ldots, k$.
For a configuration $C \in \mathcal{C}$ by $n_C$ we denote the number of input positions having configuration $C$.
Notice now that our task is to decide for each configuration $C \in \mathcal{C}$ how many times we are going to use a character $\sigma \in \Sigma$ in the output sting $y$.
To that end we introduce integral variables $x_{C,\sigma}$ for each configuration $C \in \mathcal{C}$ and each character $\sigma \in \Sigma$.
Then, we introduce some auxiliary variables (all of them will be set to $0$ using the box constraints) in order to maintain the $N$-fold format and design a valid model with $N = \left|\mathcal{C}\right| \le k^k$.
To see this, notice that we have to compute the distance of $y$ to every string $s_i$ in the input.
Let $C\in\mathcal{C}$ be a configuration and let $D_C \in \{0,1\}^{k \times |\Sigma|}$ be the matrix whose columns we index by elements of $\Sigma$ with $D_C(i,\sigma) = d_H(C[i],\sigma)$, that is, the matrix $D_C$ describes the Hamming distance of the configuration $C$ if we decide to assign $\sigma$ once in the output string $y$.
We stress here that, since $\Sigma$ contains the blank symbol, $D_C$ contains the all zero column in the corresponding position corresponding.
Finally, we let $D = \left( D_{C_1} \mid \cdots \mid D_{C_{|\mathcal{C}|}} \right)$ be a matrix in which we collect all of the above defined distance matrices.
Let $t$ be the number of columns of the matrix $D$.
For each configuration $C \in \mathcal{C}$ we introduce a vector of variables $\vex^C$ of length $t$ whose entries we index $x_{\bar{C},\sigma}$; we set the box constrains to
\[
0 \le x_{\bar{C},\sigma} \qquad \forall \bar{C} \in \mathcal{C}, \forall \sigma \in \Sigma
\qquad\textrm{and}\qquad
x_{\bar{C},\sigma} \le 0 \qquad \forall \bar{C} \in \mathcal{C}\setminus\{C\}, \forall \sigma \in \Sigma \,.
\]
Now, the global conditions are
\[
  \sum_{C \in \mathcal{C}} D \vex^{C} \le \ved \,,
\]
where $\ved = (d, \ldots, d)$ is a vector of length $k$.
Finally, we set the local conditions
\[
  \sum_{\bar{C} \in \mathcal{C}} \sum_{\sigma \in \Sigma} x_{\bar{C},\sigma} = n_C \qquad\qquad \forall C \in \mathcal{C}
\]
and the objective function
\[
  \min \sum_{C \in \mathcal{C}} \sum_{\bar{C} \in \mathcal{C}} x^C_{\bar{C}, \lambda} \,,
\]
where $\lambda$ is the blank symbol.
This finishes the description of the used $N$-fold IP model.

\subsubsection*{Batch generation.}
The list of parameters for batch generation is the following:
\begin{description}
\item[\texttt{str\_len}] A list of integers, by default \texttt{[500,1000,2000,4000,8000,16000]}, corresponding to choices of $L$.
\item[\texttt{str\_num}] A list of integers, by default \texttt{[3,4,5,6]}, corresponding to choices of $k$.
\item[\texttt{ratio}] A list of integers, by default \texttt{[2,3,4,7,10,15]}, corresponding to choices of $r$.
\item[\texttt{sigma}] A list of integers, by default \texttt{[2,3,4,5]}, corresponding to choices of $|\Sigma|$.
\item[\texttt{distance\_factor}:] A list of floats, by default \texttt{[0.1,0.15,0.2,0.25,0.3,0.5,0.7]}, corresponding to choices of $\delta$.
\end{description}
We generate an instance for each parameter tuple from the cartesian product of all the lists above.

\subsection{Common Parameters}
Here we describe parameters which are common to both instance types ($Q || C_{\max}$ and \textsc{Closest String}).
For each generated instance we run the iterative algorithm for various choices of the augmentation strategy $\Gamma \in \{\Gamma_{\text{any}}, \Gamma_{\text{best}}, \Gamma_{\text{2-apx}}, \Gamma_{\text{5-apx}}, \Gamma_{\text{10-apx}}\}$ and the tuning parameter $\gc$.
The main parameters are thus
\begin{description}
\item[\texttt{gc\_values}] A list of integers, by default \texttt{[4,8,12,20,30,40,50,75,100]}, corresponding to choices of $\gc$.
\item[\texttt{gammas}] A list of strings, by default \texttt{["log2"]}, with other options being \texttt{"unit"}, \texttt{"best"}, \texttt{"log5"}, and \texttt{"log10"}, corresponding to the choices of $\Gamma$.
\end{description}

The parameter \texttt{logdir} (by default \texttt{logs}) determines the target directory to store the logs.
The directory will have subdirectories according to the dimension $Nt$ on the first level, subdirectories according to different $\Delta$ (maximum coefficient) on the second level, and subdirectories for each problem instance on the third level.
Finally, each instance directory contains one \texttt{.log} and one \texttt{.pickle} (protocol version 2) file for each choice of $\gc$ and $\Gamma$.
The parameter \texttt{instance\_type} is one of \texttt{sched} (default) or \texttt{cs}, for $Q || C_{\max}$ or \textsc{Closest String}, respectively.
Parameters \texttt{augip\_timelimit} and \texttt{milp\_timelimit} are both integers determining the timelimit for the MILP solver, with the former one applying to the~\eqref{AugIP} instance and the latter one to when we call the solver on the original~\eqref{IP} instance.
Finally, passing \texttt{--disable\_nfold} turns off the iterative algorithm and only uses the MILP solver to solve the original~\eqref{IP} instance.

\section{Evaluation}
We first give an outline of the evaluation process, which is divided into three parts.

\subsubsection*{Qualitative Evaluation}
In the first part we begin with two main questions, specifically, how is the performance of the algorithm (both in terms of the number of iterations and the quality of the returned solution) influenced by:
\begin{enumerate}
\item the value of the tuning parameter $\gc$ and
\item the augmentation strategy $\Gamma$?
\end{enumerate}

Regarding our first question, theoretically we should see either an increase in the number of iterations, a decrease in the quality of the returned solution, or both.
However, the range of the tuning parameter $\gc$ is quite large: any number between $2$ and $g_1(A)$ is a valid choice, and in all our scenarios the true value of $g_1(A)$ exceeds $200$.
Thus, we are interested in the transition values of $\gc$ when the algorithm no longer finds the true optimum or when its convergence rate drops significantly.

Regarding our second question, there are two main candidates for the set of step-lengths $\Gamma$.
We can either use the ``best step'' construction $\Gamma_{\text{best}}$ of the original algorithm, which assures that we always make a Graver\hy{}best step before moving to the next iteration.
Or, we can use the ``approximate best step'' construction $\Gamma_{\text{2-apx}}$ of Theorem~\ref{thm:gammanfold}, which provides a $2$-approximate Graver\hy{}best step.
To make this comparison more interesting, we also consider $\Gamma_{5\text{-apx}}$ and $\Gamma_{10\text{-apx}}$ and also the trivial ``any step'' strategy where we always make the $1$-Graver\hy{}best step, which corresponds to taking $\Gamma_{\text{any}} = \{1\}$.
Recall that due to the trick of always exhausting the discovered direction, this strategy actually has a chance at quick convergence, unlike if we only made the step with $\lambda = 1$.

\subsubsection*{Quantitative Evaluation}
Later, we will quantify the relationship of several instance parameters such as the dimension, largest coefficient $\Delta$, number of columns of $E_1$, which is $t$, number of rows of $E_1$, which is $r$, number of bricks $N$, and tuning parameter $\gc$, to performance parameters such as optimality gap or convergence rate.
Recall that in both our scenarios we have $s=1$ and thus we do not mention this parameter further.

\subsubsection*{Towards Practical Applications}
Finally, we explore possible avenues to transfer our ideas to practice.
To that end, we ask ``on which instances could $n$-fold IP beat Gurobi?''
Due to the immense amount of attention dedicated to industrial MILP solvers we do not expect our ideas to lead to significant improvements across many kinds of instances; however, we do expect that there exist some special instances on which Gurobi performs poorly and could be outperformed by a newer implementation of our solver.

To this end, we study the relationship of several time measures (total time, time spent on augmentation calls, time taken by Gurobi to solve the instance etc.) to parameters such as dimension, $\Delta$, $r$, and $t$.

\subsection{Qualitative Evaluation}
Here we demonstrate the overall behavior of the algorithm on two selected instances (one for $Q||C_{\max}$ and one for \textsc{Closest String}); we encourage the reader to see the full data (incl. plots) at \url{https://github.com/katealtmanova/nfoldexperiment}.


We chose two instances among the tested ones as representatives of the overall behavior:
\begin{itemize}
  \item
  A $Q || C_{\max}$ instance with parameters $m=60$, $S=215$, $L=12124$, item sizes $(3,7,17,41,43)$ (note this implies nontrivial $\Delta$), weights $(43,41,17,7,3)$, and $\sigma=0.6$.
The theoretical upper bound on $g_1(A)$ is $(rs \Delta + 1)^{\Oh(rs)}$~\cite[Lemma 3]{EisenbrandHK:2018}, and here we have $r=5$, $s=1$ and $\Delta = 43$; thus, without computing $g_1(A)$ exactly, we should consider it to be at least $(5 \cdot 43 + 1)^{5} \approx 4.7 \cdot 10^{11}$.
  \item
  A \textsc{Closest String} instance with parameters $k=3$, $|\Sigma|=4$, $L=8000$, $r=4$ and $\delta = 0.3$.
The $N$-fold model has $r=3$, $s=1$ and $\|A\|_\infty = 1$, thus, without computing $g_1(A)$ exactly, we should consider it to be at least $(2 \cdot 3)^3 = 216$.
\end{itemize}

\subsubsection*{Plots}
We use two types of plots to visualize our data.
First and only for the scheduling instance, we hand-picked four ``interesting'' values of $\gc$, namely $\gc=25,50,150$ and $1000$, and we give a line plot for each such value of $\gc$ and each augmentation strategy $\Gamma$.
The $x$ axis of each line plot corresponds to inner iterations (computations of a $\lambda$\hy{}Graver\hy{}best step).
The $y$ axis corresponds to objective values.
Each line plot contains two lines: a thin blue line marking each individual value computed in the inner loop, and a thick orange line marking the progress of the outer loop, i.e., the minimum over all steps computed in the individual outer iterations.

\begin{figure}[!h]
\centering
	\includegraphics[width=0.49\textwidth]{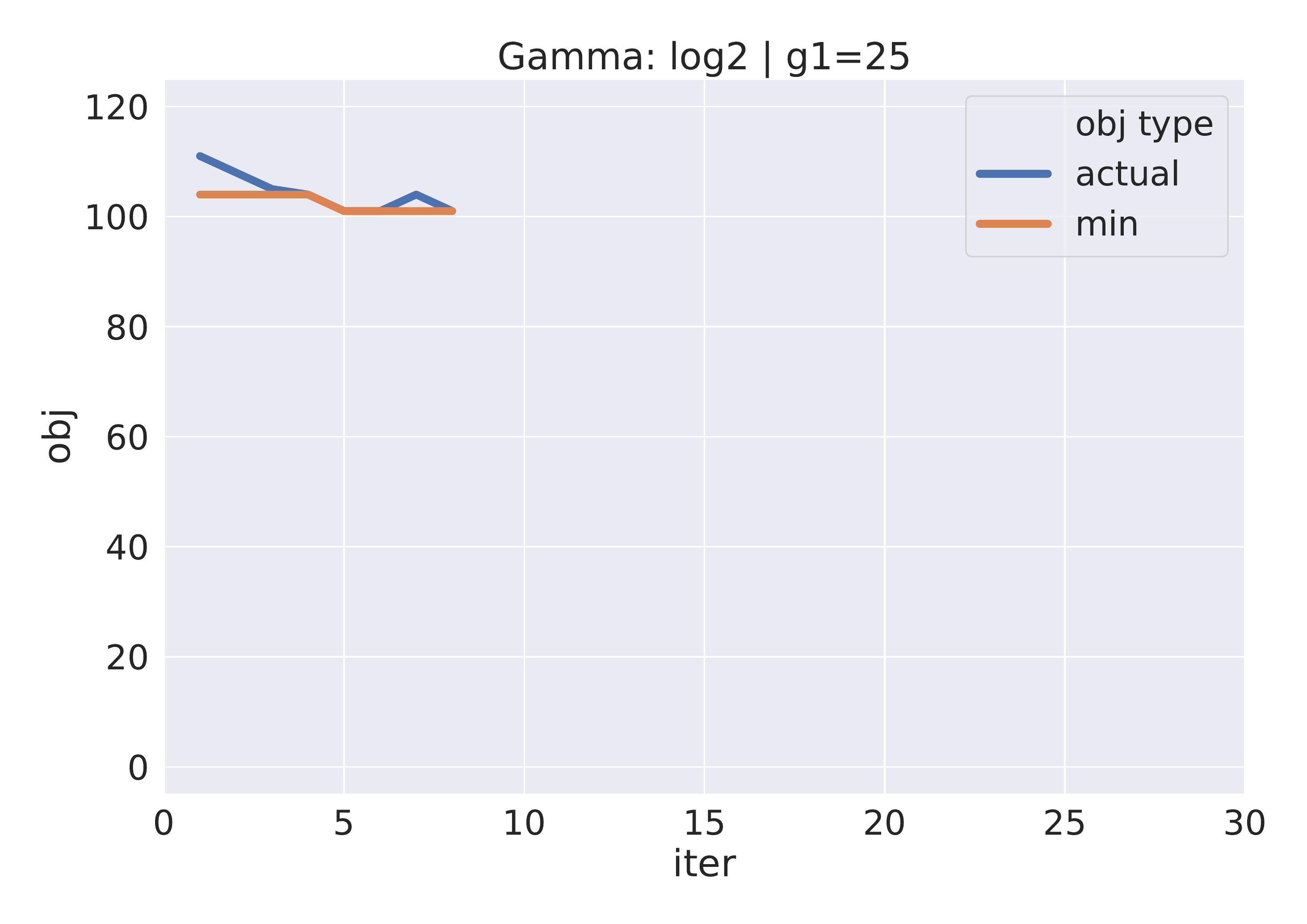}
	\includegraphics[width=0.49\textwidth]{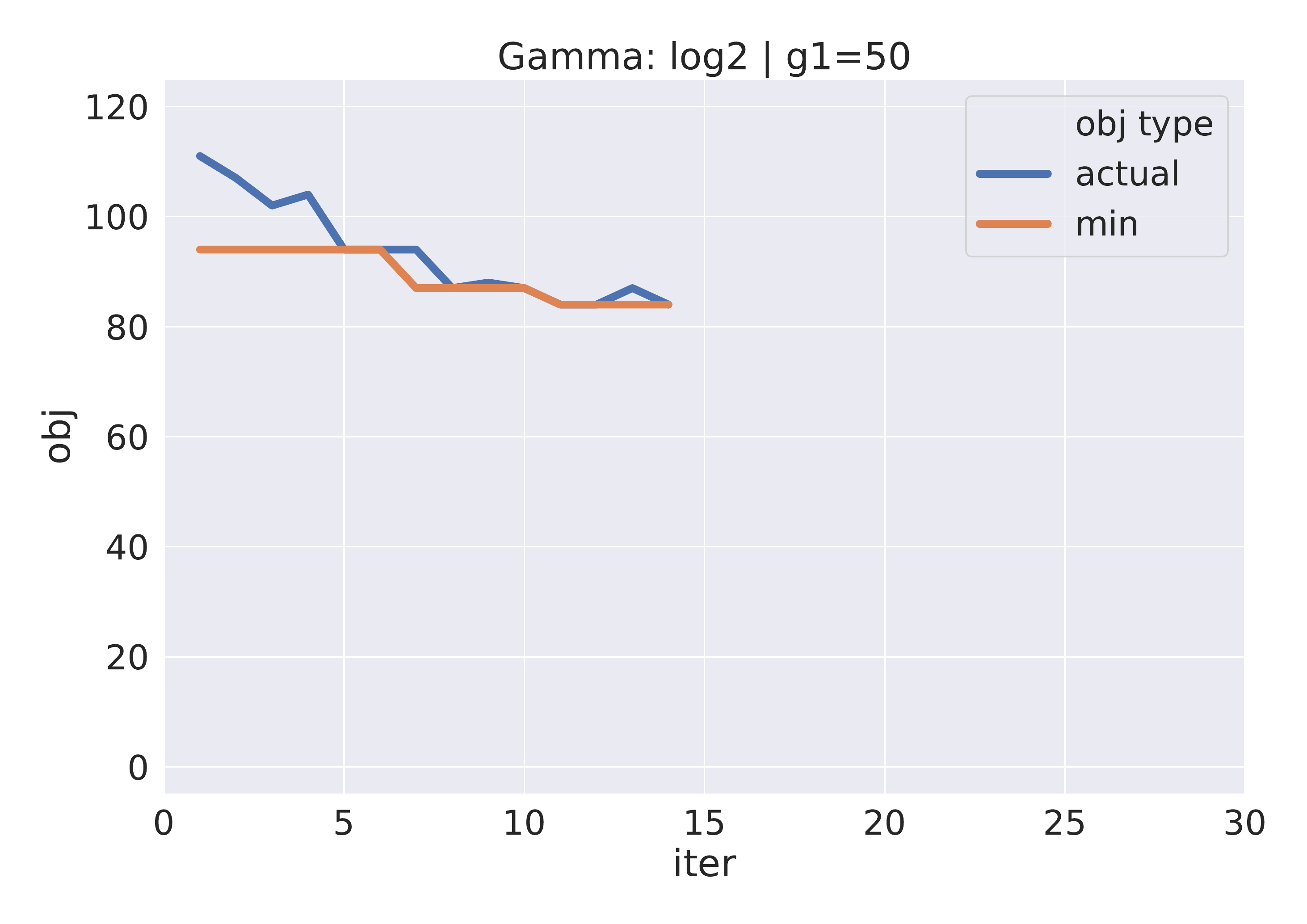}
	\includegraphics[width=0.49\textwidth]{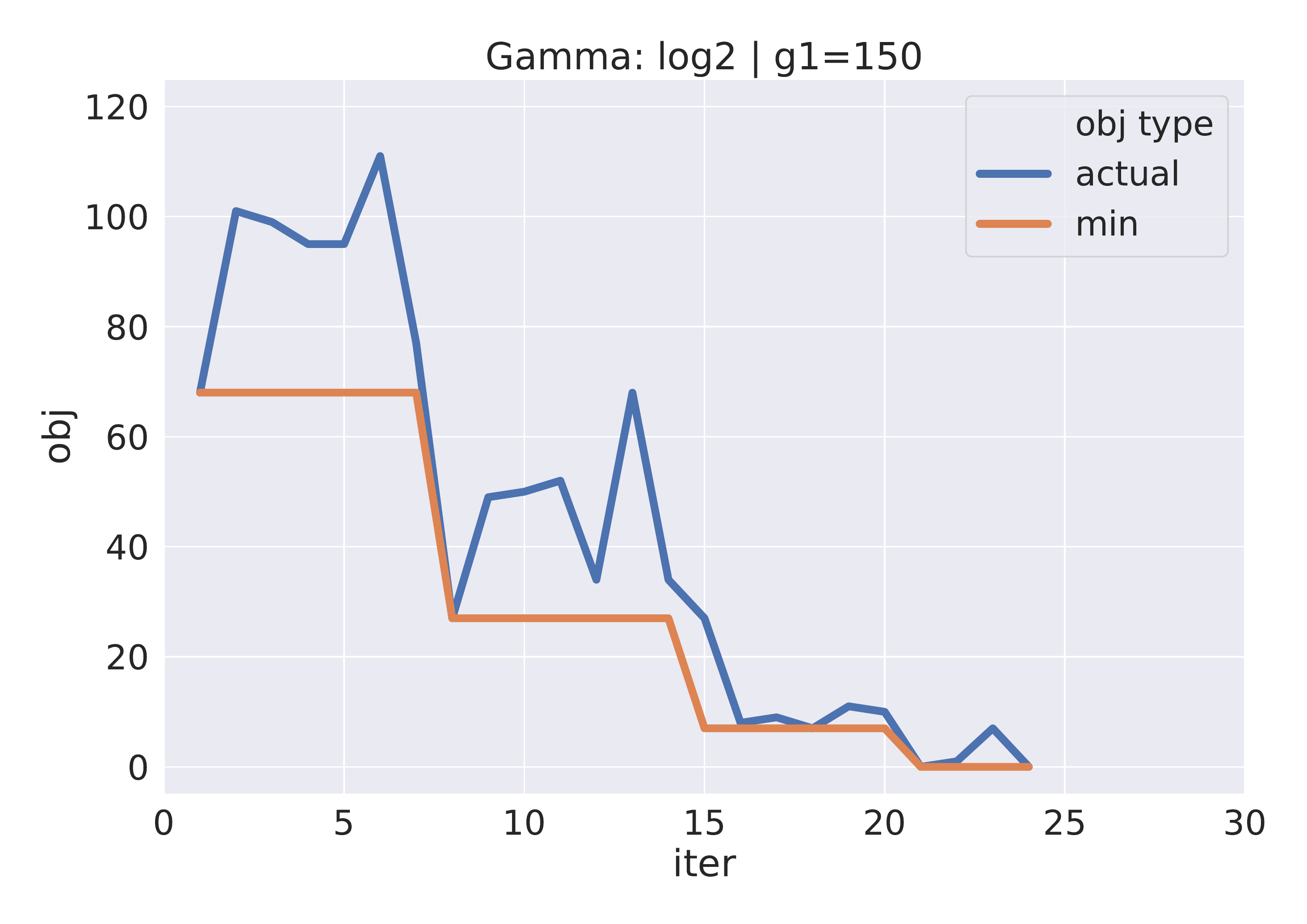}
	\includegraphics[width=0.49\textwidth]{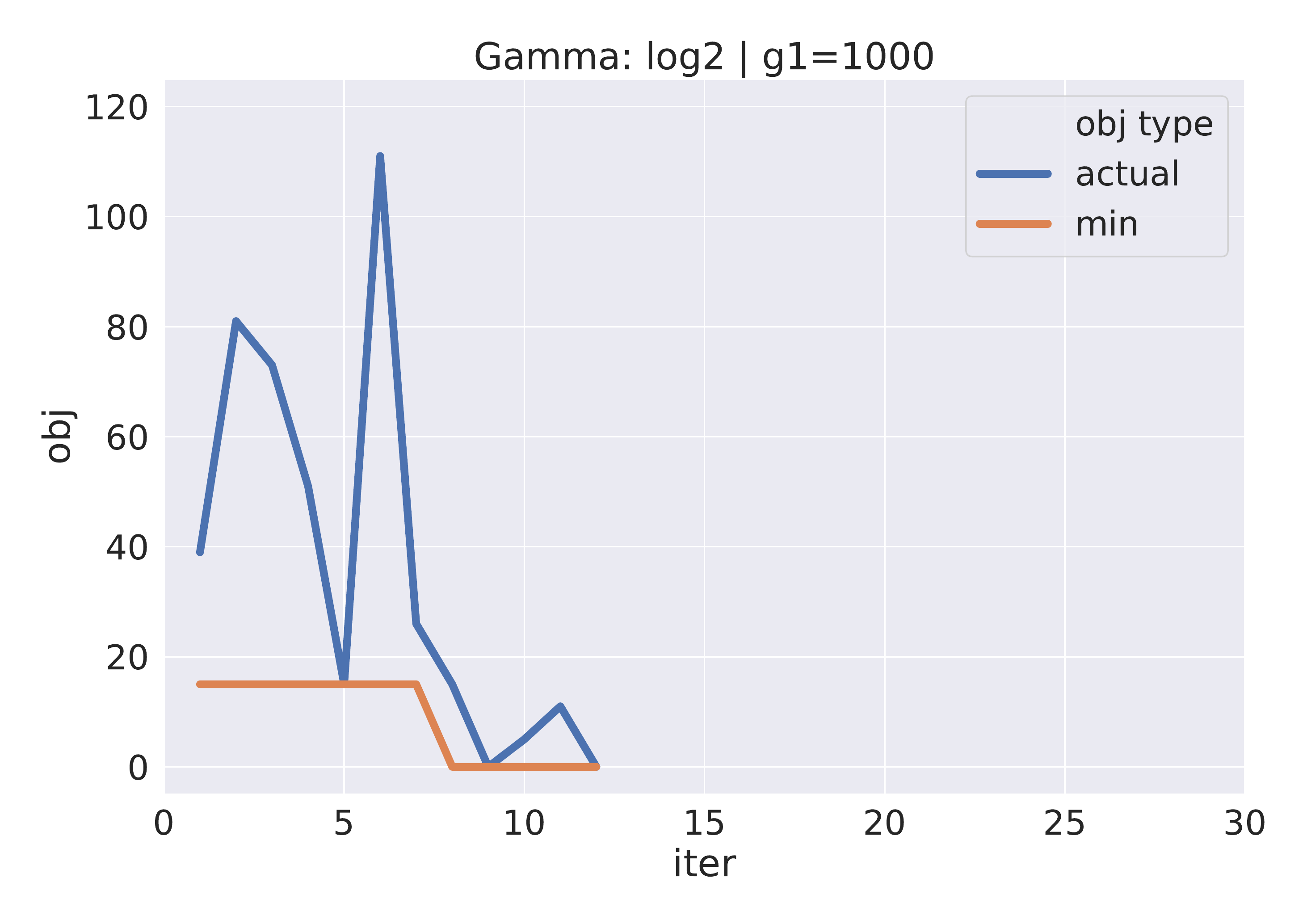}
	\caption{Augmentation strategy $\Gamma_{\text{2-apx}}$ on a $Q||C_{\max}$ instance. Blue line corresponds to inner loop values, orange line corresponds to steps actually made (outer loop). The number of iterations is measured in the inner loop (i.e., it is the number of~\eqref{AugIP} computations).\label{fig:sched_log2}}
\end{figure}

\begin{figure}[!h]
	\centering
	\includegraphics[width=0.49\textwidth]{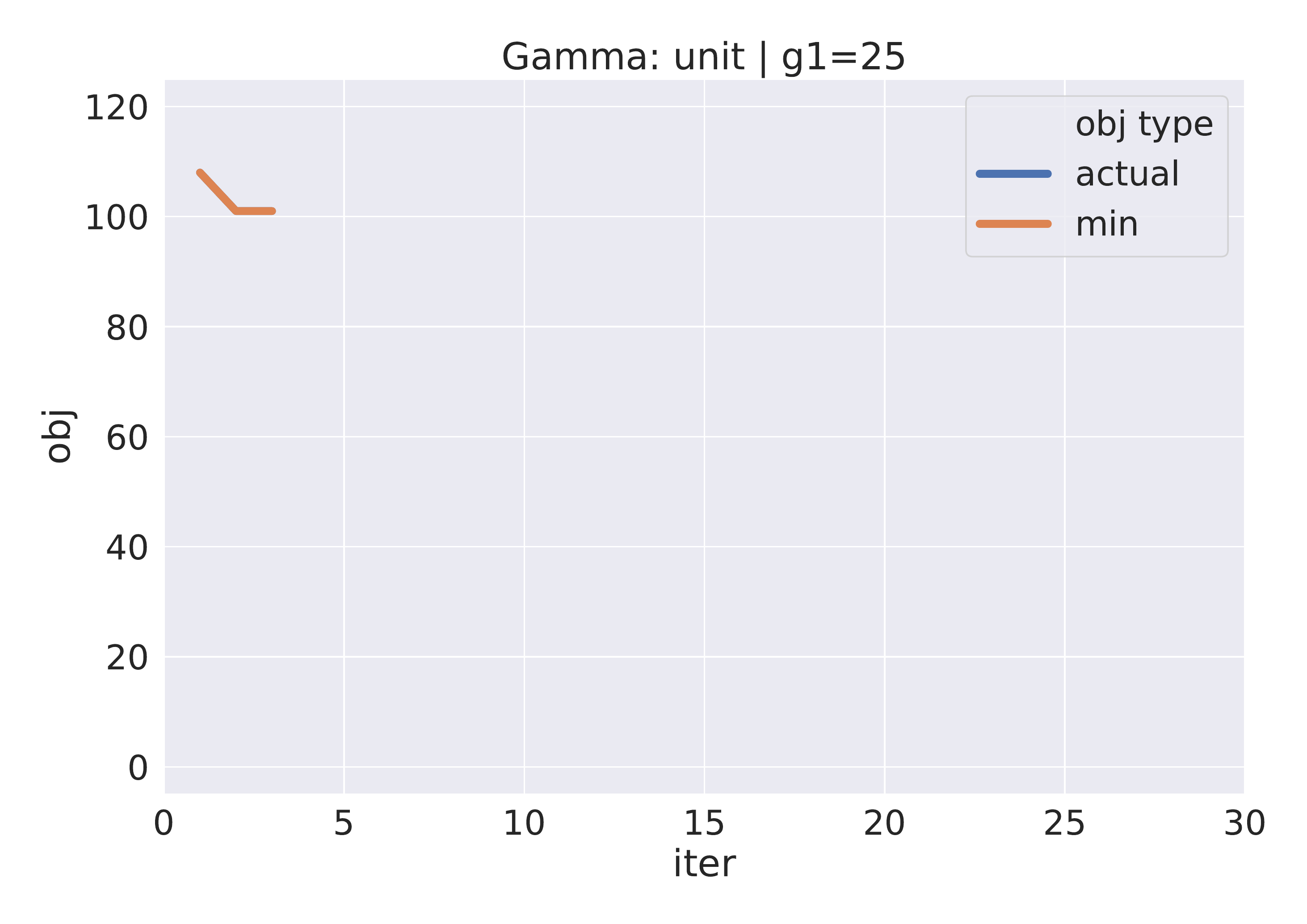}
	\includegraphics[width=0.49\textwidth]{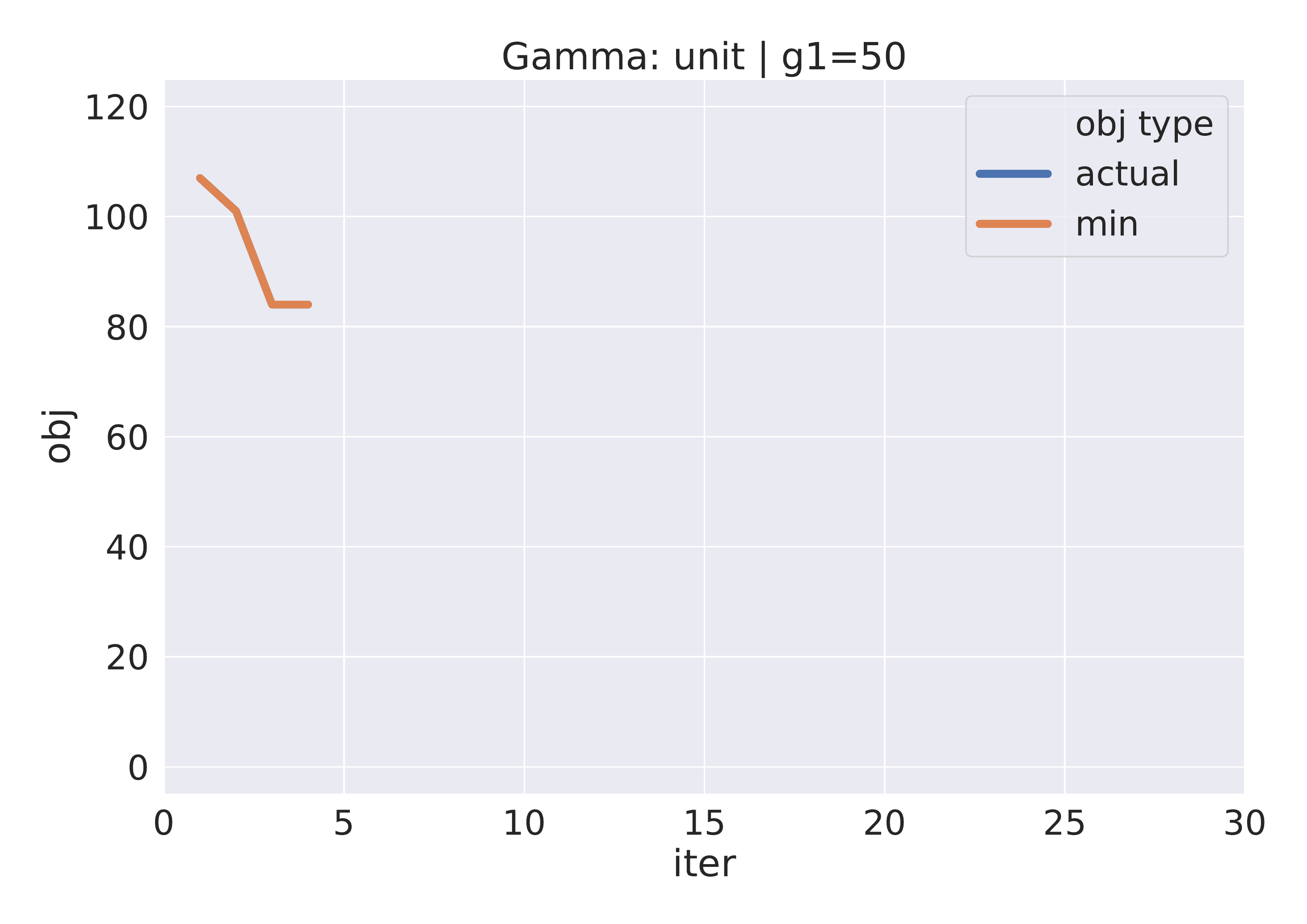}
	\includegraphics[width=0.49\textwidth]{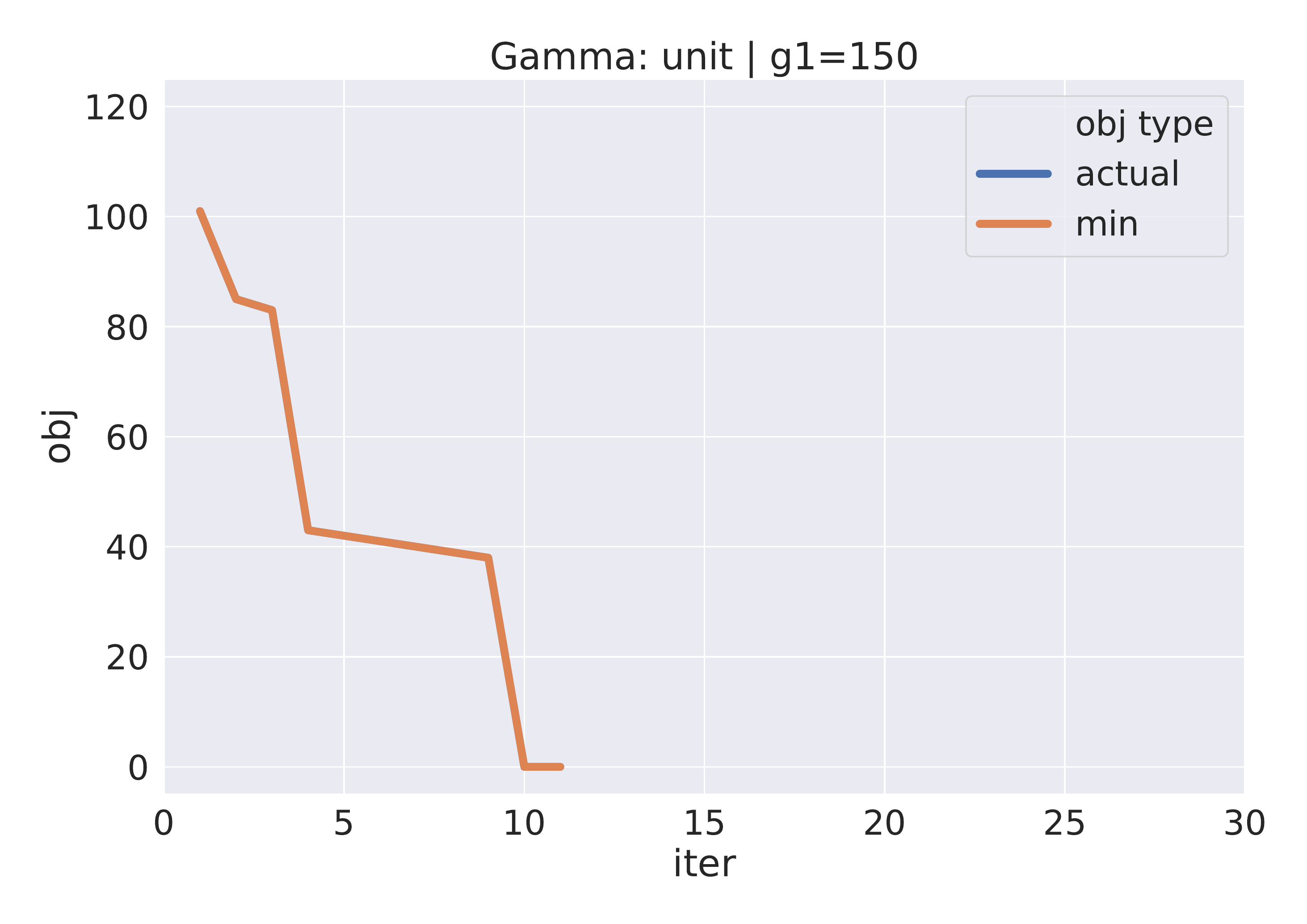}
	\includegraphics[width=0.49\textwidth]{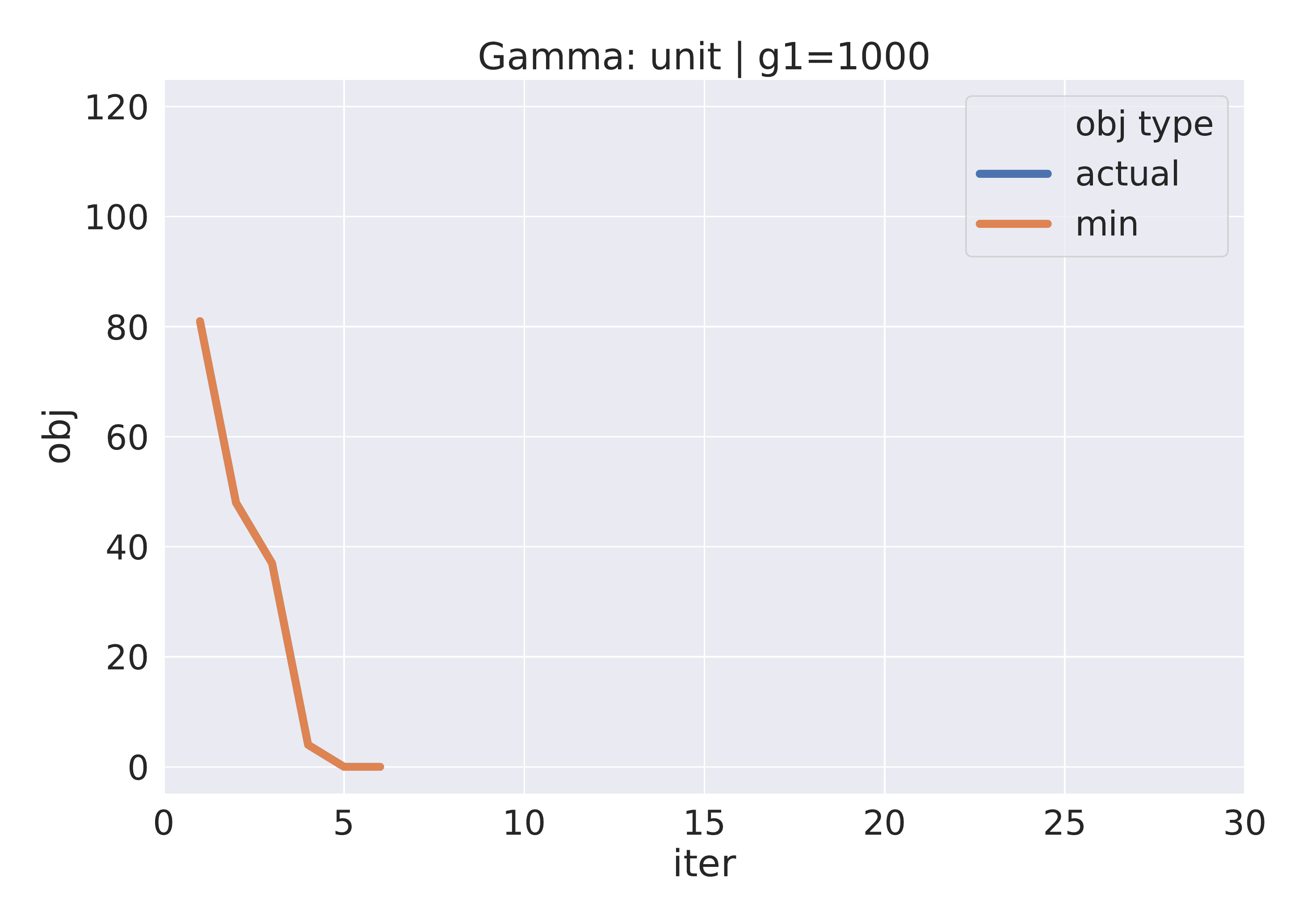}
	\caption{Augmentation strategy $\Gamma_{\text{any}}$ on a $Q||C_{\max}$ instance (for interpretation cf. Figure~\ref{fig:sched_log2}). \label{fig:sched_unit}}
\end{figure}

\begin{figure}[!h]
	\centering
	\includegraphics[width=0.49\textwidth]{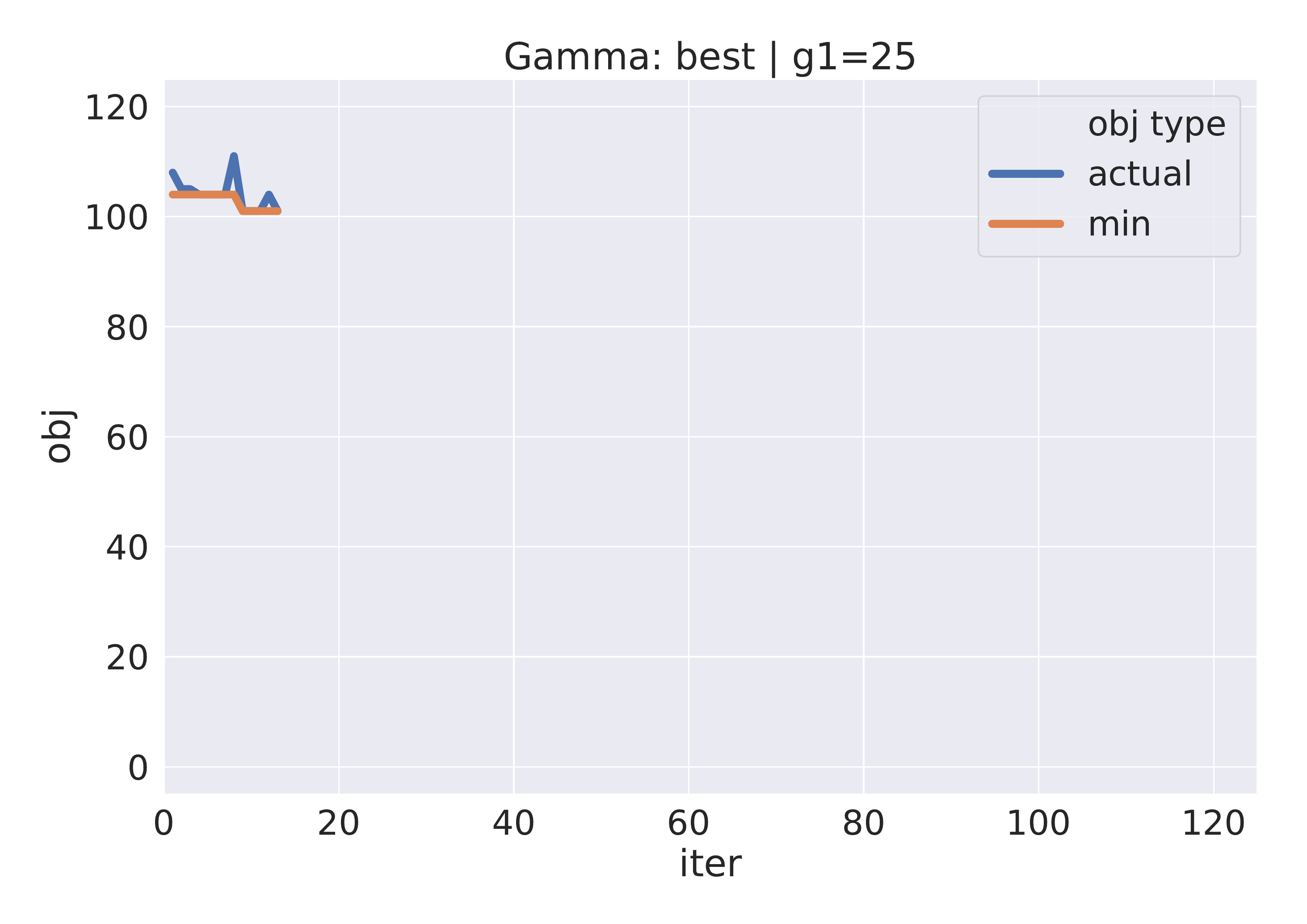}
	\includegraphics[width=0.49\textwidth]{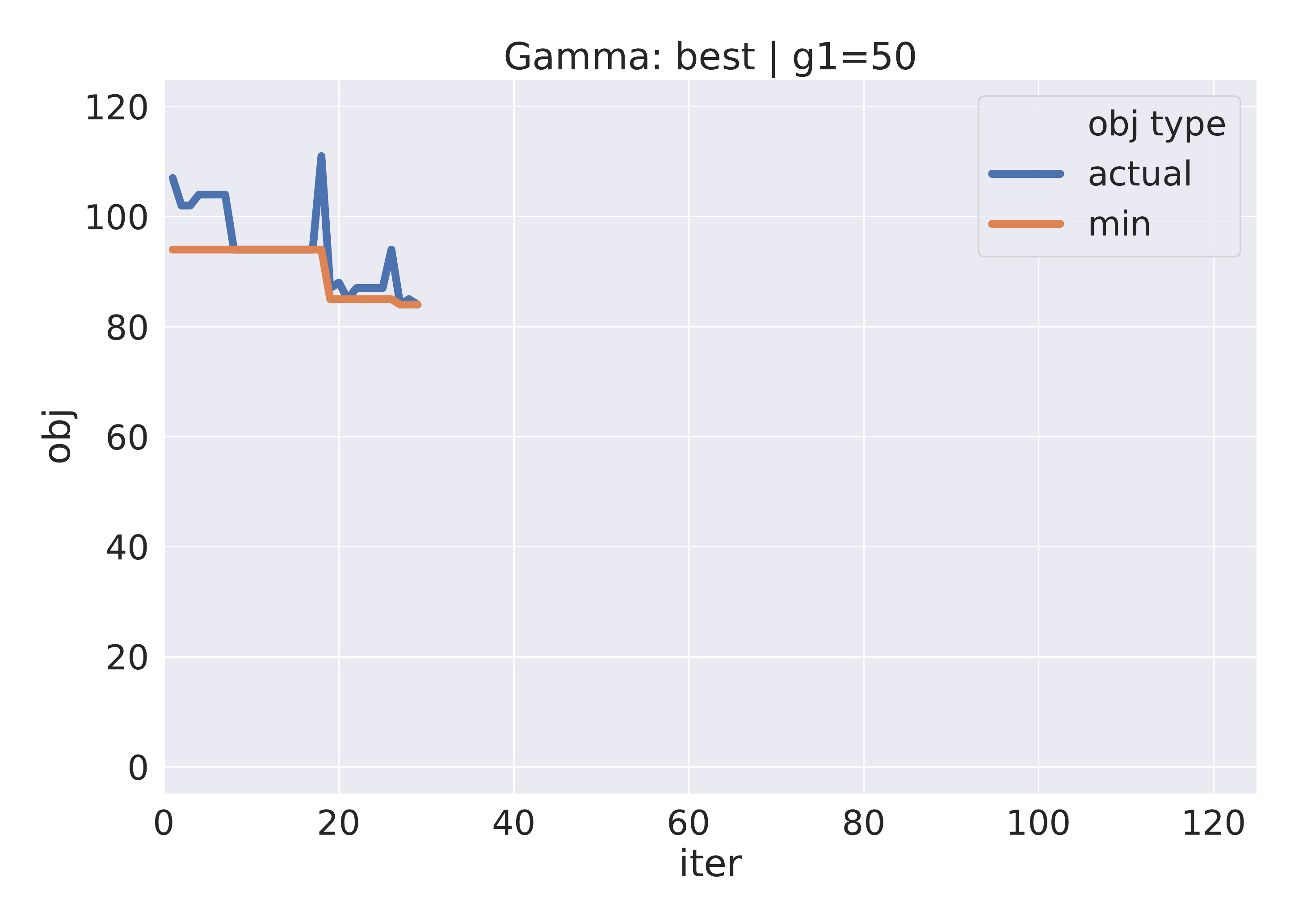}
	\includegraphics[width=0.49\textwidth]{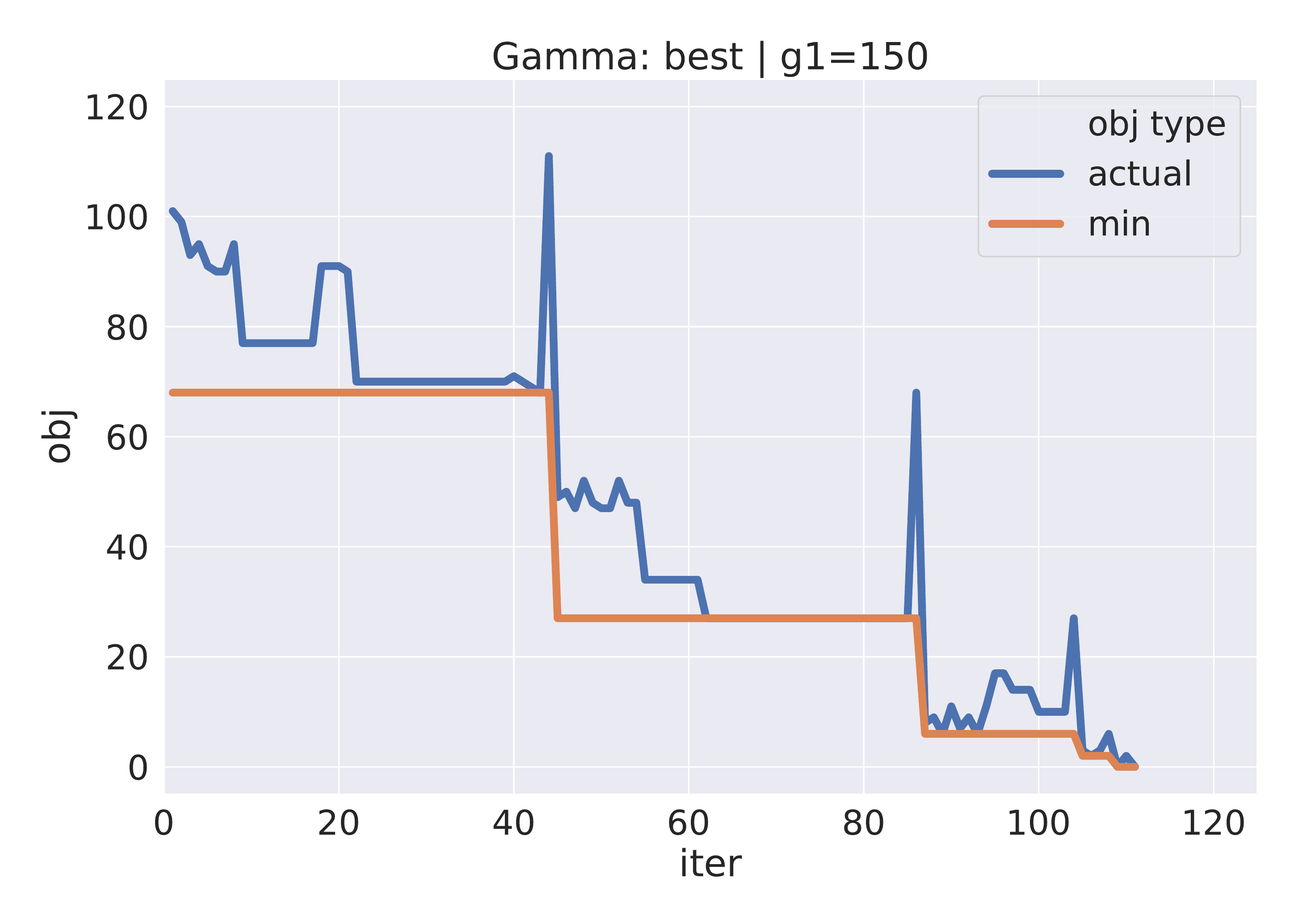}
	\includegraphics[width=0.49\textwidth]{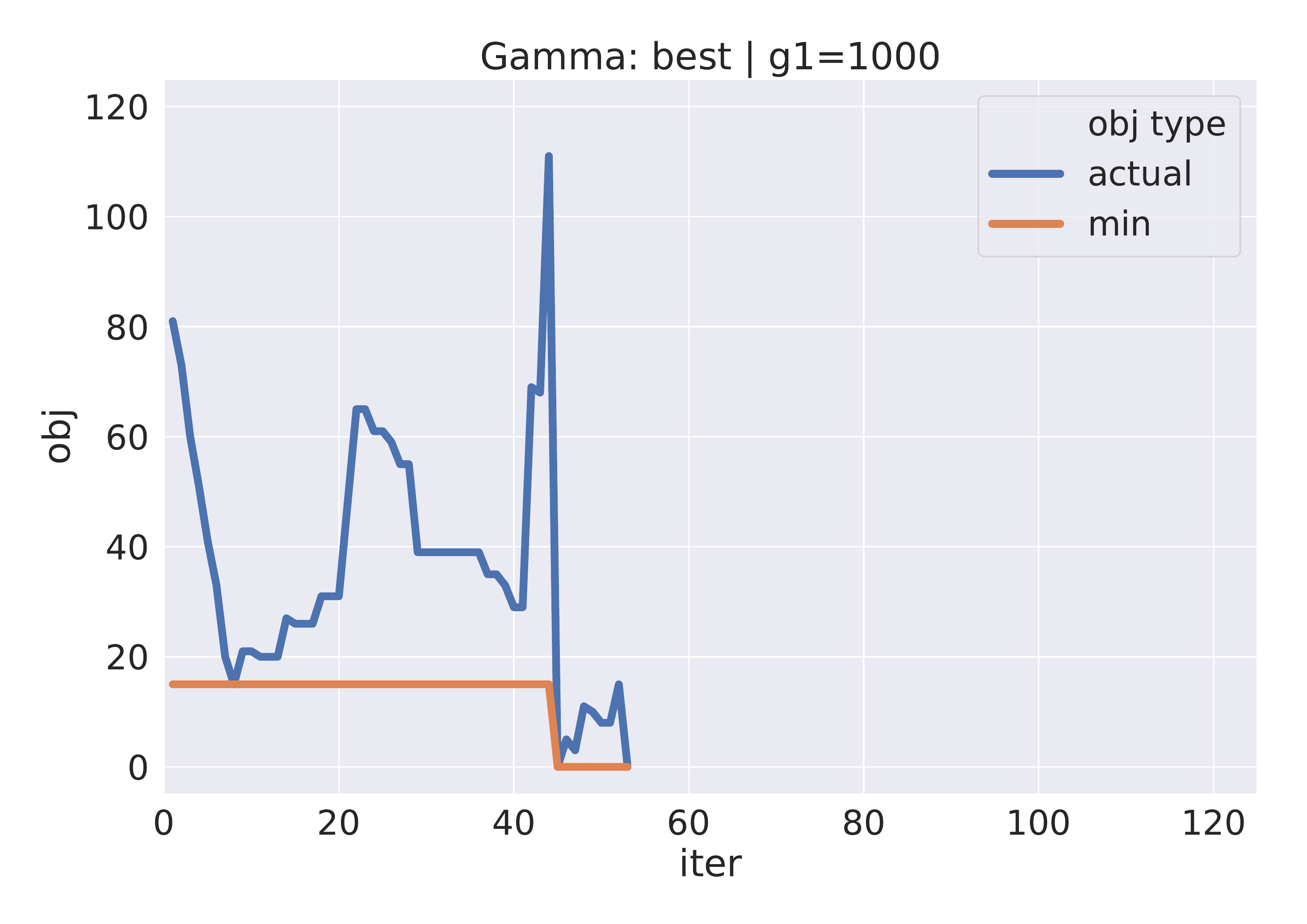}
	\caption{Augmentation strategy $\Gamma_{\text{best}}$ on a $Q||C_{\max}$ instance (for interpretation cf. Figure~\ref{fig:sched_log2}). \label{fig:sched_best}}
\end{figure}

%

The second type of plot (Figures~\ref{fig:sched_single} and~\ref{fig:cs_single}) is essentially obtained from the first type by considering all tested values of $\gc$ (not only the ``interesting'' values), discarding the thin (inner loop) lines, and stacking the remaining lines on top of each other, thus obtaining one line plot for each augmentation strategy $\Gamma$.

\begin{figure}[!h]
\centering
     \includegraphics[width=0.32\textwidth]{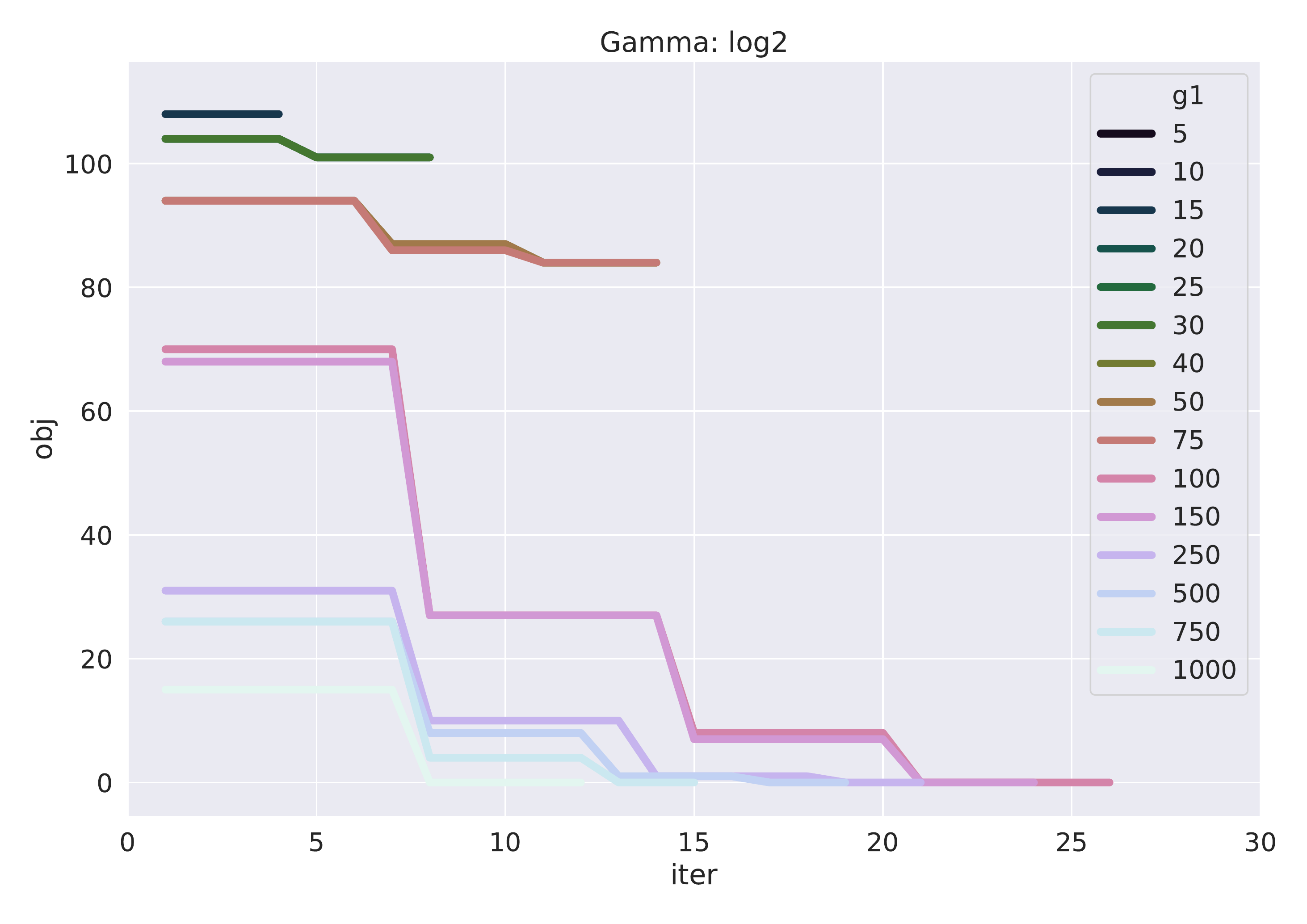}
     \includegraphics[width=0.32\textwidth]{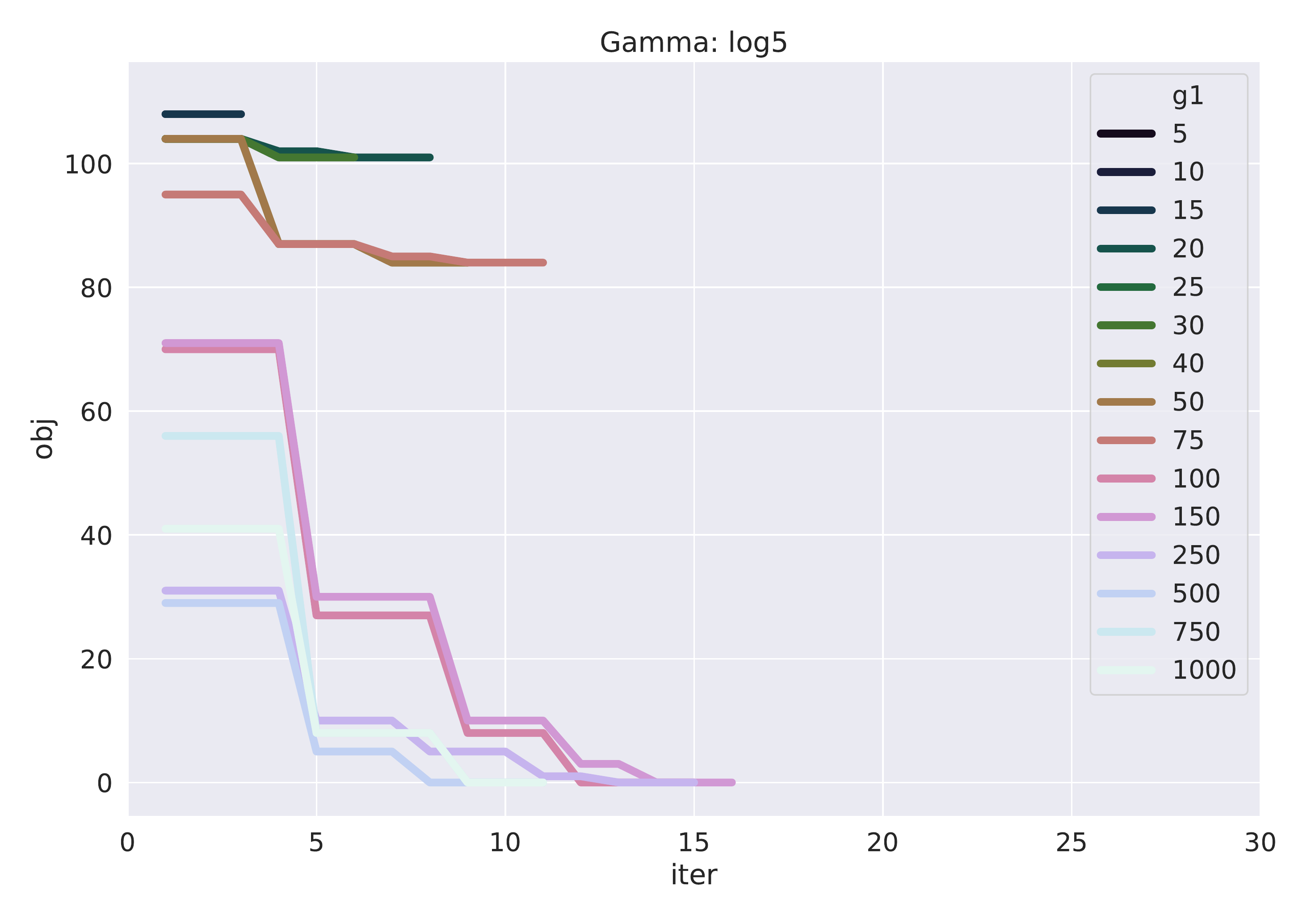}
     \includegraphics[width=0.33\textwidth]{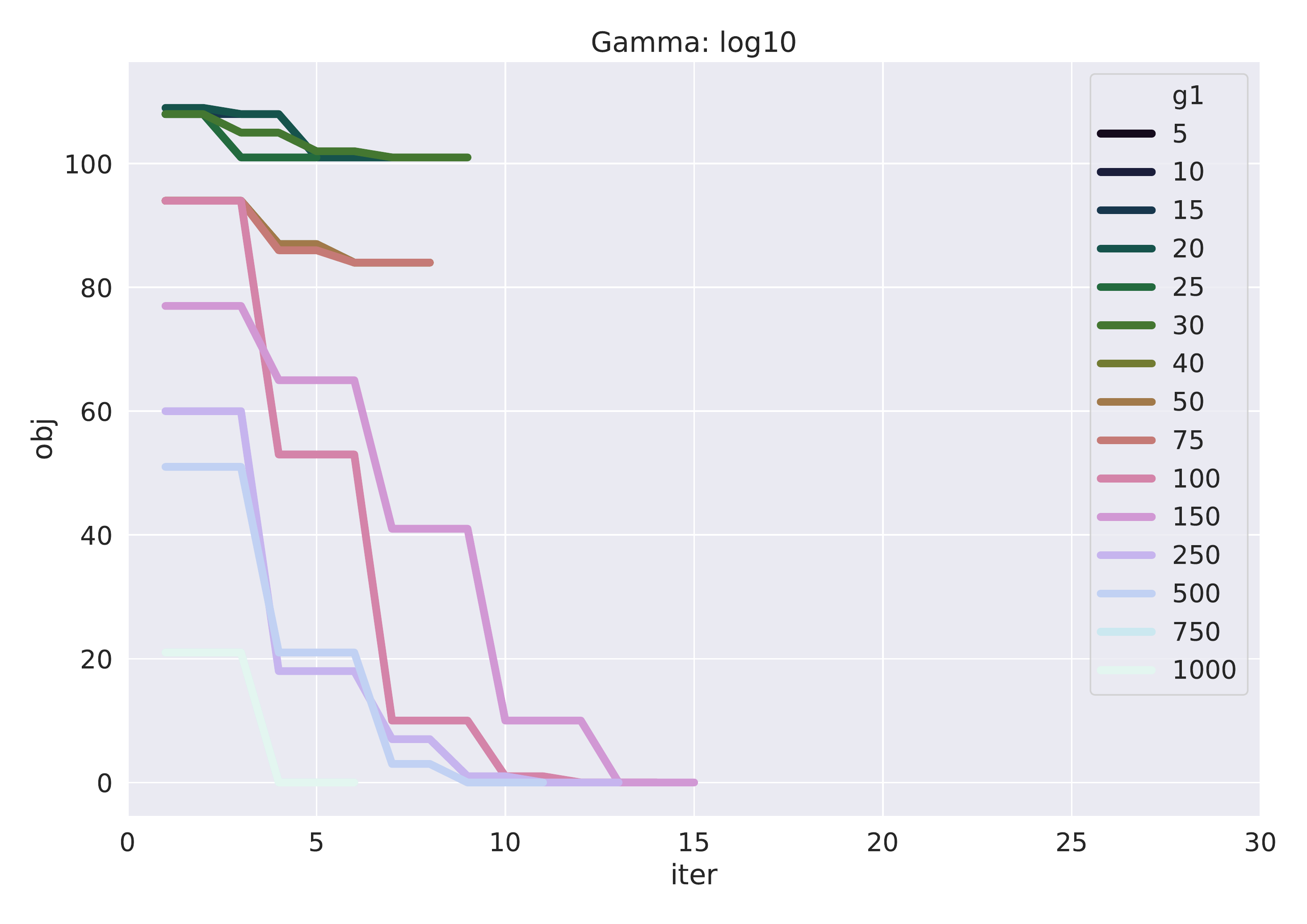}
     \includegraphics[width=0.49\textwidth]{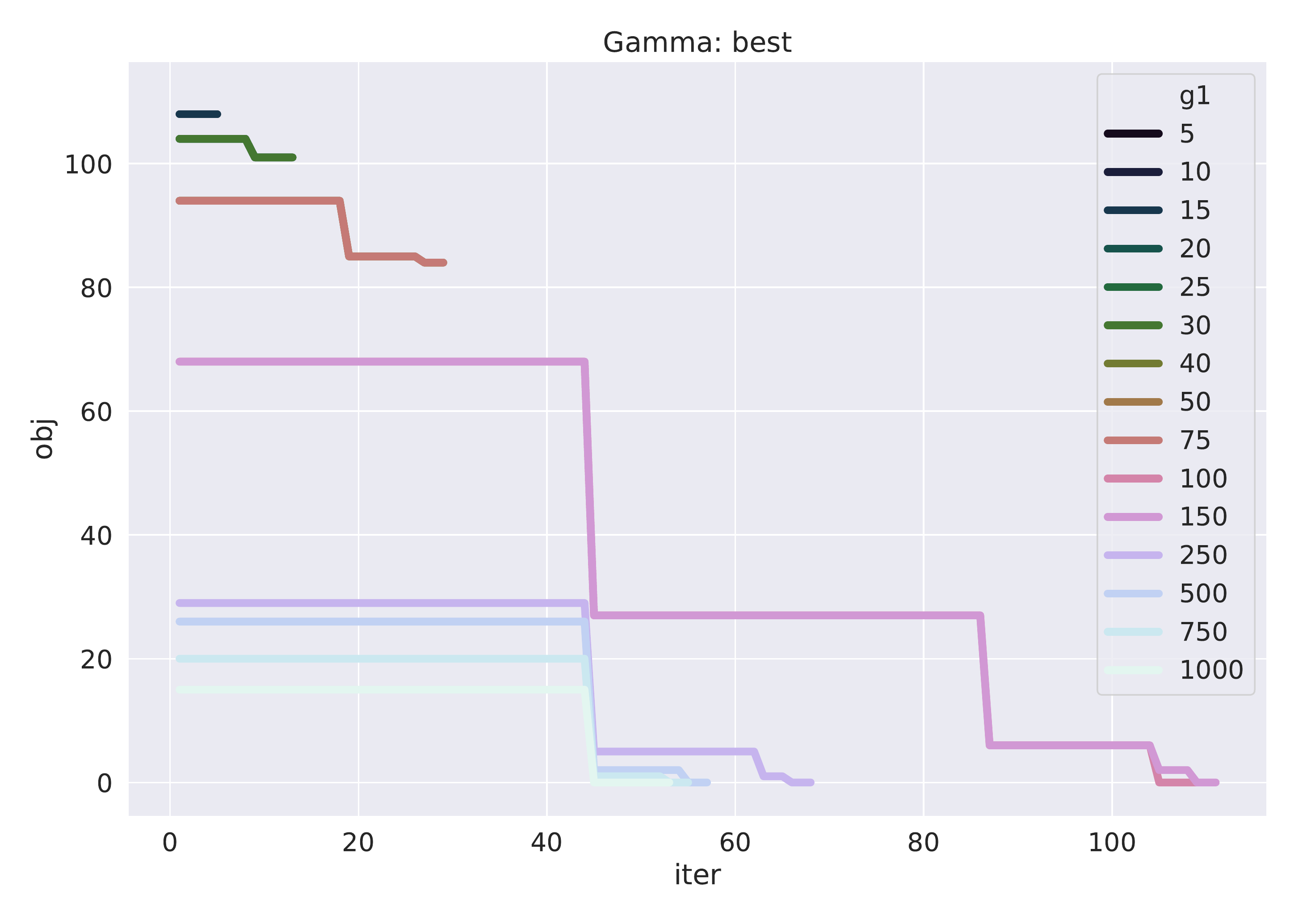}
     \includegraphics[width=0.49\textwidth]{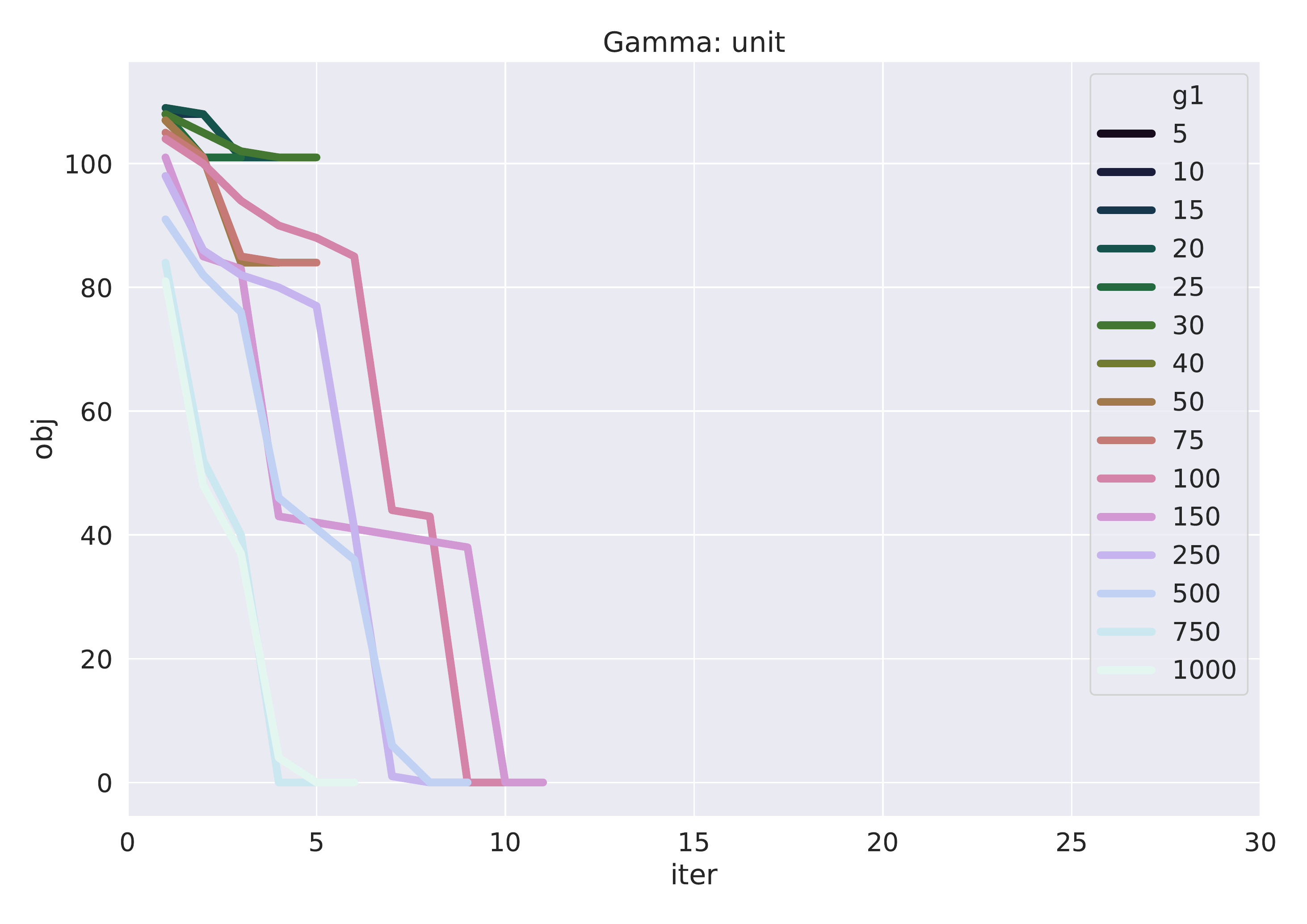}
  \caption{$Q||C_{\max}$, stacked plots, left-to-right $\Gamma_{\text{2-apx}}$, $\Gamma_{\text{5-apx}}$, $\Gamma_{\text{10-apx}}$, $\Gamma_{\text{best}}$and $\Gamma_{\text{any}}$.
  \label{fig:sched_single}
  }
\end{figure}

\begin{figure}[!h]
\centering
     \includegraphics[width=0.32\textwidth]{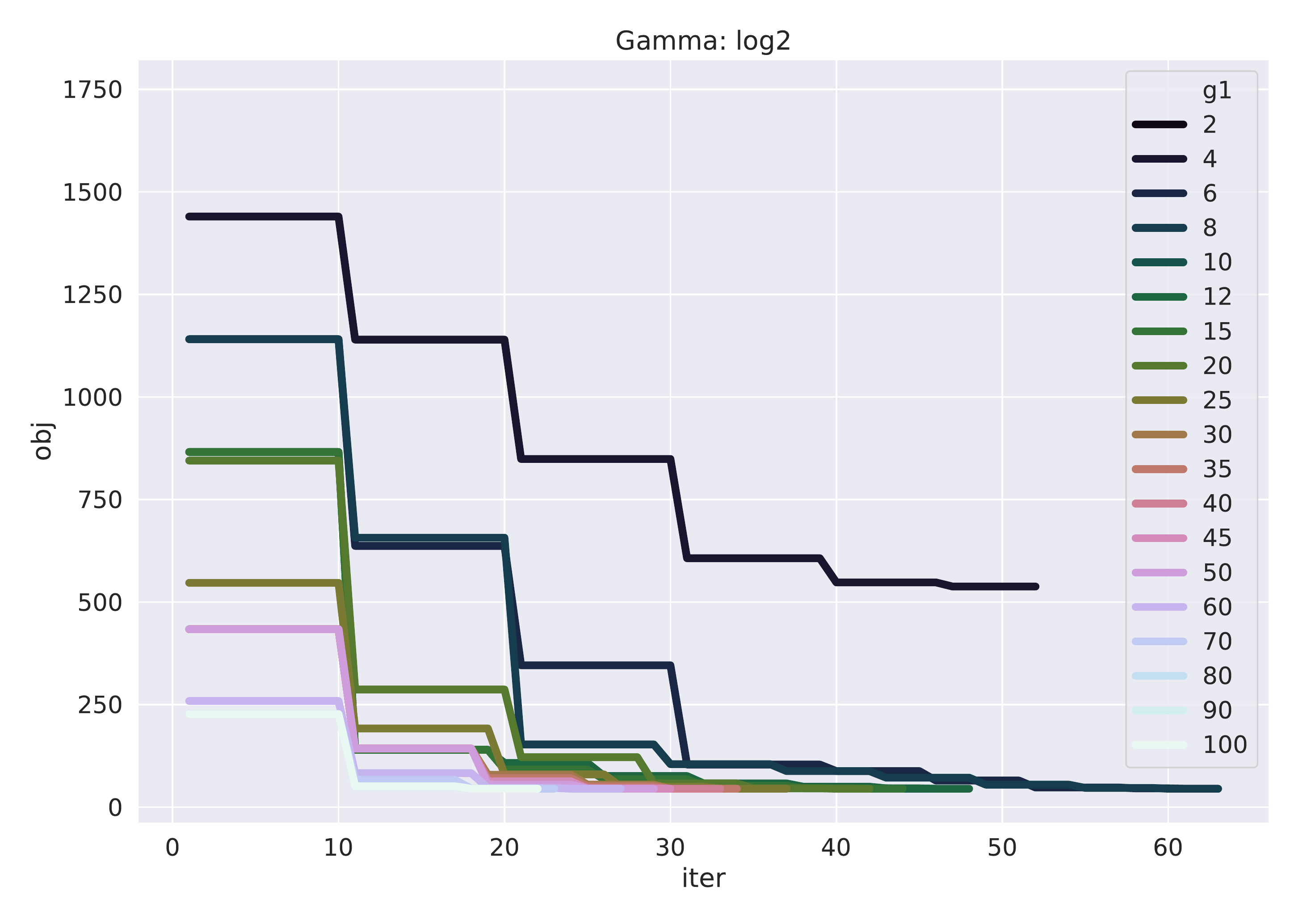}
     \includegraphics[width=0.32\textwidth]{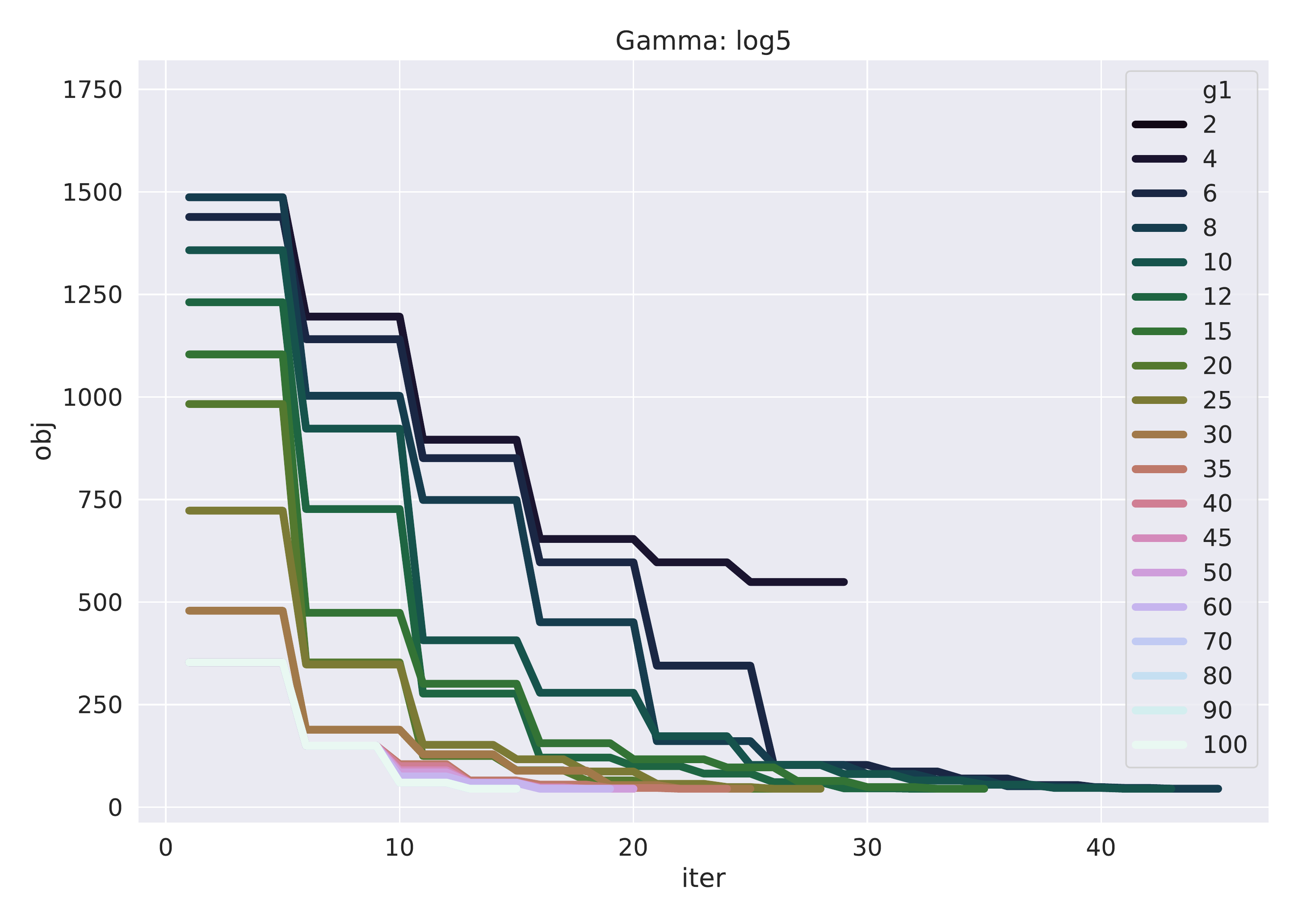}
     \includegraphics[width=0.33\textwidth]{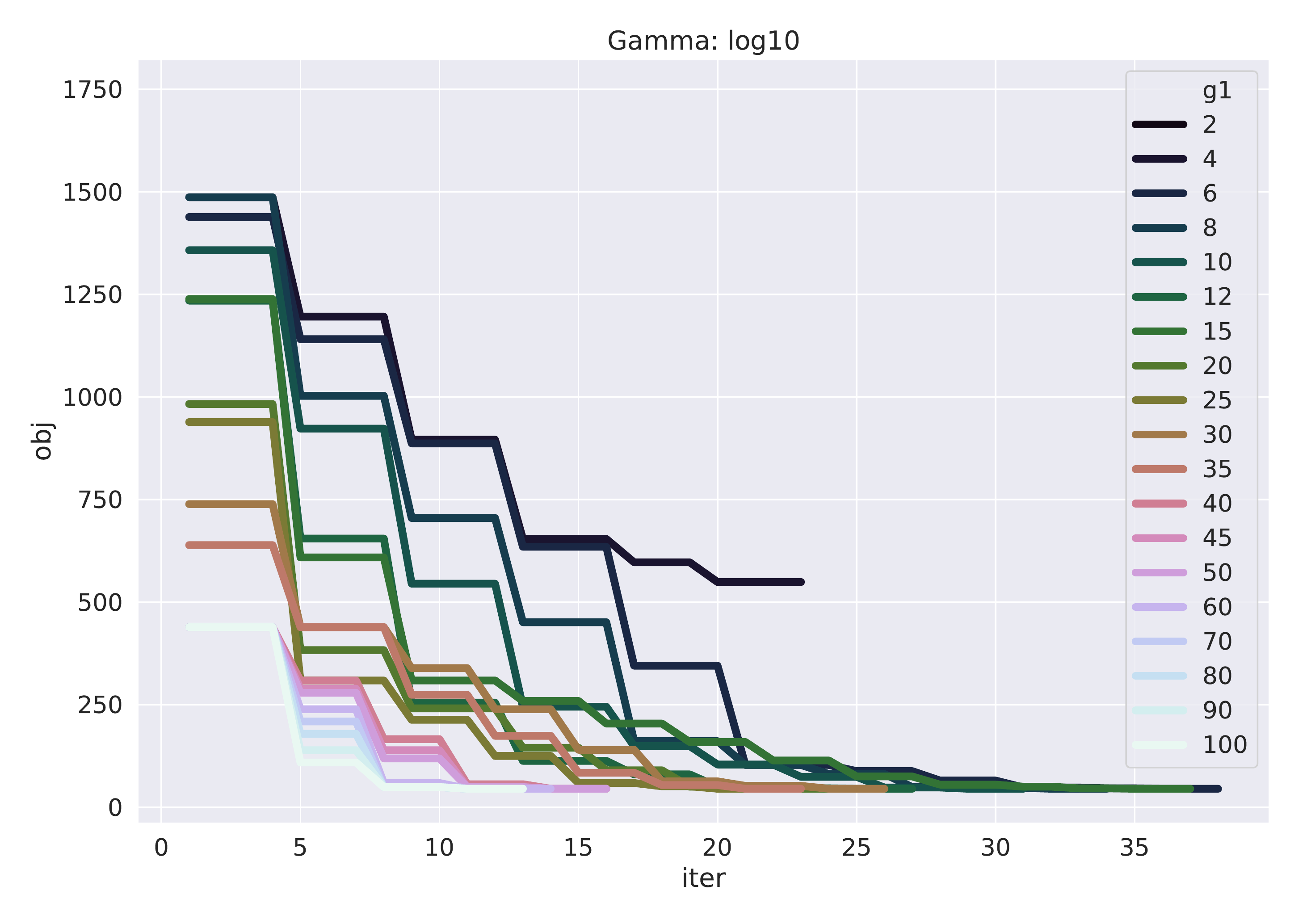}
     \includegraphics[width=0.49\textwidth]{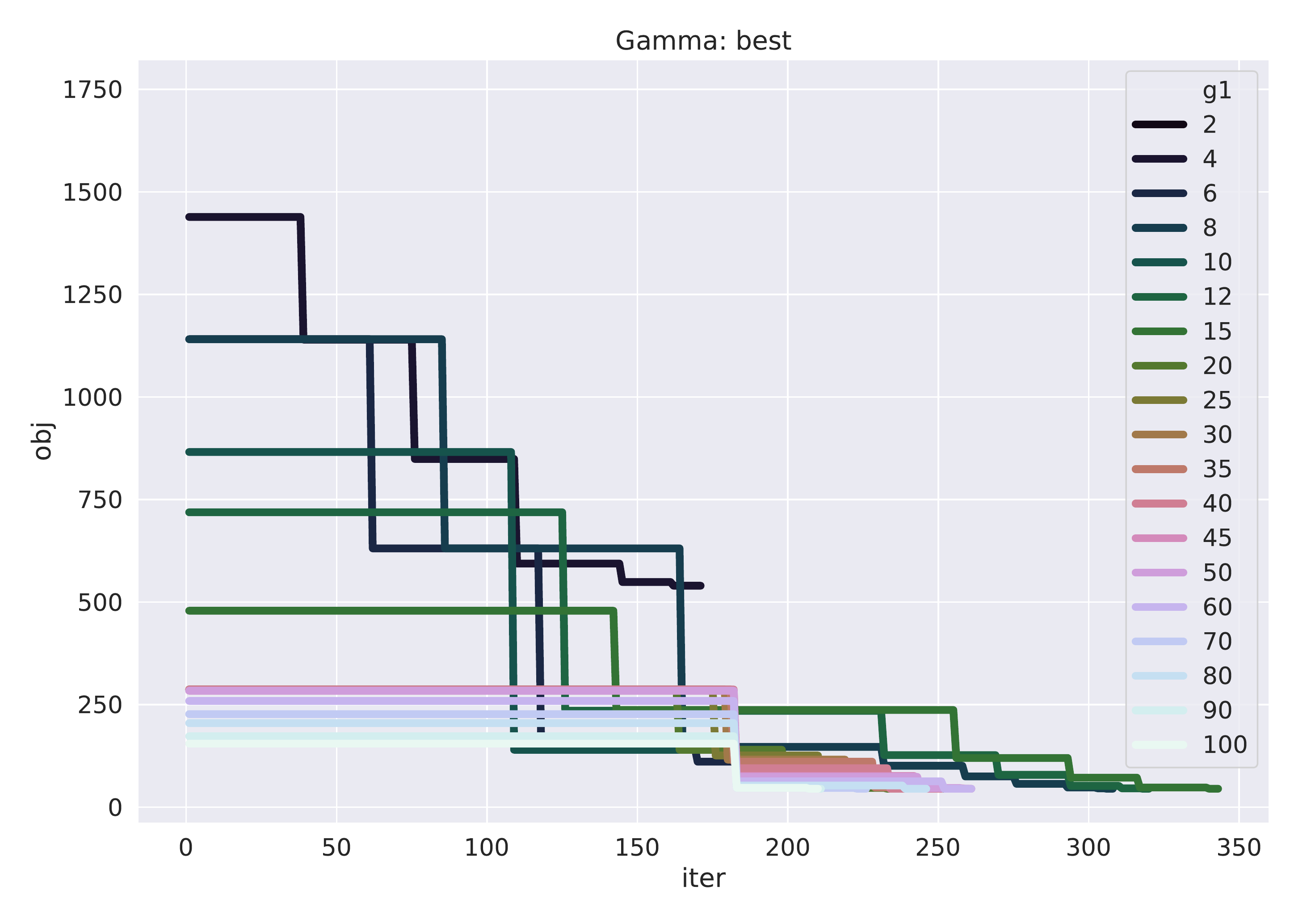}
     \includegraphics[width=0.49\textwidth]{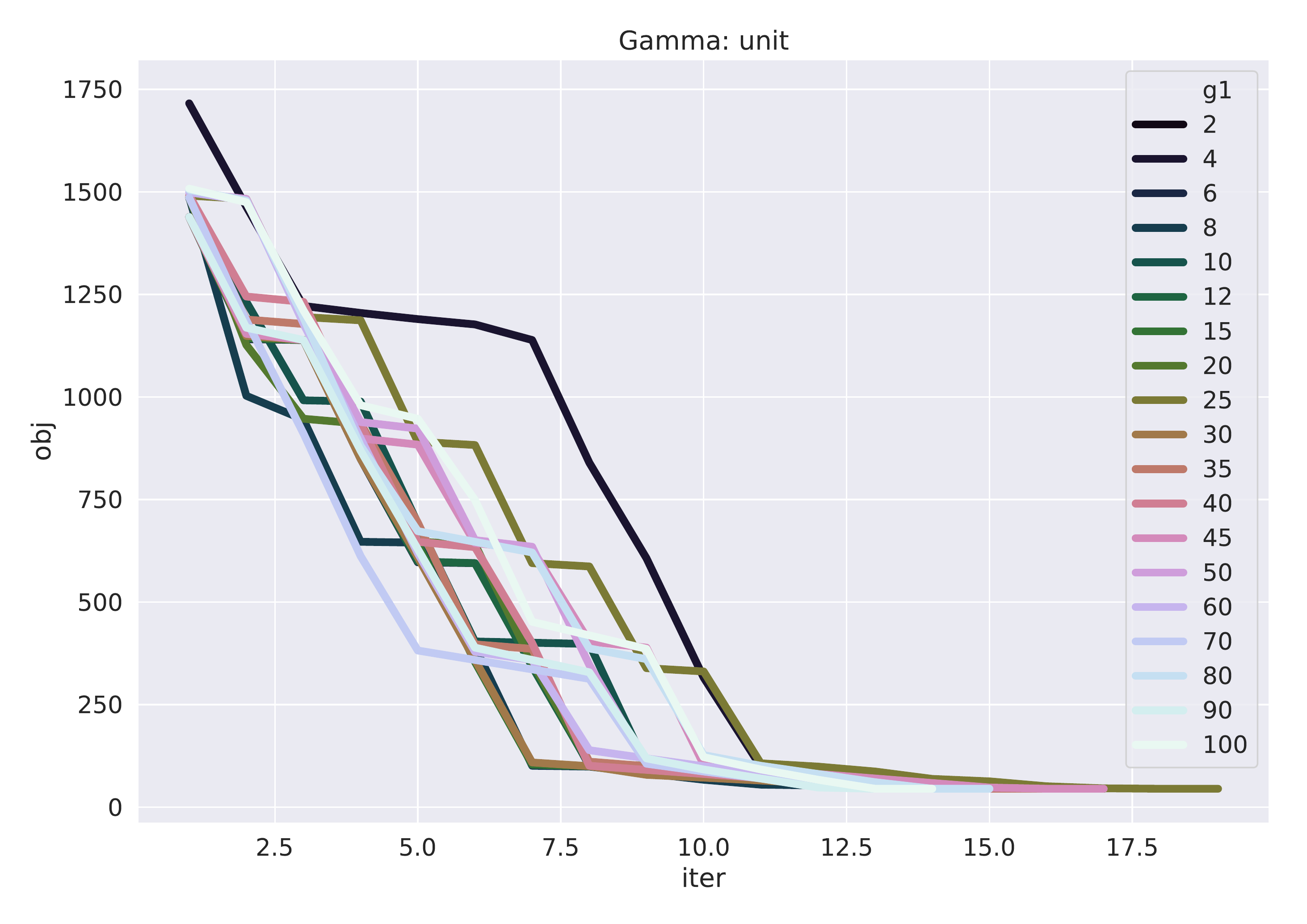}
  \caption{\textsc{Closest String}, stacked plots, left-to-right $\Gamma_{\text{2-apx}}$, $\Gamma_{\text{5-apx}}$, $\Gamma_{\text{10-apx}}$, $\Gamma_{\text{best}}$and $\Gamma_{\text{any}}$.
  \label{fig:cs_single}
  }
\end{figure}

\subsubsection*{Conclusions}
Our main takeaway regarding Question \#1 is that, while the theoretical upper bounds for $g_1(A)$ are huge, already small values of $\gc$ ($\gc >10$ for \textsc{Closest String} and $\gc > 150$ for \textsc{Makespan Minimization}) are sufficient for convergence to global optima.
We remark that, in the case of \textsc{Closest String}, this hints at the possibility that the maximum value of any \emph{feasible} augmenting step $\veg \in \G(A)$ is bounded by $k^{\Oh(1)}$ rather than $k^{\Oh(k)}$, which would imply an algorithm with runtime $k^{\Oh(k)} \log L$ while the currently best algorithm runs in time $k^{\Oh(k^2)} \log L$~\cite{KnopKM:2017esa}.

Regarding Question \#2, we see that $\Gamma_{\text{2-apx}}$ converges in a similar way as $\Gamma_{\text{best}}$ but is orders of magnitude cheaper to compute.
The ``any step'' augmentation strategy $\Gamma_{\text{any}}$ usually converges surprisingly quickly, but our results make it clear that its behavior is erratic and unpredictable.
Specifically, with augmentation strategies such as $\Gamma_{\text{2-apx}}$, increasing the parameter $\gc$ reliably leads to faster convergence, while for $\Gamma_{\text{any}}$ this is not the case.
Consequently, beyond some value of $\gc$ strategies such as $\Gamma_{\text{2-apx}}$ outperform $\Gamma_{\text{any}}$ in absolute numbers of iterations.

The detailed Figures~\ref{fig:sched_log2}-\ref{fig:sched_best} reveal that the step which is eventually taken is often found for relatively larger step-lengths $\lambda$; this explains why $\Gamma_{\text{5-apx}}$ outperforms $\Gamma_{\text{2-apx}}$ and is typically outperformed by $\Gamma_{\text{10-apx}}$, as $\Gamma_{c\text{-apx}}$ spends less time on short step-lengths with increasing $c$.

\subsection{Quantitative Evaluation}
In the second part of our evaluation, we relate several instance parameters to the two selected performance parameters.
The instance parameters of our interest are
\begin{itemize}
\item dimension $Nt$,
\item largest coefficient $\Delta$,
\item number of columns of the $E_1$ block, that is, $t$,
\item number of rows of the $E_1$ block, that is, $r$,
\item number of bricks $N$.
\end{itemize}
As we have noted for both problems, the matrix $E_2$ has only one row, so the parameter $s$ is always~$1$.
Moreover, for \textsc{Closest String} the parameters dimension, $N$, $t$, and $r$ are closely related, as the dimension is $Nt$ with $N=t^2$ and $t \leq r^r$.
To simplify matters, from now on we ran all tests with augmentation strategy $\Gamma_{\text{2-apx}}$.

Regarding performance parameters, we wish to study the \emph{optimality gap} which is simply the difference between the optimum obtained by the algorithm and the exact optimum.
Moreover, we wish to quantify the notion of a ``convergence rate'' in a normalized way to allow comparison across instances.
To this end, fix an instance and denote by $\operatorname{it}(\gc)$ the number of (inner) iterations taken by the algorithm to reach the optimum (and $+\infty$ if optimality gap is positive), let $\operatorname{it}_{\min} = \min_{\gc} \operatorname{it}(\gc)$, and finally let the \emph{convergence rate} be $c(\gc) = \operatorname{it}_{\min} / \operatorname{it}(\gc)$.
Thus $0 \leq c(\gc) \leq 1$ with $c(\gc) = 0$ if setting the tuning parameter to value $\gc$ does not make the algorithm find the optimum, and with larger values corresponding to faster convergence.

The testing batches were generated with the following parameters:
\begin{itemize}
\item For $Q||C_{\max}$, the command line was \texttt{`./nfold\_sched\_tester.sage --logdir 31012019 --machines 10 20 30 40 50 60 --count\_for\_each\_p 1 --slacks 0.6 0.7 --p\_s 6 7 8 9 10 11 12 13 14 15 16 17 18 19 20 --number\_job\_types 4 5 6 --gammas log2 --gc 5 10 15 20 25 30 40 50 75 100 150 250 500 750 1000`}.
\item For \textsc{Closest String}, the command line was \texttt{./nfold\_sched\_tester.sage --instance\_type cs --logdir cs\_test --milp\_timelimit 300 --augip\_timelimit 300} (i.e., all relevant parameters left to defaults).
\end{itemize}

\subsubsection*{Plots}
We visualize the relationships as follows: each plot in Figures~\ref{fig:sched_hmaps} and~\ref{fig:cs_hmaps} is a heatmap whose columns are increasing values of $\gc$, rows are increasing values of $\Delta$ or dimension (for $Q||C_{\max}$), and cells are values of optimality gap or convergence rate.
The color scheme is such that darker shades correspond to worse behavior, be it larger optimality gap or smaller convergence rate.

\begin{figure}[!h]
\centering
\includegraphics[width=0.49\textwidth]{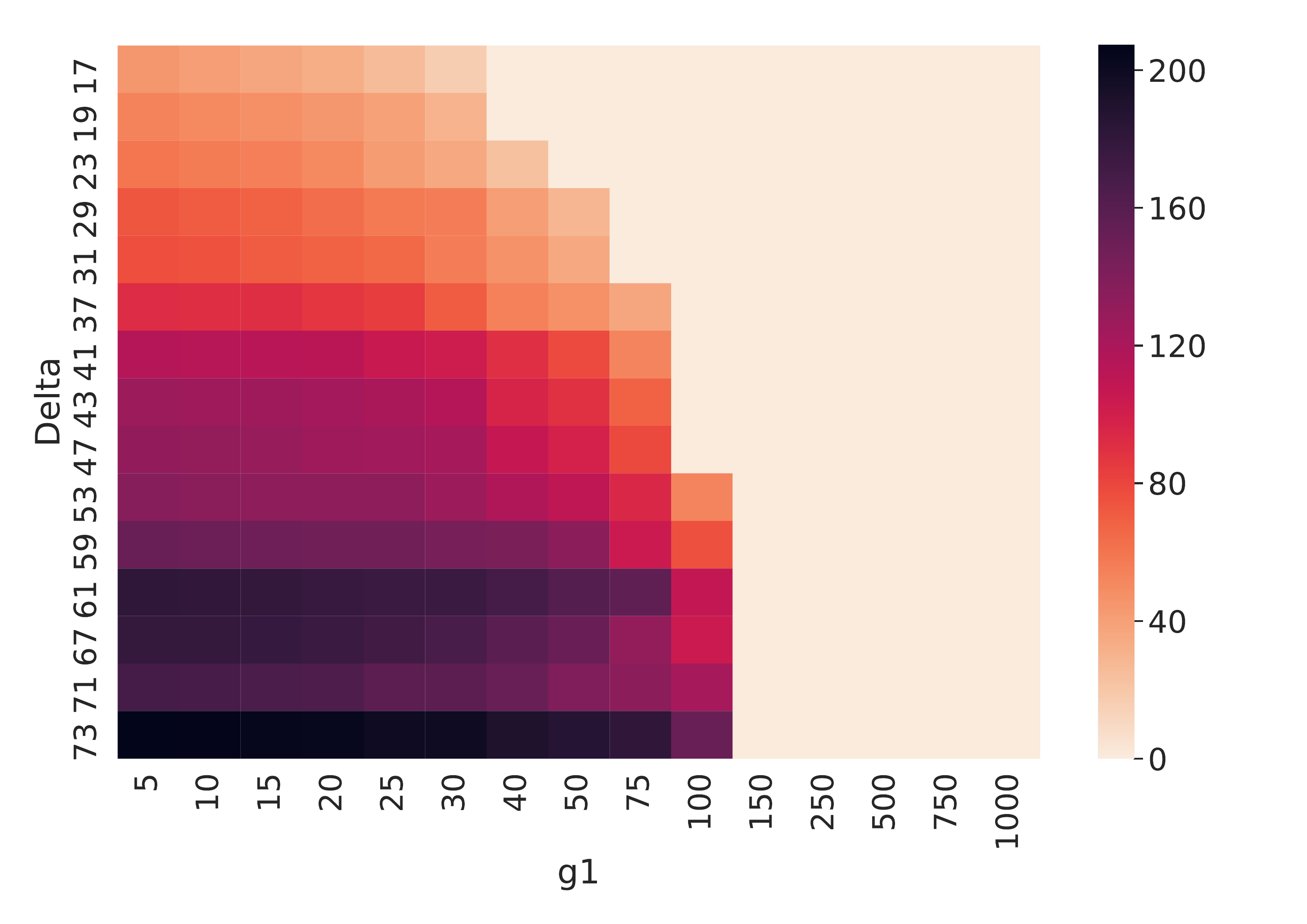}
\includegraphics[width=0.49\textwidth]{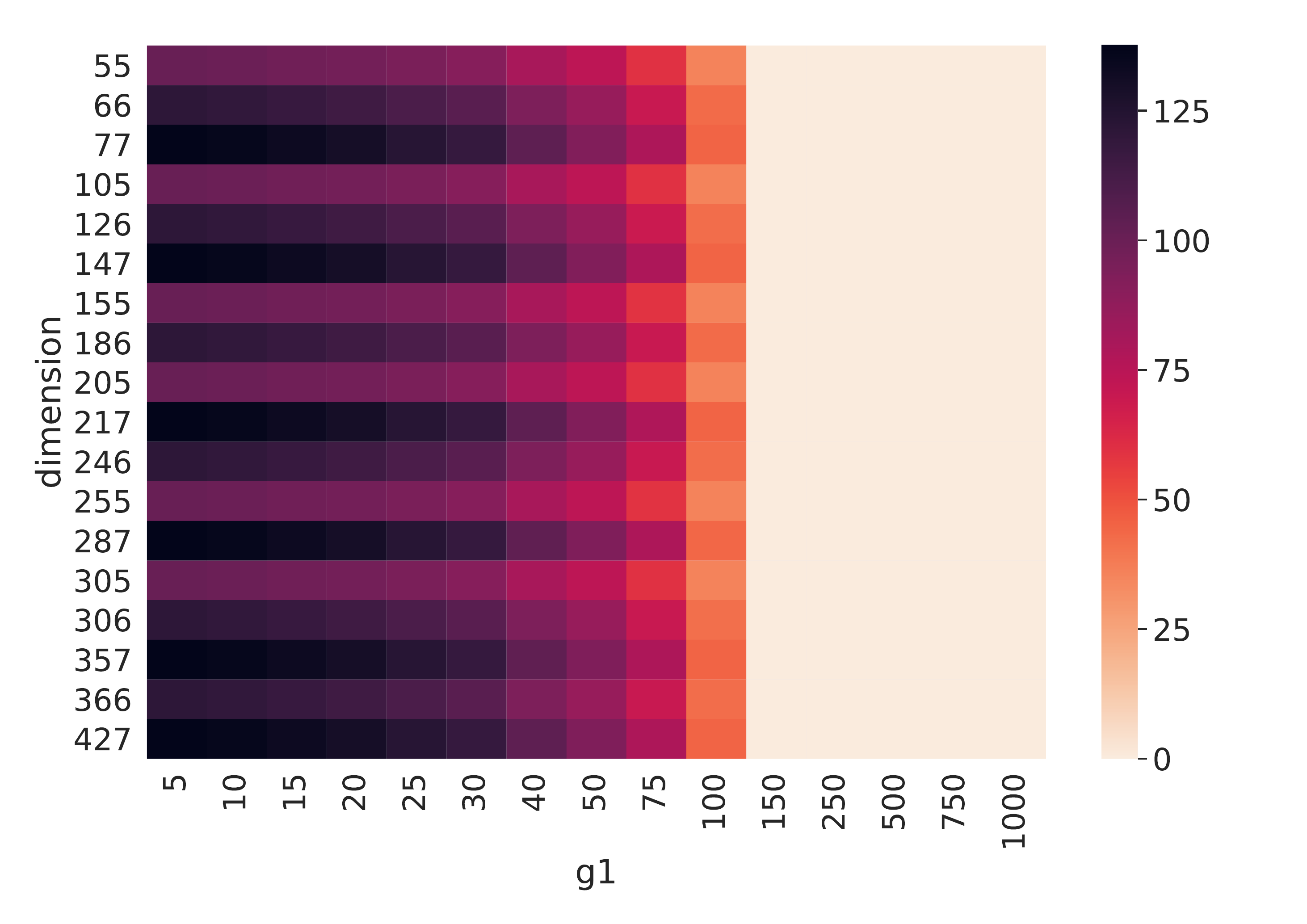}\\
\includegraphics[width=0.49\textwidth]{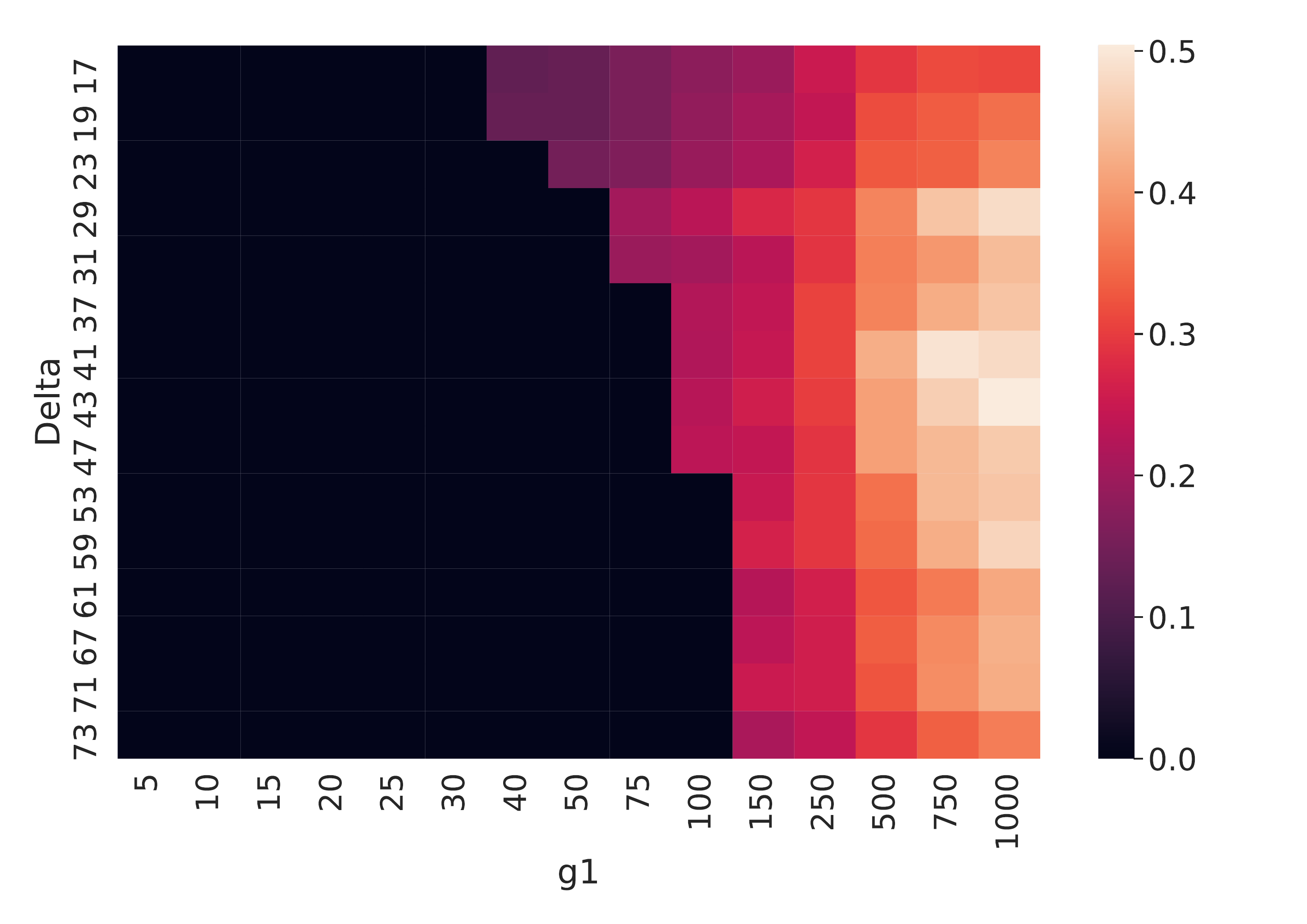}
\includegraphics[width=0.49\textwidth]{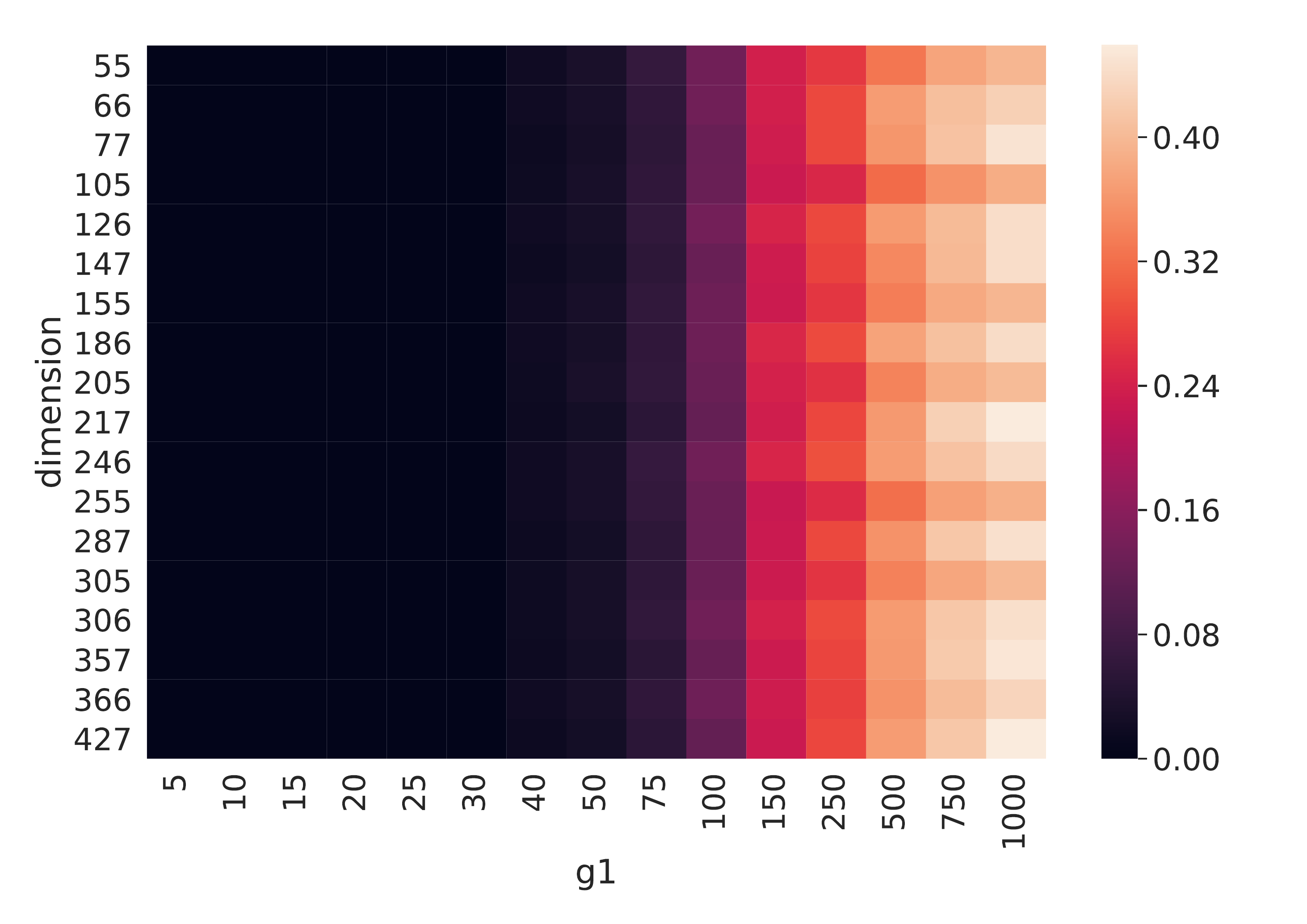}
  \caption{$Q||C_{\max}$, heatmaps, columns are values of $\gc$, and, left-to-right, rows are $\Delta$, dimension, $\Delta$, dimension, and cells are gap, gap, convergence, convergence., respectively.
  \label{fig:sched_hmaps}
  }
\end{figure}

\begin{figure}[!h]
\centering
\includegraphics[width=0.49\textwidth]{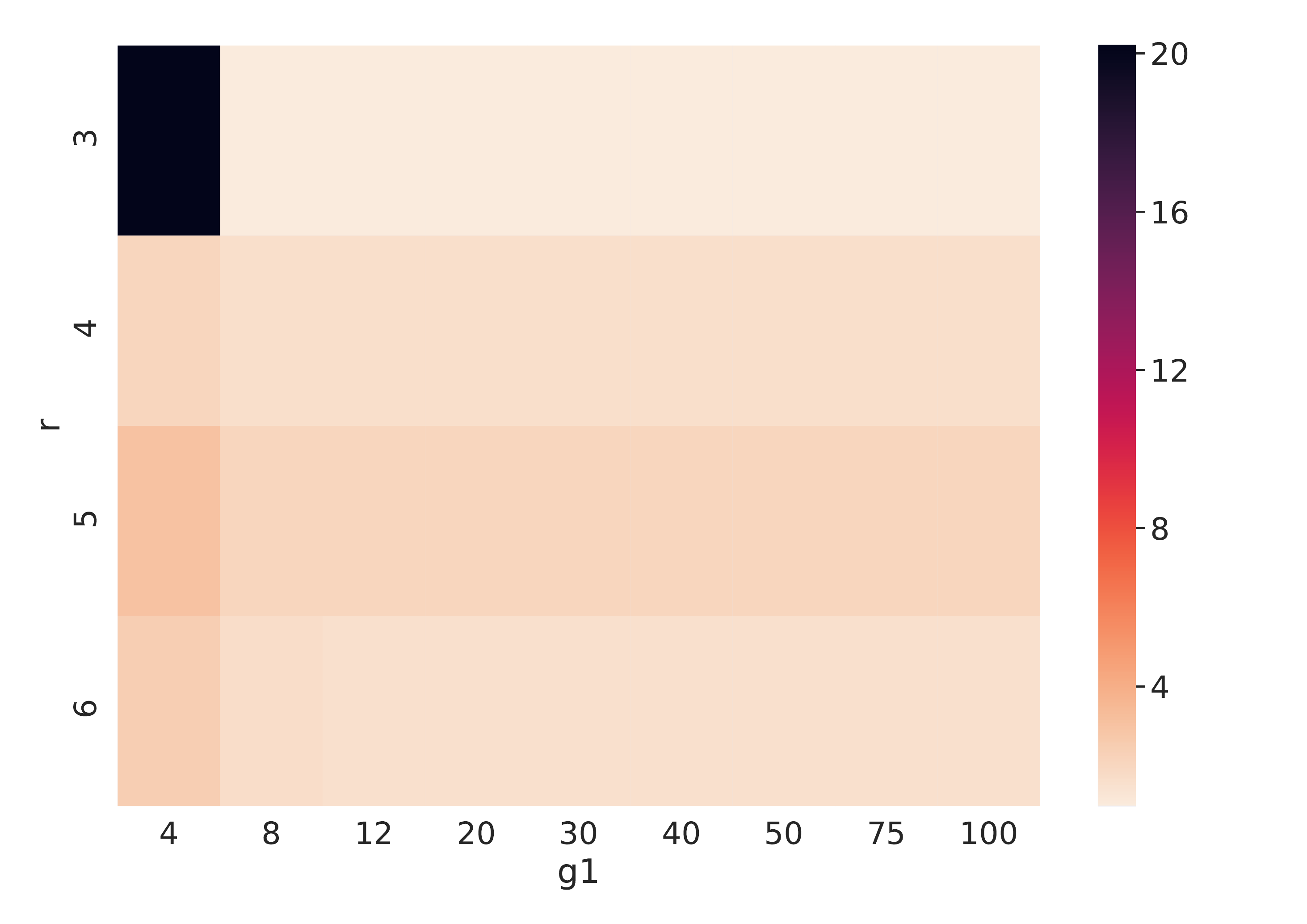}
\includegraphics[width=0.50\textwidth]{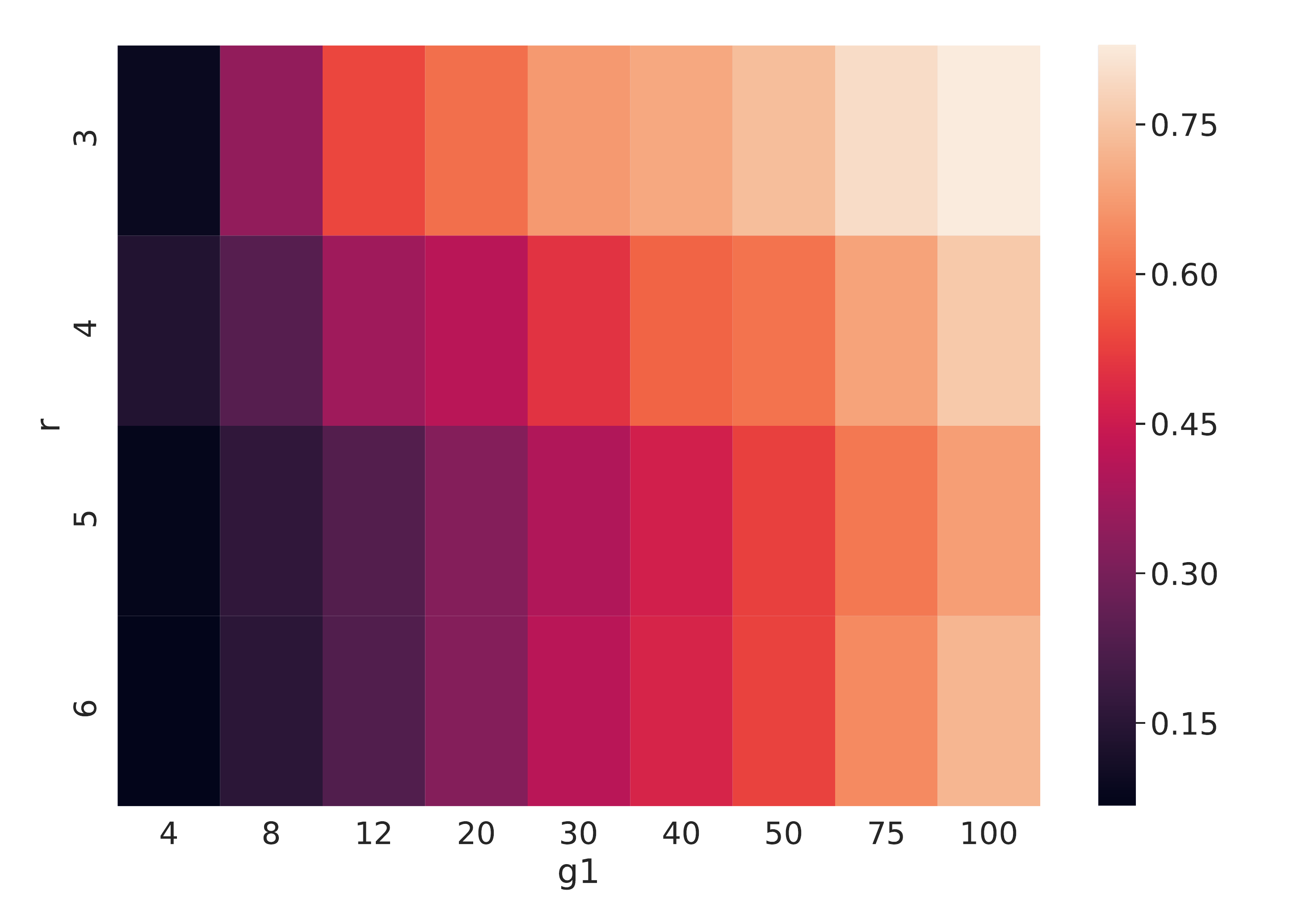}
  \caption{\textsc{Closest String}, heatmaps, columns are values of $\gc$, rows are values of $r$, cells are, left-to-right, gap and convergence.
  \label{fig:cs_hmaps}
  }
\end{figure}


%

\subsubsection*{Conclusions}
Our results, now measured across many instances, confirm our previous hypotheses: increasing values of $\gc$ lead to decreasing optimality gaps and improving convergence rates.
Moreover, the effect seems to correlate more with $\Delta$ than the dimension $Nt$.
To see this observe in particular Figure~\ref{fig:sched_hmaps} whose rows and columns look similar, indicating relatively small correlation with the dimension as compared with $\Delta$.

This corresponds to the theoretical observation that the ``true'' value of $g_1(A)$ is independent of $t$ and $N$ but depending on $\Delta$ and $r$.
(Note that for $Q||C_{\max}$ we have $t=r$ so the parameter $t$ is expected to have an effect, however, our tested values cover possibly a too narrow range.)

\subsection{Towards Practical Applications}
So far we have been interested in parameters ``internal'' to the implemented algorithm.
In particular, we have disregarded actual \emph{time} taken by the computation and any analysis of potential bottlenecks of the algorithm.
The relevant time parameters which we study now are the following:
\begin{itemize}
\item total time needed to run Algorithm~\ref{alg:New}, denoted \texttt{total},
\item time required to initialize the MILP model of~\eqref{AugIP}, denoted~\texttt{augip init}
\item time consumed by solving~\eqref{AugIP} excluding initialization, denoted \texttt{augip total}, and,
\item time required to construct the MILP model of the whole instance~\eqref{IP} and solve it using Gurobi, denoted \texttt{gurobi construct \& solve}.
\end{itemize}

Our initial observation during preliminary experiments was that the total required time grows significantly with increasing dimension.
Thus our goal was to determine potential instance parameters such as dimension or $\Delta$ which make the instance hard for Gurobi, with the hope that for such instances a good implementation of a parameterized $N$-fold IP algorithm would outperform Gurobi.
However, a closer examination has revealed that the observed growth is caused by increasing time taken by the model construction phase (\texttt{aug init} and the ``construct'' part of \texttt{gurobi construct \& solve}).

\subsubsection*{Plots}
We present our findings in two types of plots.
The first one (Figures~\ref{fig:sched_time_lineplots} and~\ref{fig:cs_time_lineplots}) is a line plot whose $y$ axis is time and $x$ axis is one of dimensions, $r$, and $\Delta$ (only for $Q||C_{\max}$), with individual lines corresponding to the different time parameters above.
Semi-transparent bands around lines correspond to 95\% confidence intervals.

\begin{figure}[!h]
\centering
\includegraphics[width=0.32\textwidth]{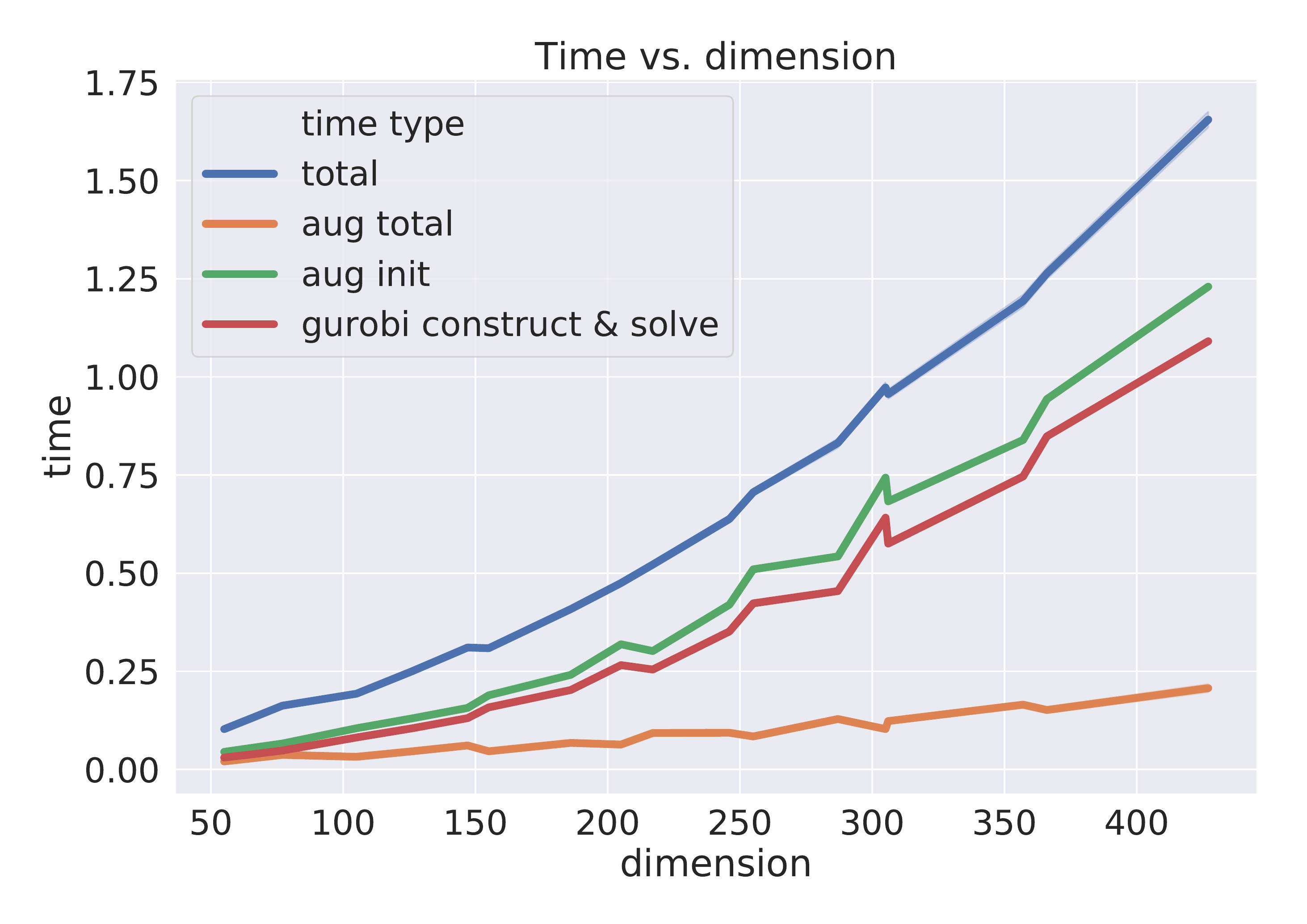}
\includegraphics[width=0.32\textwidth]{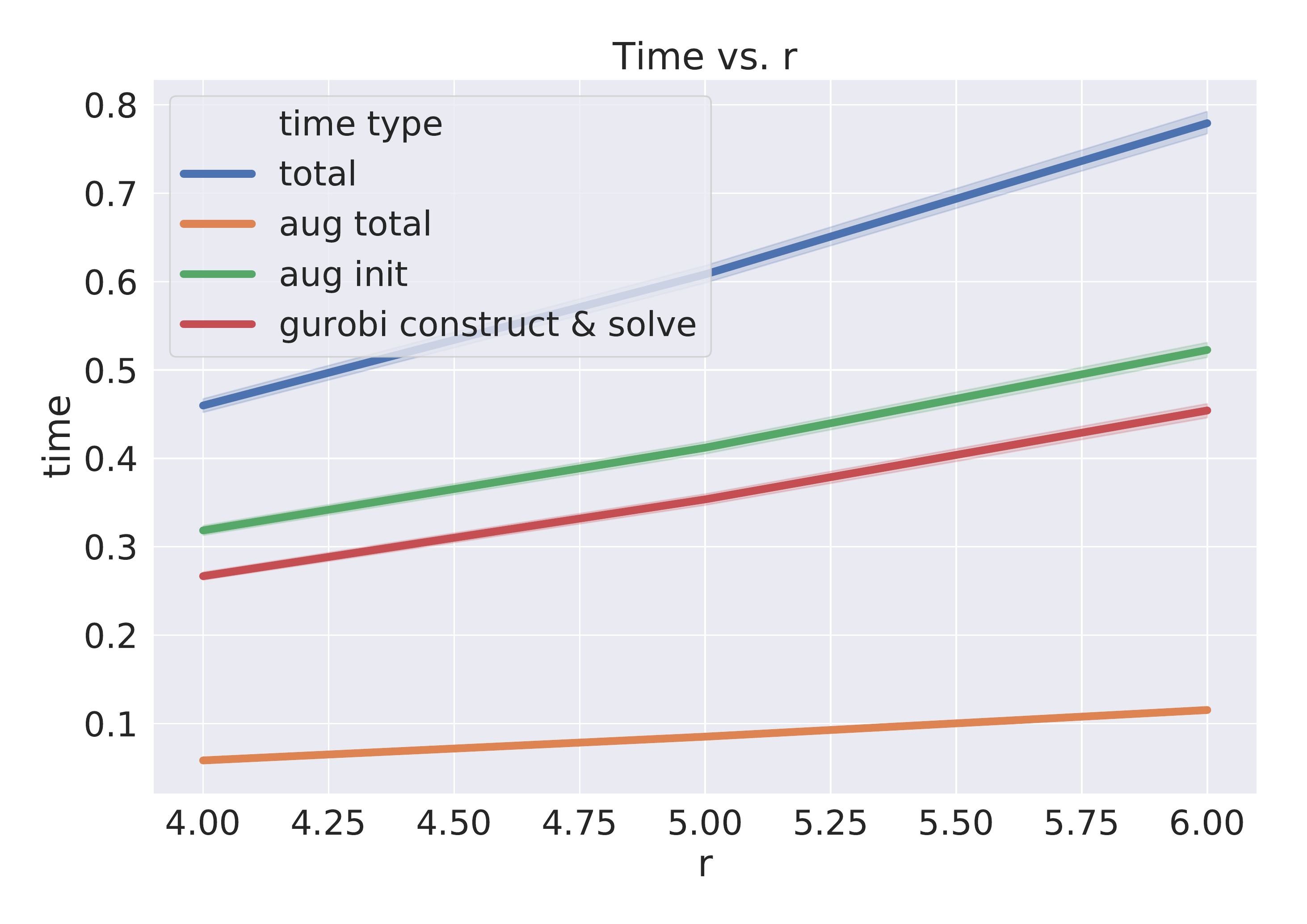}
\includegraphics[width=0.32\textwidth]{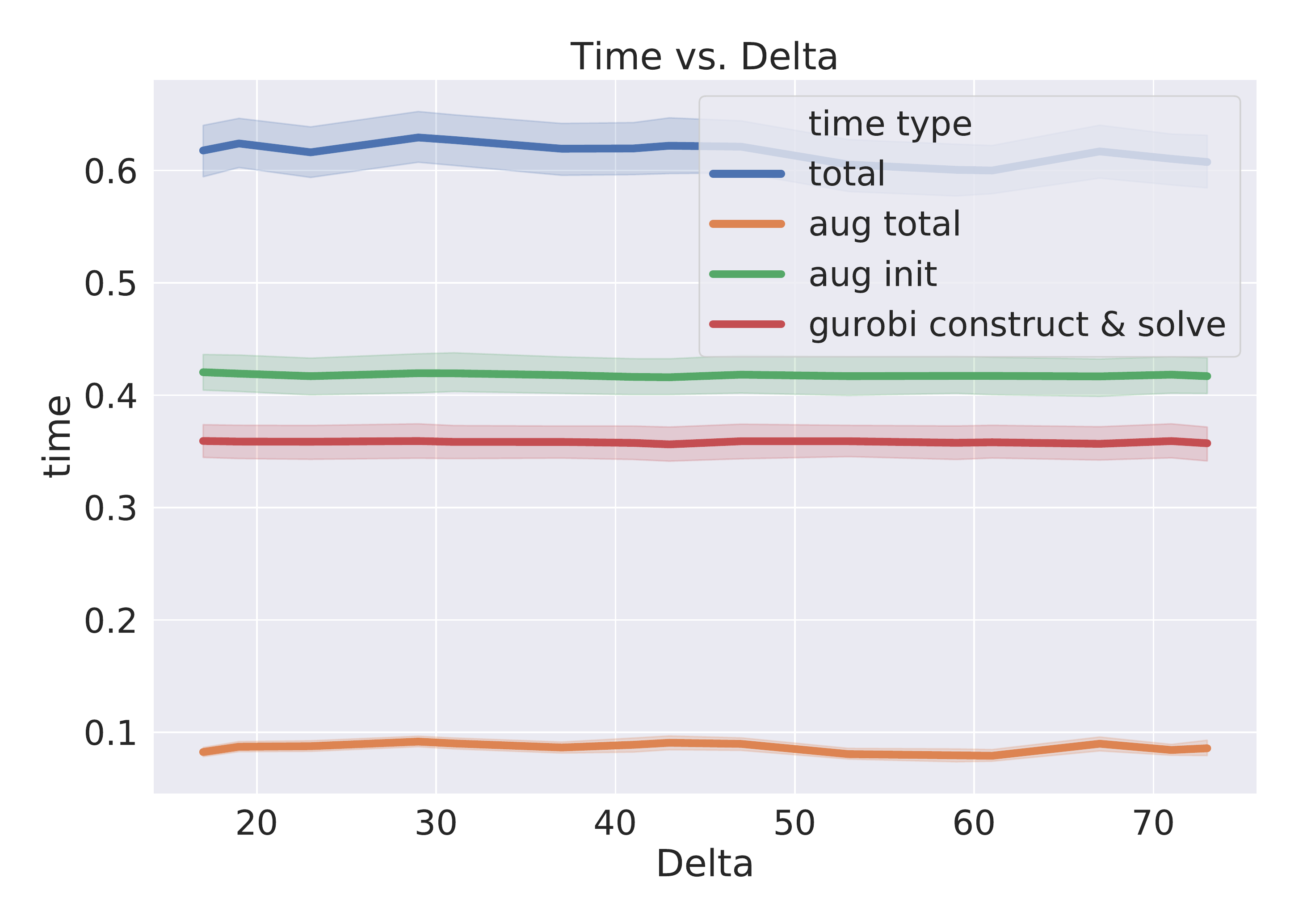}
  \caption{$Q||C_{\max}$, line plots, $y$ axis is time, $x$ axis is, left-to-right, dimension, $r$, and $\Delta$, respectively.
  \label{fig:sched_time_lineplots}
  }
\end{figure}

\begin{figure}[!h]
\centering
\includegraphics[width=0.49\textwidth]{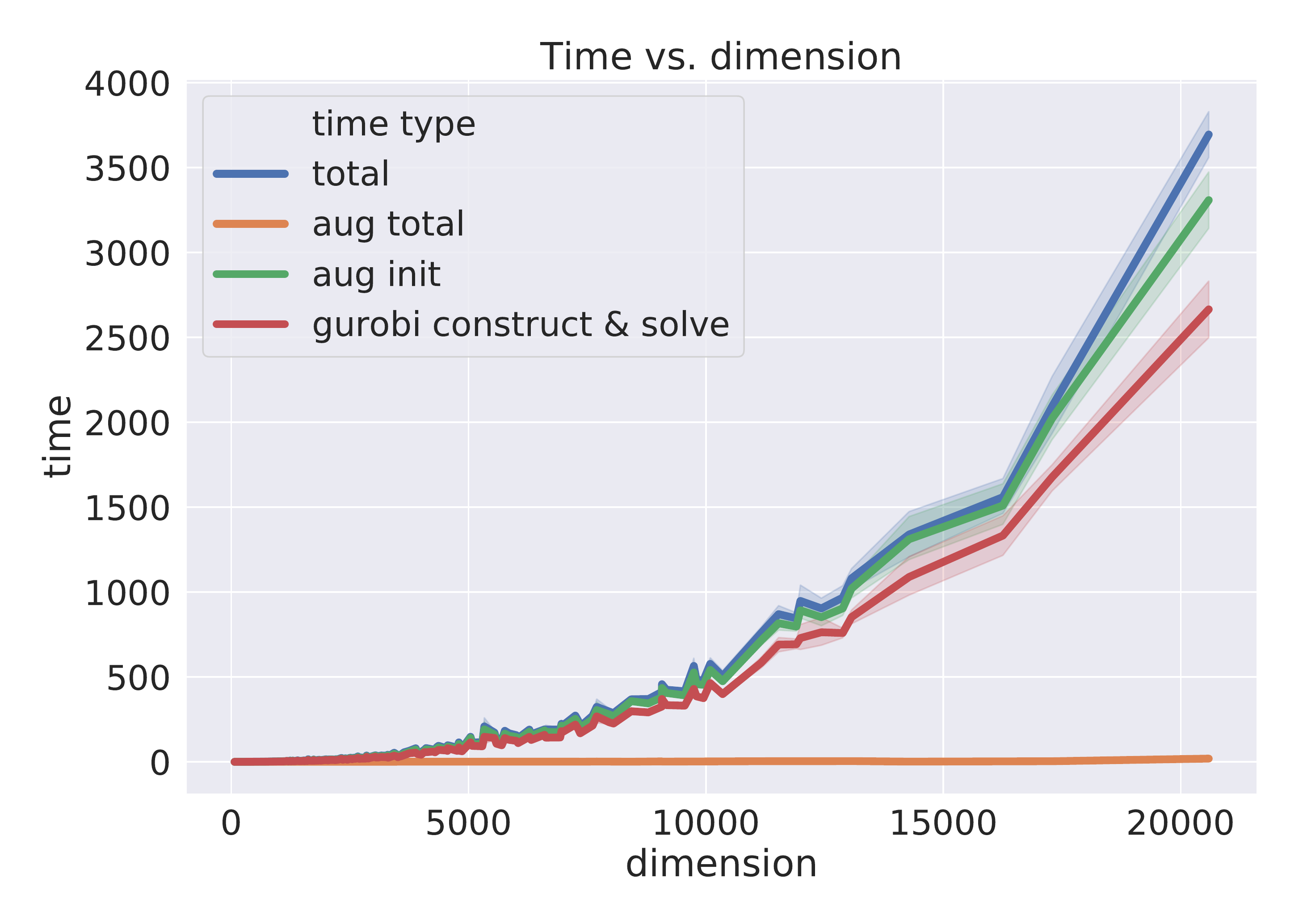}
\includegraphics[width=0.50\textwidth]{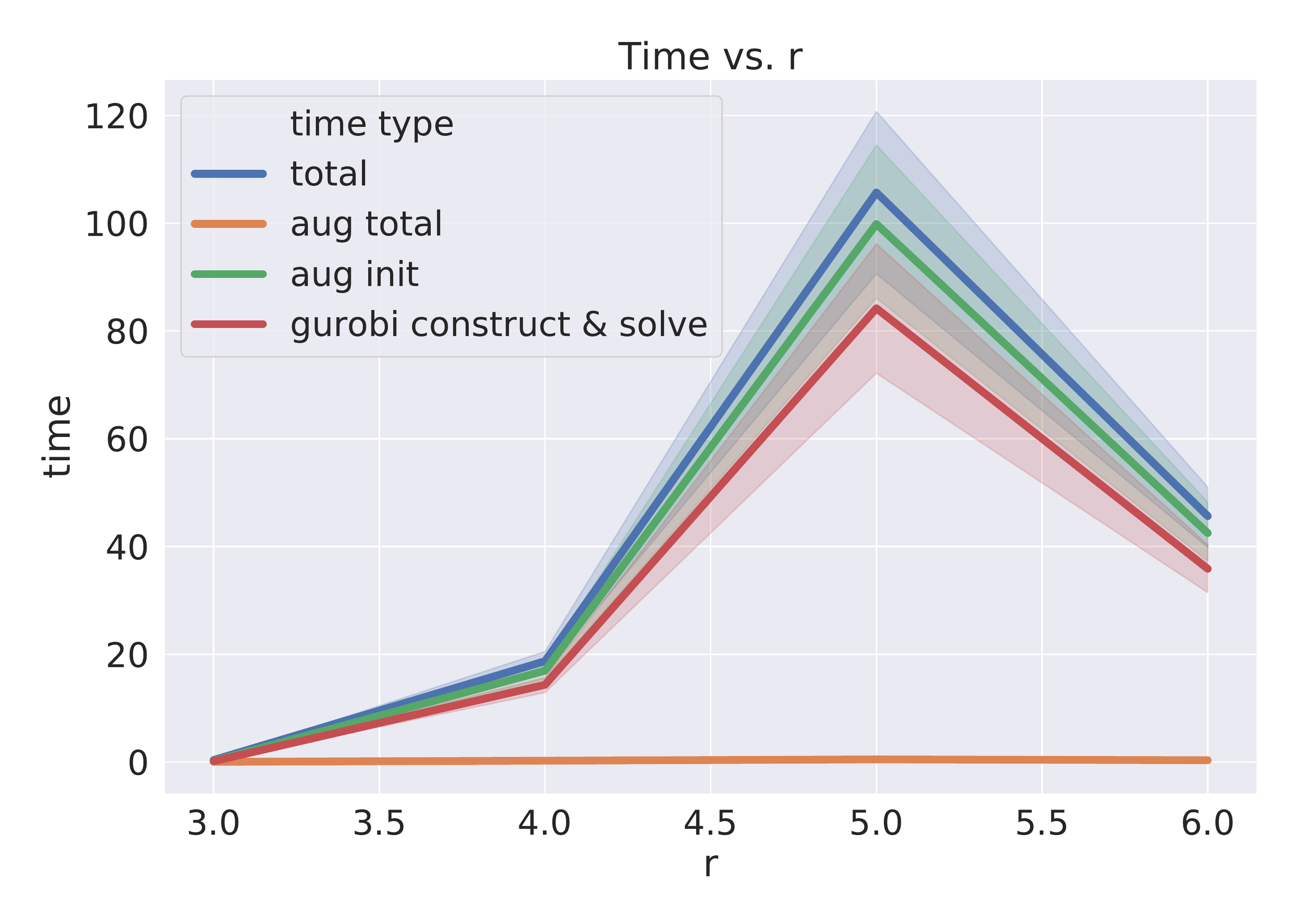}
  \caption{\textsc{Closest String}, line plots, $y$ axis is time, $x$ axis is, left-to-right, dimension and $r$.
  \label{fig:cs_time_lineplots}
  }
\end{figure}

The second type (Figure~\ref{fig:sched_time_hmaps}) constructed only for $Q||C_{\max}$ shows the individual time parameters with respect to dimension and $\Delta$, in the form of heatmaps.

\begin{figure}[!h]
\centering
\includegraphics[width=0.49\textwidth]{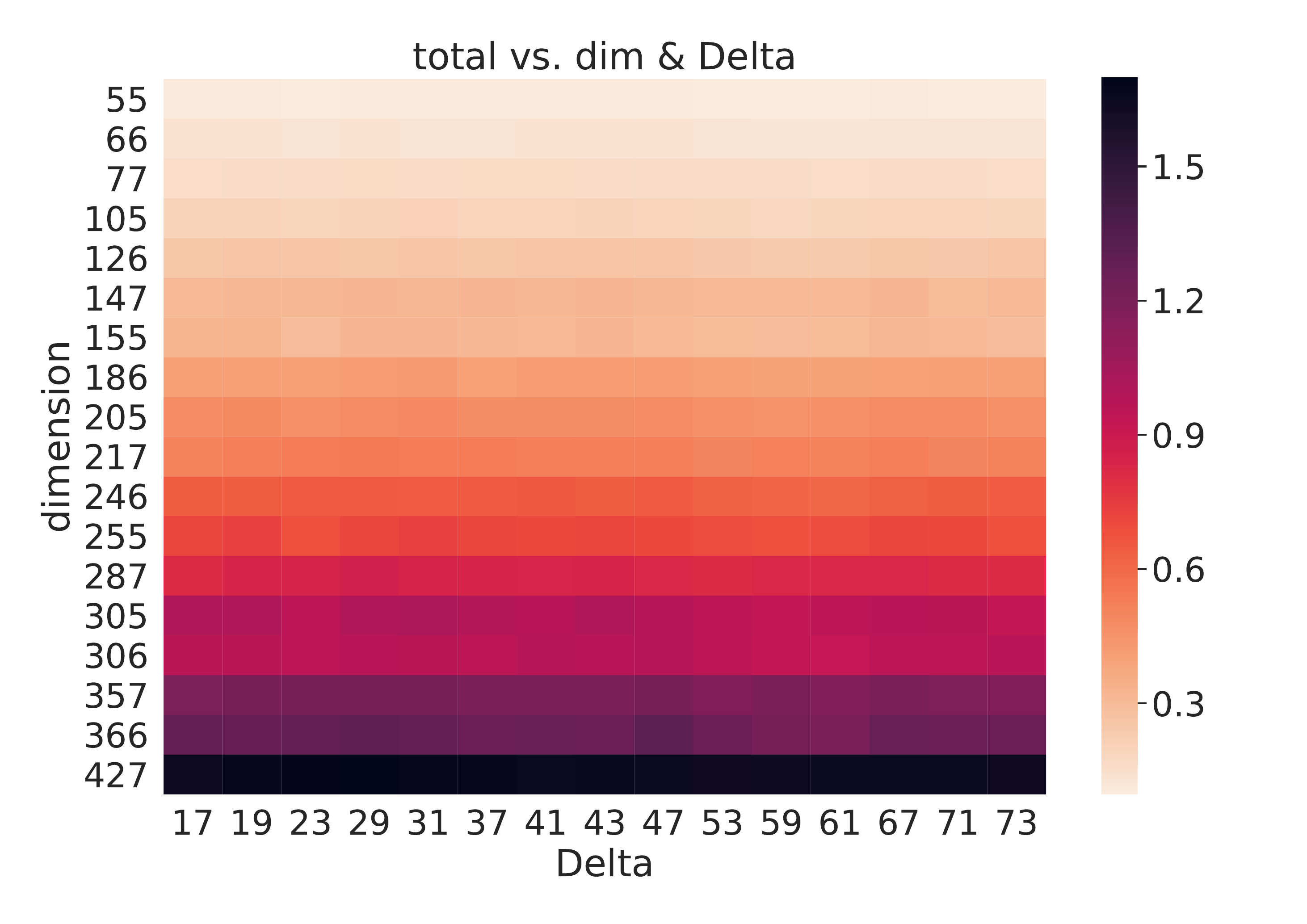}
\includegraphics[width=0.49\textwidth]{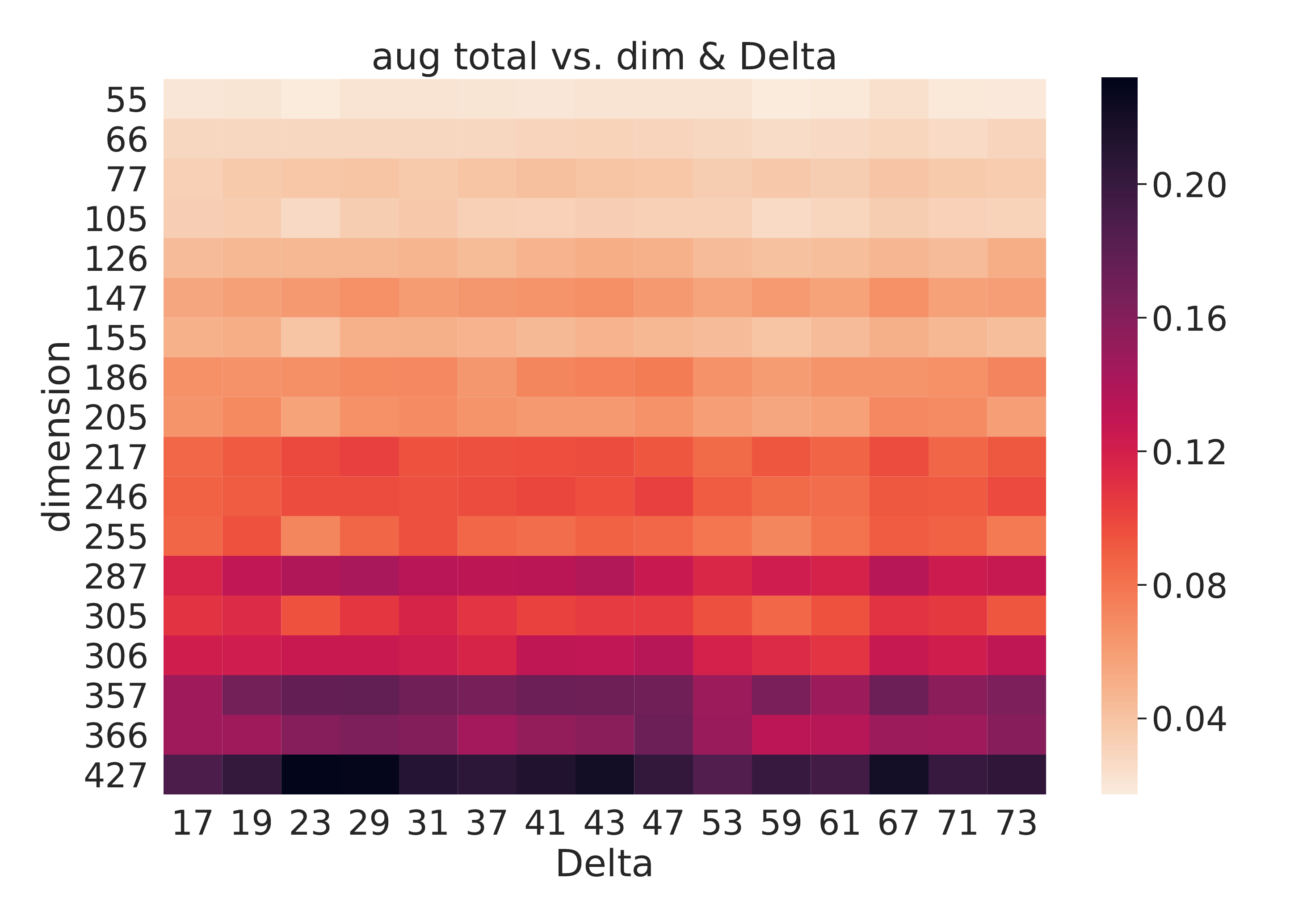}\\
\includegraphics[width=0.49\textwidth]{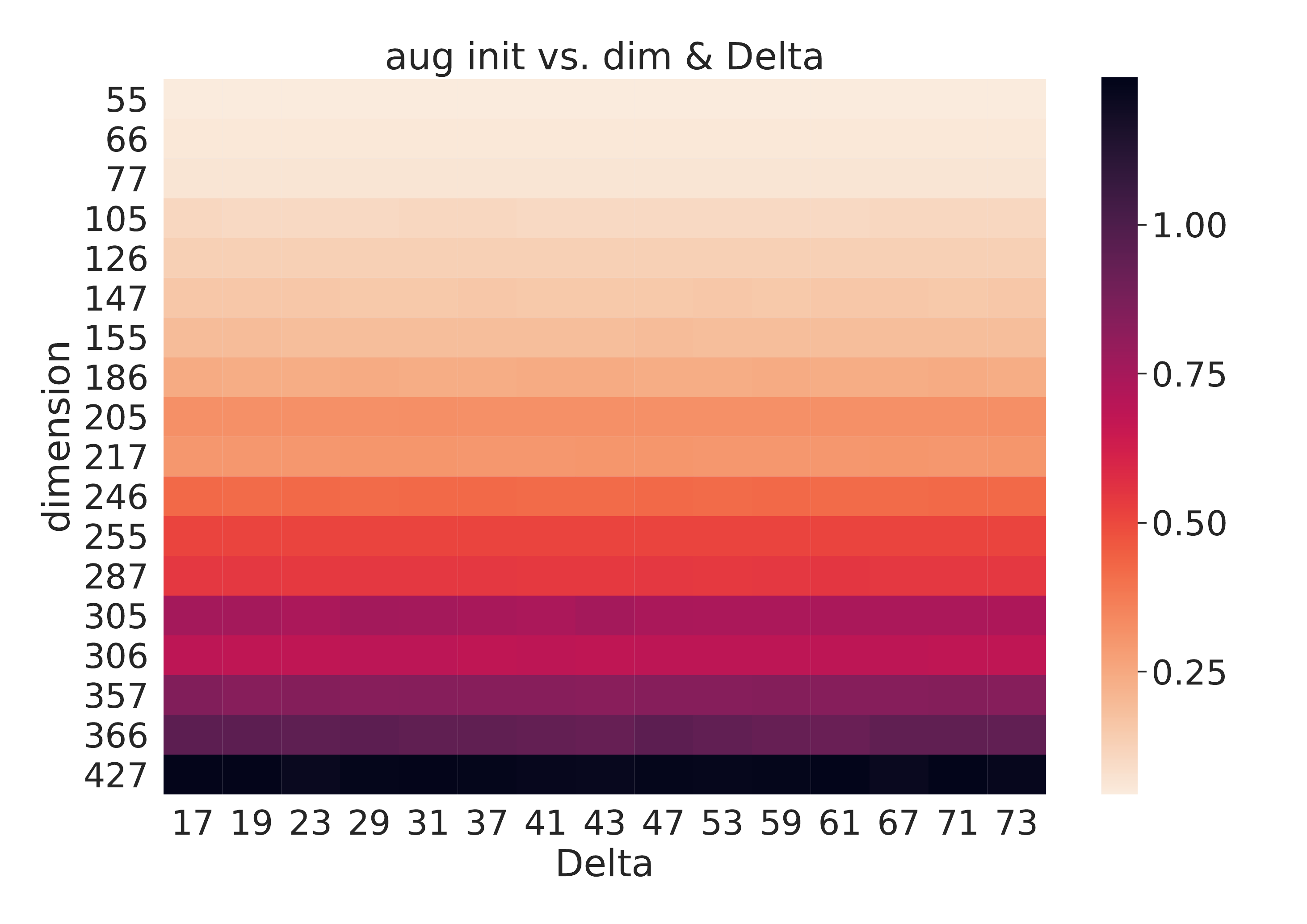}
\includegraphics[width=0.49\textwidth]{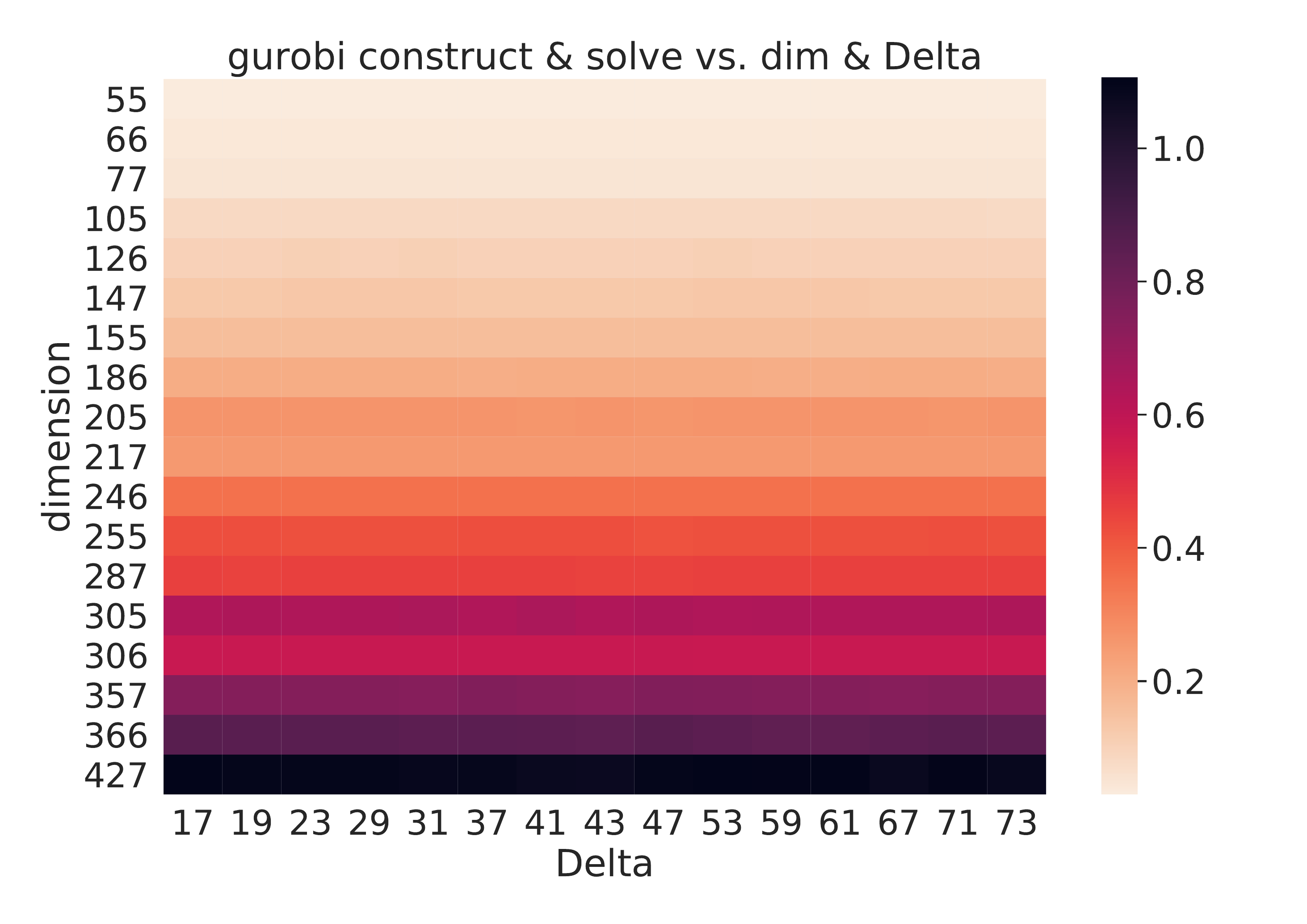}
  \caption{$Q||C_{\max}$, heatmaps, columns are $\Delta$, rows are dimension, cells are time parameters, left-to-right, \texttt{total}, \texttt{aug total}, \texttt{aug init}, and \texttt{gurobi construct \& solve}.
  \label{fig:sched_time_hmaps}
  }
\end{figure}

\subsubsection*{Conclusions}
Unfortunately, we conclude that, at least in the case of our formulations of \textsc{Closest String} and $Q||C_{\max}$, neither increasing $\Delta$ nor dimension create an obstacle for Gurobi itself.
Instead, the bottleneck lies in the overhead of SageMath and Python data structures.

\section{Outlook}
We have initiated an experimental investigation of a certain subclass of ILP with a block structured constraint matrix.
Our results show that, as theory suggests, for such ILPs a primal algorithm always augmenting with steps of small $\ell_1$ norm converges quickly.
We close with a few interesting research directions.

First, in theory, the special structure of~\eqref{AugIP} (in particular, an $\ell_1$-norm bound on its solution) as compared with~\eqref{IP} means that~\eqref{AugIP} can be solved faster than~\eqref{IP}.
However, in practice, this seems to have little to no effect.
Thus we ask: is there a way to tune generic MILP solvers to solve~\eqref{AugIP} significantly faster than~\eqref{IP}?

Second, what is the behavior of our algorithm on instances \emph{other} than $N$-fold IP?
For example, how large does $\gc$ have to be in order to attain the optimum quickly for standard benchmark instances, e.g. MIPLIB~\cite{miplib}?

Third, the approach of Koutecký et al.~\cite{PSP} suggests that a key property for the efficient solvability of~\eqref{AugIP} is a certain ``sparsity'' and ``shallowness'' (formally captured by the graph parameter \emph{tree-depth}) of graphs related to the constraint matrix.
Thus we ask what are ``natural'' instances with small tree-depth, and what is ``typical'' tree-depth of instances used in practice.

\bibliography{experiments}

\end{document}